\documentclass[12pt]{article}

\usepackage{xr}
\makeatletter
\newcommand*{\addFileDependency}[1]{
  \typeout{(#1)}
  \@addtofilelist{#1}
  \IfFileExists{#1}{}{\typeout{No file #1.}}
}
\makeatother

\usepackage{mathrsfs} %

\newcommand{\blind}{1}

\usepackage{appendix}
\usepackage{stackengine}

\usepackage[utf8]{inputenc}
\usepackage{amsmath}
\usepackage{bm}
\usepackage[figuresright]{rotating}
\usepackage{amsthm}
\usepackage{comment}
\usepackage{amssymb}
\usepackage{wasysym}
\usepackage{enumerate}
\usepackage{epsfig}
\usepackage{graphics}
\usepackage[letterpaper]{geometry}
\usepackage{algpseudocode}
\usepackage{algorithm}
\usepackage{color,xcolor}
\usepackage{authblk}
\usepackage{mathrsfs}
\usepackage{enumitem}
\usepackage[authoryear]{natbib}
\usepackage{multirow}
\usepackage{makecell} 
\usepackage{booktabs} 
\usepackage{setspace}
\newcommand{\RNum}[1]{\uppercase\expandafter{\romannumeral #1\relax}}
\newcommand{\rNum}[1]{\lowercase\expandafter{\romannumeral #1\relax}}

\def\btheta{\boldsymbol{\theta}}
\def\bPhi{\boldsymbol{\Phi}}
\def\bOmega{\boldsymbol{\Omega}}

\def\Kcal{\mathcal{K}}
\def\Mcal{\mathcal{M}}
\def\Pcal{\mathcal{P}}

\def\1f{\boldsymbol{1}}
\def\wh{\widehat}

\def\Df{\mathbf{D}}
\def\Kf{\mathbf{K}}
\def\Wf{\mathbf{W}}
\def\Qf{\mathbf{Q}}
\def\Rf{\mathbf{R}}
\def\Uf{\mathbf{U}}
\def\Vf{\mathbf{V}}
\def\Yf{\mathbf{Y}}
\def\Zf{\mathbf{Z}}

\def\wf{\mathbf{w}}
\def\rf{\mathbf{r}}

\def\zf{\mathbf{z}}

\def\Rb{\mathbb{R}}

\usepackage{subfigure}
\usepackage{caption}

\usepackage{tikz}
\usetikzlibrary{shapes,arrows}
\tikzstyle{startstop} = [rectangle,rounded corners, minimum width=3cm, text width = 3cm,minimum height=1cm,text centered, draw=black]
\tikzstyle{io} = [trapezium, trapezium left angle = 70,trapezium right angle=110,minimum width=4cm,minimum height=1cm, text width = 3.5cm,text centered,draw=black]
\tikzstyle{process} = [rectangle,minimum width=6cm, minimum height=1cm,text centered,text width =5cm,draw=black]
\tikzstyle{decision} = [diamond,aspect = 3,text centered,draw=black]
\tikzstyle{arrow} = [thick,->,>=stealth]
\tikzstyle{arrowdash} = [dashed,->,>=stealth]
\tikzstyle{arrowsum} = [dotted, line width=1.5pt,->,>=stealth]
\tikzstyle{arrowbig} = [thick, line width=2pt,->,>=stealth]

\tikzstyle{item} = [rectangle,rounded corners, minimum width=1.25cm, text width = 1.25cm, minimum height=1cm,text centered, draw=black]
\tikzstyle{itemlong} = [rectangle,rounded corners, minimum width=5.75cm, text width = 5.75cm, minimum height=1cm,text centered, draw=black]

\newtheorem{remark}{Remark}
\newtheorem{prop}{Proposition}
\newtheorem{thm}{Theorem}
\newtheorem{defn}{Definition}

\def\ord{\mathsf{ord}}
\def\add{\mathsf{add}}
\def\multi{\mathsf{multi}}
\def\BIC{\mathsf{BIC}}
\def\Cov{\mathrm{Cov}}
\def\Exp{\mathsf{Exp}}
\def\EEzGP{\mathsf{EEzGP}}
\def\EzGP{\mathsf{EzGP}}
\def\ctGP{\mathsf{ctGP}}
\def\Gau{\mathsf{Gau}}
\def\Linear{\mathsf{Linear}}
\def\Corr{\mathrm{Corr}}
\def\LOOCV{\mathrm{LOOCV}}
\def\LOOCVL{\mathsf{LOOCV}_{\mathrm{l}_2}}
\def\LOOCVlike{\mathsf{LOOCV}_{\mathrm{log-lik}}}

\def\MSel{\mathsf{MSel}}
\def\MAvr{\mathsf{MAvr}}
\def\RRMSE{\mathrm{RRMSE}}

\newcounter{subassumption}[assumption]

\makeatletter
\renewcommand{\p@subassumption}{\theassumption}
\makeatother

\newcounter{sublemma}[lemma]
\renewcommand{\thesublemma}{(\textit{\alph{sublemma}})}
\makeatletter
\renewcommand{\p@sublemma}{\thelemma}
\makeatother

\newcounter{subsublemma}[sublemma]

\makeatletter
\renewcommand{\p@subsublemma}{\thelemma\thesublemma}
\makeatother

\usepackage[CJKbookmarks=true,
bookmarksnumbered=true,
bookmarksopen=true,
colorlinks=true,
citecolor=blue,
linkcolor=blue,
anchorcolor=blue,
urlcolor=blue]{hyperref}
\usepackage{enumitem}

\begin{document}

\def\spacingset#1{\renewcommand{\baselinestretch}%
{#1}\small\normalsize} \spacingset{1}

\if1\blind
{
  \title{\bf A General Framework For Modeling Gaussian Process with Qualitative and Quantitative Factors\thanks{This version is accepted for publication in \emph{Technometrics}.}}
  \author{Linsui Deng and C. F. Jeff Wu$^{ \#}$\\
    School of Data Science, The Chinese University of Hong Kong, Shenzhen, China
}
\def\thefootnote{\#}\footnotetext{Corresponding author. Email: jeffwu@cuhk.edu.cn}
\def\thefootnote{\arabic{footnote}}
  \maketitle
} \fi

\if0\blind
{
  \bigskip
  \bigskip
  \bigskip
  \begin{center}
    {\LARGE\bf A General Framework For Modeling Gaussian Process with Qualitative and Quantitative Factors}
\end{center}
  \medskip
} 
\fi

\small 
\begin{abstract}
\end{abstract}
\noindent%
Computer experiments involving both qualitative and quantitative (QQ) factors have attracted increasing attention. Gaussian process (GP) models have proven effective in this context by choosing specialized covariance functions for QQ factors. In this work, we extend the latent variable-based GP approach, which maps qualitative factors into a continuous latent space, by establishing a general framework to apply standard kernel functions to continuous latent variables. This approach provides a novel perspective for interpreting some existing GP models for QQ factors and introduces new covariance structures in some situations. The ordinal structure can be incorporated naturally and seamlessly in this framework. Furthermore, the Bayesian information criterion and leave-one-out cross-validation are employed for model selection and model averaging. The performance of the proposed method is comprehensively studied on several examples.\\

\noindent{\it Keywords:} Computer experiment; Ordinal variables; Uncertainty quantification; Latent variable; Bayesian information criterion; Leave-one-out cross validation
\vfill
\newpage

\spacingset{1.8} 
\section{Introduction}\label{sec:intro}

Computer experiments have attracted increasing attention in science, engineering, and business due to their ability to model complex systems. However, the high computational cost of running these simulations often necessitates the use of surrogate models or emulators. Among these, Gaussian process (GP) modeling has emerged as a powerful approach because it can approximate the behavior of simulations accurately and efficiently \citep{santner2003design}. Recent developments have extended GP modeling to support a variety of input types, such as probability distributions \citep{bachoc2017gaussian} and functions \citep{li2022gaussian}. 

In many applications, the inputs of a computer experiment involve both quantitative and qualitative factors, commonly referred to as QQ inputs. For instance, in embankment system design, the inputs include one quantitative variable (shoulder distance from the centerline) and three qualitative variables (construction rate, Young’s modulus of columns, and reinforcement stiffness) \citep{LIU2015567,Deng:2017wf}. Similarly, modeling the thermal dynamics of a data center requires consideration of qualitative factors such as diffuser location, return air vent location, and rack heat load nonuniformity, along with quantitative factors like rack temperature rise, rack heat load, and total diffuser flow rate \citep{schmidt2005challenges,Qian:2008ts}. These examples show the importance of developing GP models that work well with QQ inputs. 

An essential step of GP modeling is the construction of the covariance function. In recent years, a number of covariance structures have been proposed and investigated to improve prediction accuracy for handling computer experiments with QQ inputs \citep{Qian:2008ts,Zhou:2011ty,Deng:2017wf,Zhang:2020gp,Roustant2020,Garrido2020,Tao2021vg,xiao2021ezgp,lin2024category}, as reviewed in Section~\ref{sec:existing}. Nevertheless, there still lacks a general framework that ties these approaches together.

In this paper, we propose a general latent-variable-based framework for GP modeling with QQ inputs, built upon the latent variable approach proposed by \cite{Zhang:2020gp}. We show that this framework can include many existing covariance structures and allows a systematic development of new ones. This is achieved by applying different kernel functions to the latent variables, such as Gaussian, exponential, and linear kernels. We study identifiability conditions for the latent parameterization for these kernel settings. We further demonstrate how ordinal information can be integrated by imposing constraints on the latent variables. Given a variety of available kernel choices, we employ both leave-one-out cross-validation and the Bayesian information criterion (BIC) for model selection. Finally, we introduce a BIC-based model averaging strategy to robustly combine predictions from models employing different kernels.

The remainder of this paper is structured as follows. Section~\ref{sec:method}  introduces the general framework, establishes its connection to existing approaches, and provides the identifiability condition for the latent parameterization. Section~\ref{sec:ordinal_var} describes how to incorporate ordinal information within the framework. Estimation and prediction procedures are presented in Section~\ref{sec:est_pre}. In Section~\ref{sec:model_s_a}, we describe the model selection and model averaging strategies. Section~\ref{sec:num_comp} provides comprehensive numerical comparisons. Section~\ref{sec:summary}~ summarizes the findings and makes further discussions.

\section{Methodology}\label{sec:method}
\subsection{Framework}
Consider a problem in which the response $Y(\Uf,\Vf)$ has two types of inputs: $\Uf=(u_1,\cdots,u_I)^\top$ are quantitative factors and $\Vf=(v_1,\cdots,v_J)^\top$ are qualitative factors, with each $v_j$ possessing $a_j$ levels, i.e., $v_j\in\{1,2,\cdots, a_j\}$. We model the response using a GP, which is expressed as
$$
Y(\Uf,\Vf) = \mu+G(\Uf,\Vf),
$$
where $\mu$ represents the constant mean term and $G(\Uf,\Vf)$ is a zero-mean GP. The primary goal is to model the covariance between responses $Y(\Uf,\Vf)$ and $Y(\Uf^\prime,\Vf^\prime)$ corresponding to two distinct inputs $(\Uf,\Vf)$ and $(\Uf^\prime,\Vf^\prime)$. By utilizing the covariance kernel, we can make predictions for new inputs \citep{santner2003design}.

Existing approaches focus on choosing or proposing covariance structures that have some of the attributes: intuitive, interpretable, or computationally efficient \citep{mcmillan1999analysis, Qian:2008ts, Zhou:2011ty, Deng:2017wf}. Motivated by the idea that qualitative variables can be represented by some underlying numerical values, \cite{Zhang:2020gp} proposed that the $j$-th qualitative factor $v_j$ corresponds to a latent vector $\bm{z}^{(j)}_{v_j} \in \mathbb{R}^{l_j}$, where $1 \leq l_j \leq a_j$. Following this formulation, we define the concatenated latent vector $\Zf_\Vf\in\Rb^{\sum_{j=1}^Jl_j}$ as $\Zf_\Vf^\top = \big((\zf^{(1)}_{v_1})^\top,(\zf^{(2)}_{v_2})^\top,\cdots,(\zf^{(J)}_{v_J})^\top\big)$. 
Using this framework, we can state that the response $Y(\Uf, \Vf)$ for input $(\Uf, \Vf)$ follows the same distribution as the response $Y$ for input $(\Uf, \Zf_\Vf)$, i.e., 
$$
Y(\Uf, \Vf) \stackrel{d}{=} Y(\Uf, \Zf_\Vf).
$$
Given the GP assumption, this distributional equivalence holds if and only if their covariance functions are identical. Therefore, we propose to model the covariance function of $Y(\Uf,\Vf)$, the original process of interest, using that of the continuous input $(\Uf,\Zf_\Vf)$ as follows
\begin{equation}\label{equ:covW}
\begin{aligned}
\Cov\left\{Y(\Uf,\Vf),Y(\Uf^\prime,\Vf^\prime)\right\}
=&\Cov\left\{Y(\Uf,\Zf_\Vf),Y(\Uf^\prime,\Zf_\Vf^\prime)\right\}\\
=&\sigma^2\Corr\left\{Y(\Uf,\Zf_\Vf),Y(\Uf^\prime,\Zf_\Vf^\prime)\right\}\\
=&\sigma^2 K_\Uf(\Uf,\Uf^\prime)K_\Zf(\Zf_\Vf,\Zf_\Vf^\prime),
\end{aligned}
\end{equation}
where $\sigma^2$ denotes the variance, and $K_\Uf(\cdot, \cdot)$ and $K_\Zf(\cdot, \cdot)$ are respectively kernel functions for the quantitative factors and latent vectors associated with the qualitative factors. The last identity in \eqref{equ:covW} is based on the assumption that the effects of $\Uf$ and $\Vf$ on $Y$ can be factorized. Both kernels satisfy the normalization condition $K_\Uf(\Uf,\Uf)=1$ and $K_\Zf(\Zf,\Zf)=1$ for all $\Uf\in \Rb^I$ and $\Zf\in\Rb^{\sum_{j=1}^Jl_j}$. This framework is flexible because it can accommodate various kernel functions to capture diverse patterns.

While the assumption of latent vectors may initially appear restrictive, we show that this modeling framework integrates numerous established approaches \citep{Qian:2008ts, Deng:2017wf, Zhang:2020gp, Tao2021vg} as special cases. Furthermore, its inherent generality offers potential for further methodological advancements and applications.

\subsection{Connection with existing approaches}\label{sec:existing}
In the following, we provide a detailed discussion to establish connections between the framework and some existing methods in the literature.

\textit{Multiplicative Linear Kernel.} 
By imposing a multiplicative structure among qualitative variables and adopting the linear kernel for continuous variables \citep{Rojo2018}, the correlation defined by the latent vectors is 
\begin{equation}\label{equ:linear_multi}
    K_\Zf(\Zf_\Vf,\Zf_{\Vf^\prime})
=\prod_{j=1}^J K_j\left(\zf^{(j)}_{v_j},\zf^{(j)}_{v^\prime_j}\right)
=\prod_{j=1}^J\left(\zf^{(j)}_{v_j}\right)^\top\zf^{(j)}_{v^\prime_j}.
\end{equation}

\begin{itemize}
    \item Case I ($l_j=a_j$). The covariance structure proposed by \cite{Qian:2008ts} is defined as 
\begin{equation}\label{equ:covQian}
\Cov\left\{Y(\Uf,\Vf),Y(\Uf^\prime,\Vf^\prime)\right\}=\sigma^2\left\{\prod_{j=1}^J\tau^{(j)}_{v_j,v_j^\prime}\right\}K_{\Uf}(\Uf,\Uf^\prime),
\end{equation}
where $\left(\tau^{(j)}_{v,v^\prime}\right)_{a_j\times a_j}$ are $J$ semi-positive definite matrices with unit diagonal elements (SPDUDE). Here, $\tau^{(j)}_{v,v^\prime}$ represents the correlation between levels $v$ and $v^\prime$ for the $j$th qualitative factor. By using the Cholesky decomposition \citep{Pinheiro:1996tk}, we can represent $\tau^{(j)}_{v_j,v_j^\prime}$ as the product of two column vectors, i.e., 
\begin{equation}\label{equ:Cholesky}
\tau^{(j)}_{v_j,v_j^\prime}=\left(\zf^{(j)}_{v_j}\right)^\top\zf_{v^\prime_j}^{(j)}.
\end{equation}
In this way, the qualitative part in \eqref{equ:covQian} can be equivalently expressed by \eqref{equ:linear_multi}. In addition, \cite{Zhou:2011ty} suggested using hyperspherical parameterization to simplify the computations. 
\item Case II ($l_j<a_j$). \cite{Roustant2020} and \cite{Tao2021vg} explored the case where the length of the latent vector, $l_j$, is shorter than the number of levels, $a_j$, to impose a low-rank structure. This approach significantly reduces the number of hyperparameters and hence alleviates the estimation burden. Building on this, \cite{Tao2021vg} further introduced hyperspherical expressions to address computational challenges. However, the issue of identifiability was not thoroughly discussed (see Section~\ref{sec:identify}).
\item Case III (restricted correlation matrix). To simplify the complexity of the correlation matrix, one may assume specific structures, such as equal correlation considered in \cite{Qian:2008ts}. In our framework, this assumption can be transformed into restrictions on the latent variables. Specifically, it corresponds to a special case where independent noise is permitted, and one-dimensional latent vectors with equal elements are assigned. 
\end{itemize}

\textit{Additive Linear Kernel.} \cite{Deng:2017wf} proposed to model covariance through imposing an additive structure for qualitative factors, which is then multiplied by the correlation attributed to quantitative factors. Our framework has a direct connection with theirs. Suppose $K_\Zf(\cdot,\cdot)$ is the kernel function modified from a first-order additive GP process for continuous variables \citep{Plate1999, Duvenaud2011}, whose individual components employ a linear kernel. To establish this connection, we construct latent vectors satisfying \eqref{equ:Cholesky}. The covariance in \eqref{equ:covW} becomes
\begin{equation}\label{equ:linear_add}
\begin{aligned}
\sigma^2\sum_{j=1}^J K_j\left(\zf^{(j)}_{v_j},\zf^{(j)}_{v_j^\prime}\right) K_{\Uf}(\Uf,\Uf^\prime) 
= &\sum_{j=1}^J \psi_j\sigma^2\left\{\left(\zf^{(j)}_{v_j}\right)^\top\zf^{(j)}_{v^\prime_j}\right\}K_{\Uf}(\Uf,\Uf^\prime)\\
= &\sum_{j=1}^J\sigma_j^2\tau^{(j)}_{v_j,v_j^\prime}K_{\Uf}(\Uf,\Uf^\prime), 
\end{aligned}
\end{equation} 
where $\psi_j$ denotes the weight satisfying $\sum_{j=1}^J\psi_j=1$ and $\sigma_j^2$ represents the variance associated with the $j$th qualitative factor. 
The covariance in equation~\eqref{equ:linear_add} is a special case of the covariance structure proposed by \citet{Deng:2017wf}, which has the form
\begin{equation}\label{equ:covDeng2017}
\sum_{j=1}^J\sigma_j^2\tau^{(j)}_{v_j,v_j^\prime}K_{\Uf,j}(\Uf,\Uf^\prime).    
\end{equation}
In their formulation, the kernel function $K_{\Uf, j}(\Uf, \Uf^\prime)$ varies across different quantitative factors $j$. In contrast, our approach uses a fixed kernel function for all quantitative factors.

\textit{Multiplicative Gaussian Kernel.} \cite{Zhang:2020gp} proposed and evaluated the kernel functions with a multiplicative structure and Gaussian kernel, which is defined as 
\begin{equation*}
K_\Zf(\Zf_\Vf,\Zf_{\Vf^\prime}) =\prod_{j=1}^J\exp\left\{-\left\|\zf_{v_j}^{(j)}- \zf_{v_j'}^{(j)}\right\|^2\right\}.\end{equation*}
Although recognizing the potential to enhance its generality using other kernels, such as power exponential, Matérn, and lifted Brownian kernels, they did not go further to develop details for various choices of kernels. To apply other kernels, comprehensive investigations and empirical validations are essential to ensure their applicability in real-world applications. For instance, the linear kernel has distinct identifiability conditions and requires additional constraints compared to the Gaussian kernel, as will be presented in Section~\ref{sec:identify}.

\textit{Pre-specified Latent Variable.} A popular approach for modeling GP with QQ factors is to encode qualitative variables into numerical vectors. For instance, \cite{Garrido2020} used one-hot encoding vectors of length $a_j$, where the $v$th element is $\boldsymbol{1}(v_j = v)$.  In another instance, \cite{luo2024hybrid} adopted similarity encoding \citep{cerda2018similarity}, which constructs feature vectors from pairwise similarities between qualitative variable levels. Once encoded, standard kernels for continuous variables can then be applied. Therefore, these two methods can be viewed as special cases of the general framework.

\textit{Non-separable Kernel.} As shown in~\eqref{equ:covDeng2017}, \cite{Deng:2017wf} suggested that the smoothness parameters of quantitative factors may vary across different qualitative factors. In addition, \cite{xiao2021ezgp} assumed that these parameters vary across different levels within the same qualitative factor. 
\cite{lin2024category} partitioned the input space into several non-overlapping regions via a qualitative-factor-based tree and fitted a separate GP for each region. These methods could, in principle, be accommodated in our framework by relaxing the factorization assumption in~\eqref{equ:covW} and capturing interactions between QQ factors using non-separable kernels, such as higher-order additive GP \citep{Duvenaud2011}  or tree GP \citep{gramacy2008bayesian}. While such extensions are feasible, the need for careful kernel specification and the substantially greater modeling complexity would likely limit their practical benefits.

To summarize, the relevant methodologies fall into three categories: (i) data‑driven latent variables, summarized in Table~\ref{tab:comparison}; (ii) prespecified latent variables, such as one‑hot and similarity encoding; and (iii) approaches that model interactions between QQ factors. The first two categories can be readily interpreted within the proposed framework, whereas the third cannot be fully explained due to the presence of interactions.

\begin{table}
    \centering
    \begin{tabular}{cccccc}
    \toprule
          & \multicolumn{5}{c}{Kernel Functions}  \\
         \cmidrule{2-6}
        Type & linear & equal & Gaussian & exponential & linear \\
        & full-dim & correlation & low-dim & low-dim & low-dim  \\
         \midrule
       multiplicative &{\color{red}$\blacktriangle $}{\color{orange}$\Circle$}$\CIRCLE$ & {\color{red}$\blacktriangle $}$\CIRCLE$&  {\color{green}$\blacksquare$}{\color{gray}$\square $}$\CIRCLE$ &  {\color{green}$\blacksquare$}$\CIRCLE$ & {\color{gray} $\square$}$\CIRCLE$ \\
       \midrule
       additive & {\color{blue}$\triangle$}$\CIRCLE$& {\color{blue}$\triangle$}$\CIRCLE$ & $\CIRCLE$ & $\CIRCLE$ & $\CIRCLE$ \\
       \bottomrule
    \end{tabular}
    \caption{Comparison of methods explicitly incorporating specific kernel functions and structures. We consider both multiplicative and additive relationships between qualitative variables. Different colors represent different methods: $\CIRCLE$ indicates our framework; {\color{red}$\blacktriangle $} corresponds to the method proposed by \cite{Qian:2008ts}; {\color{orange}$\Circle$} refers to the method by \cite{Zhou:2011ty}; {\color{blue}$\triangle$} represents the approach by \cite{Deng:2017wf}; {\color{green}$\blacksquare$} and {\color{gray} $\square$} denote latent variable-based approaches studied in \cite{Zhang:2020gp} and \cite{Tao2021vg}, respectively.
    }
    \label{tab:comparison}
\end{table}

Finally, another line of research adopts a matrix‑first perspective by structurally parameterizing the correlation matrix of the qualitative factors $\left(\tau^{(j)}_{v,v^\prime}\right)$, which directly extends \cite{Qian:2008ts}. 
\citet{Roustant2020} employed block‑structured correlation matrices to impose group‑level structure on the correlations among levels of each qualitative input. 
\citet{Saves2023} constructed correlation matrices through generalized continuous exponential kernels, which have a generalized form and include the continuous relaxation and Gower distance approaches as special cases. 
Our approach adopts a different perspective by using latent representations for the qualitative factors and induces correlations implicitly through standard kernels defined on the latent space.

\subsection{Identifiability}\label{sec:identify}
Since the relationship between any inputs can be fully characterized by \eqref{equ:covW}, it is crucial to examine the conditions on latent vectors and kernel functions that guarantee uniqueness. To address this, we formalize the concept of parameterization equivalence.

\begin{defn}[Parameterization Equivalence]\label{def:paraEquiv}   
We say the latent parameterization $\left\{\Zf_\Vf\right\}$ under the kernel $K_\Zf(\cdot,\cdot)$ and the latent parameterization $\left\{\Wf_\Vf\right\}$ under the kernel $K_\Wf(\cdot,\cdot)$ are equivalent if $K_\Zf\left(\Zf_\Vf,\Zf_{\Vf^\prime}\right) = K_{\Wf}\left(\Wf_\Vf,\Wf_{\Vf^\prime}\right)$ for all $\Vf, \Vf^\prime\in\times_{j=1}^J\{1,2,\cdots,a_j\}$. 
\end{defn}

To ensure the covariance in \eqref{equ:covW} is well defined, $K_\Zf(\cdot, \cdot)$ is required to be a Mercer kernel \citep{mercer1909,bach2002learning}. Specifically, $K_\Zf(\cdot, \cdot)$ is a function from $\Rb^{\sum_{j=1}^J l_j} \times \Rb^{\sum_{j=1}^J l_j}$ to $\Rb$, and for any $n$ inputs $\Zf_1, \cdots, \Zf_n$, the matrix $\left(K_\Zf(\Zf_i, \Zf_j)\right)_{n \times n}$ must be positive semidefinite. Common examples of Mercer kernels include the Gaussian, exponential, and Matérn kernels. Furthermore, when the kernel exhibits certain separability properties, the linear kernel is very flexible and capable of representing a wide range of kernels.

\begin{thm} \label{thm:general_linear}
Suppose the structure between different qualitative variables is either multiplicative 
\begin{equation}\label{equ:multi_equ}
    K_\Zf(\Zf_\Vf,\Zf_{\Vf^\prime}) = \prod_{j=1}^J K^\Zf_j\left(\zf^{(j)}_{v_j},\zf^{(j)}_{v^\prime_j}\right)
\end{equation}
or additive 
\begin{equation}\label{equ:add_equ}
    K_\Zf(\Zf_\Vf,\Zf_{\Vf^\prime}) = \sum_{j=1}^J \psi_jK^\Zf_j\left(\zf^{(j)}_{v_j},\zf^{(j)}_{v^\prime_j}\right)
\end{equation}
with $\sum_{j=1}^J\psi_j=1$. There always exists a latent parameterization $\left\{\Wf_\Vf\right\}$ with $\wf_{v_j}^{(j)}\in\Rb^{a_j}$ under the kernel $K_\Wf(\cdot,\cdot)$ with $K_j^\Wf(\cdot,\cdot)$ being the linear kernel, that is equivalent to the latent parameterization $\left\{\Zf_\Vf\right\}$ with $\zf_{v_j}^{(j)}\in\Rb^{l_j}$ under the kernel $K_\Zf(\cdot,\cdot)$ with $K_j^\Zf(\cdot,\cdot)$ being any Mercer kernel. 
\end{thm}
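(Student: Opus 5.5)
The construction works factor by factor: for each qualitative factor $j$, build vectors $\wf^{(j)}_v\in\Rb^{a_j}$ whose inner products reproduce the level-by-level correlations of $K^\Zf_j$. The full kernel then matches automatically under either the multiplicative or the additive structure.

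Fix $j$ and form the $a_j\times a_j$ matrix
\[
\mathbf{T}^{(j)}=\left(K^\Zf_j\left(\zf^{(j)}_v,\zf^{(j)}_{v^\prime}\right)\right)_{v,v^\prime=1}^{a_j}.
\]
Since $K^\Zf_j$ is a Mercer kernel, $\mathbf{T}^{(j)}$ is symmetric positive semidefinite. The normalization $K_\Zf(\Zf,\Zf)=1$ should also give $\mathbf{T}^{(j)}$ unit diagonal, as discussed in the final paragraph.

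A symmetric PSD matrix admits a factorization $\mathbf{T}^{(j)}=\mathbf{L}_j\mathbf{L}_j^\top$ with $\mathbf{L}_j\in\Rb^{a_j\times a_j}$. One can take a pivoted Cholesky factor, as in \eqref{equ:Cholesky}, or, to avoid rank-deficiency issues, the symmetric square root $\mathbf{L}_j=\mathbf{Q}_j\boldsymbol{\Lambda}_j^{1/2}\mathbf{Q}_j^\top$ from the spectral decomposition. Define $\wf^{(j)}_v$ to be the $v$th row of $\mathbf{L}_j$, written as a column vector in $\Rb^{a_j}$. Then
\[
\left(\wf^{(j)}_v\right)^\top\wf^{(j)}_{v^\prime}=K^\Zf_j\left(\zf^{(j)}_v,\zf^{(j)}_{v^\prime}\right)
\]
for all $v,v^\prime$. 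So the linear kernel $K^\Wf_j(\wf,\wf^\prime)=\wf^\top\wf^\prime$ evaluated on $\{\wf^{(j)}_v\}$ equals $K^\Zf_j$ evaluated on $\{\zf^{(j)}_v\}$, entry by entry.

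Next, set $\Wf_\Vf^\top=\big((\wf^{(1)}_{v_1})^\top,\dots,(\wf^{(J)}_{v_J})^\top\big)$, and let $K_\Wf$ combine the linear $K^\Wf_j$ in the same way as $K_\Zf$ combines the $K^\Zf_j$: as a product in the case \eqref{equ:multi_equ}, and as a $\psi_j$-weighted sum with the same weights in the case \eqref{equ:add_equ}. Substituting the factorwise identity into each term gives $K_\Wf(\Wf_\Vf,\Wf_{\Vf^\prime})=K_\Zf(\Zf_\Vf,\Zf_{\Vf^\prime})$ for every $\Vf,\Vf^\prime$, which is exactly the equivalence in Definition~\ref{def:paraEquiv}.

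The only genuinely delicate step is the factorization when $\mathbf{T}^{(j)}$ is singular, for example when $l_j<a_j$ or when two levels share a latent value. A standard Cholesky factorization may then fail to exist uniquely, so I would use the eigen-decomposition square root, which always exists and has exactly $a_j$ columns.

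A related point is that the linear kernel must satisfy the normalization $K^\Wf_j(\wf,\wf)=1$ on the relevant points. This requires $\|\wf^{(j)}_v\|^2=K^\Zf_j(\zf^{(j)}_v,\zf^{(j)}_v)=1$, which holds if each $K^\Zf_j$ is normalized. In the multiplicative case the unit-diagonal condition on the product forces this only up to scaling across factors, so I would first rescale each $K^\Zf_j$ by a constant so that it has unit diagonal, absorbing the constants into one another. In the additive case the identity holds even without normalization.
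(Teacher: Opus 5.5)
Your proposal is correct and follows the paper's proof. The paper also factors each level-correlation matrix $\Kf^{(j)}$ as a Gram matrix, via the Cholesky decomposition $\Kf^{(j)}=(\Rf^{(j)})^\top\Rf^{(j)}$, and takes $\wf^{(j)}_v$ to be the $v$th column of $\Rf^{(j)}$ (the $v$th row of your $\mathbf{L}_j=(\Rf^{(j)})^\top$); the multiplicative or additive structure then carries the factorwise identity to $K_\Zf=K_\Wf$. Your symmetric square root and your normalization remarks are harmless refinements: a Cholesky-type factor still exists for singular positive semidefinite matrices, only not uniquely, and Definition~\ref{def:paraEquiv} requires only equality of kernel values.
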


\begin{remark}\label{rmk:full_express}
When the kernel is non-separable, all qualitative variables are integrated into a single variable with $\prod_{j=1}^J a_j$ levels, where each level corresponds to a unique combination of the original qualitative variables. In this way, non-separable kernels can be represented by some latent parameterization under linear kernels by applying Theorem~\ref{thm:general_linear}. 
\end{remark}

Theorem~\ref{thm:general_linear} and Remark~\ref{rmk:full_express} hold because we use a linear kernel for the latent representation $\{\Wf_{\Vf}\}$. The proof of Theorem~\ref{thm:general_linear} is provided in the Appendix. The reverse of Theorem~\ref{thm:general_linear} does not hold. For instance, some latent parameterizations under the linear kernel cannot be represented by the Gaussian kernel because the linear kernel can accommodate negative correlations, whereas the Gaussian kernel is restricted to modeling positive correlations only. In other words, when $l_j=a_j$, the linear kernel exhibits greater flexibility compared to other kernels. 
Remark~\ref{rmk:full_express} addresses the case where kernels among qualitative variables are non-separable. A similar approach was proposed in \cite{Oune:2021wp} using the Gaussian kernel. However, the increased number of levels results in more parameters, which makes parameter estimation more difficult.

Although such a reparameterization always exists, if the true underlying kernel structure follows or approximately follows a Gaussian kernel with $l_j<a_j$, it may be possible to recover the structure using fewer parameters, which can reduce redundancy and variability while improving computational efficiency.

With a fixed kernel function $K_\Zf(\cdot,\cdot)$, different latent vectors can produce identical covariance structures. Hence, examining the uniqueness of latent vectors and identifying the essential components that define the covariance is important. This identifiability issue is dependent on the specific kernel function employed.

To begin with, understanding the parameterization problem from a geometric perspective provides valuable insights. For the linear kernel, equivalence under orthogonal transformations corresponds to isometries in the inner product space, where angles and distances are preserved. Rotations about the origin and reflections across any plane passing through the origin maintain the relationships between vectors. For Gaussian kernels (later extended to isotropic kernels), equivalence corresponds to isometries in the distance space, where distances remain invariant. In addition to rotation and reflection, translations also preserve the relative distances between vectors.

Below, we provide an identifiability condition for the linear and isotropic kernels when $I=0$ and $J=1$, where $I$ is the total number of quantitative factors and $J$ is the total number of qualitative factors. An isotropic kernel is a kernel function $K(\mathbf{x}, \mathbf{x}')$ that depends only on the relative Euclidean distance between two inputs, i.e., $K(\mathbf{x}, \mathbf{x}') = \Kcal(d)$, where $d = \|\mathbf{x} - \mathbf{x}'\|_2$ and kernel generating function $\Kcal(\cdot)$ is a nonnegative and monotonically decreasing. Here, we take $I=0$ (i.e., there is no quantitative factor) because the identifiability issue arises only from the latent representations associated with qualitative factors. For the case $J>1$, under the multiplicative structure in \eqref{equ:multi_equ} or the additive structure in \eqref{equ:add_equ} considered in Theorem~\ref{thm:general_linear}, identifiability can be established by examining each qualitative factor individually.

\begin{prop}[Identifiability for linear kernel]\label{prop:unique_linear}
    Consider the latent parameterization $\{\Zf_\Vf\}$ such that $\big(\zf_{1}^{(1)},\cdots,\zf_{l_1}^{(1)}\big)$ is full rank and $\|\zf_{v}\|_2=1$ for all $1\leq v\leq a_1$.\\ (a) There always exists a unique parameterization $\{\Wf_\Vf\}$ satisfying $w_{v,l}^{(j)}=0$ for all $l> v$ and $w_{v,v}^{(j)}> 0$ for $1\leq v\leq l_1$ that is equivalent to $\{\Zf_\Vf\}$ under the linear kernel.\\ (b) $\{\Wf_\Vf\}$ can be deployed to hyperspherical coordinates through 
    $$w_{v,l}^{(1)} =  \cos\left(\theta^{(1)}_{v,l}\right) \prod_{\iota=1}^{l-1} \sin\left(\theta^{(1)}_{v,\iota}\right) \ \text{for}\ 1\leq l<l_1 -1 \ \text{and}\  
w_{v,l}^{(1)} =  \prod_{\iota=1}^{l} \sin\left(\theta^{(1)}_{v,\iota}\right)  \ \text{for}\ l=l_1 -1,  $$
with the constraint $0\leq \theta^{(1)}_{v,l}\leq \pi$ for $1\leq l< \min(v,l_1)$, $0\leq \theta^{(1)}_{v,l_1-1}\leq 2\pi$ for $v\geq l_1$ and $\theta^{(1)}_{v,l}=0$ for $ v\leq l$, where $v\in\{1,\cdots, a_1\}$. 
\end{prop}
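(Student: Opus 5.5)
Write $Z=(\zf_1,\dots,\zf_{a_1})\in\Rb^{l_1\times a_1}$ for the matrix whose columns are the latent vectors, dropping the superscript $(1)$. Under the linear kernel the only quantity that matters is the Gram matrix $T=Z^\top Z$, whose entries are $T_{vv'}=\zf_v^\top\zf_{v'}$. So $\{\Wf_\Vf\}$ is equivalent to $\{\Zf_\Vf\}$ exactly when $W^\top W=T$.

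For existence in (a), I will take a QR decomposition of the leading $l_1\times l_1$ block $Z_1=(\zf_1,\dots,\zf_{l_1})$. By the full-rank assumption $Z_1$ is invertible, so $Z_1=QR$ with $Q$ orthogonal and $R$ upper triangular with positive diagonal; this is Gram--Schmidt. I set $W=Q^\top Z$. Then $W^\top W=Z^\top QQ^\top Z=T$, so $W$ is equivalent to $Z$. Its first $l_1$ columns are the columns of $R$, so $w_{v,l}=R_{l,v}=0$ for $l>v$ and $w_{v,v}=R_{v,v}>0$ for $v\le l_1$, which are precisely the stated constraints.

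For uniqueness, suppose $W$ and $\tilde W$ both satisfy the constraints and $W^\top W=\tilde W^\top\tilde W$.
\begin{itemize}
\item Restricting to the first $l_1$ columns gives $R^\top R=\tilde R^\top\tilde R$, where $R,\tilde R$ are upper triangular with positive diagonal. This is a Cholesky factorization of the positive definite matrix $T_{1:l_1,1:l_1}$, which is positive definite because $Z_1$ has full rank. Uniqueness of the Cholesky factor gives $R=\tilde R$.
\item For $v>l_1$, the cross entries give $R^\top w_v=T_{1:l_1,v}=R^\top\tilde w_v$. Since $R$ is invertible, $w_v=\tilde w_v$.
\end{itemize}
The main delicacy is this step: the columns beyond $l_1$ carry no triangular constraint, and their uniqueness comes only from the invertibility of $R$. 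I would also remark that without the full-rank assumption the Cholesky factor need not be unique, which is why the hypothesis is stated.

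For (b), I use $\|\wf_v\|_2=\|\zf_v\|_2=1$, which holds because the diagonal of $T$ is preserved. Each $\wf_v$ is therefore a point on the unit sphere $S^{l_1-1}$, and the standard hyperspherical map sends angles $\theta_{v,1},\dots,\theta_{v,l_1-1}$ onto the sphere: the first $l_1-2$ angles range over $[0,\pi]$ and the last over $[0,2\pi)$. I read the displayed formula as having its last coordinate at index $l_1$, with the product of sines up to $\iota=l_1-1$. The remaining work is to match the zero pattern with the angle constraints.
\begin{itemize}
\item For $v\le l_1$, having $w_{v,l}=0$ for $l>v$ and $w_{v,v}>0$ forces $\sin\theta_{v,v}=0$, so the subsequent coordinates vanish and the later angles are set to $0$ by convention. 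The angles $\theta_{v,l}$ for $l<v$ are free in $[0,\pi]$ and give $w_{v,v}=\prod_{\iota<v}\sin\theta_{v,\iota}\ge0$. Strict positivity holds on the interior, and I would note that boundary angles are a measure-zero degeneracy.
\item For $v\ge l_1$, all $l_1-1$ angles are free, with the last one ranging over $[0,2\pi]$ so that the sign of the final coordinate is unrestricted.
\end{itemize}
This is a routine verification by induction on the coordinate index: each coordinate $w_{v,l}$ determines $\theta_{v,l}$ from the previous ones via $\cos\theta_{v,l}=w_{v,l}/\prod_{\iota<l}\sin\theta_{v,\iota}$.
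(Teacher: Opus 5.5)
Your proof is correct and follows essentially the paper's route: a QR decomposition of the leading $l_1\times l_1$ block gives existence, and uniqueness comes from uniqueness of the QR/Cholesky factor for the first $l_1$ columns plus solving for the remaining columns, with (b) read off from standard hyperspherical coordinates with unit radius. The only differences are cosmetic. You argue uniqueness directly from equality of Gram matrices, using invertibility of $R$ for columns $v>l_1$, rather than first invoking the paper's Lemma~S1 characterization $Z=QW$; and your reading of the displayed formula, with the final coordinate at index $l_1$, is the intended one.
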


\begin{prop}[Identifiability for isotropic kernel]\label{prop:unique_isotropic}
    Consider the latent parameterization $\{\Zf_\Vf\}$ such that $\big(\zf_{2}^{(1)}-\zf_{1}^{(1)},\cdots,\zf_{l_1+1}^{(1)}-\zf_{1}^{(1)}\big)$ is full rank. There always exists a unique parameterization $\{\Wf_\Vf\}$ satisfying $w_{v,l}^{(1)}=0$ for all $l\geq v$ and $w_{v,v-1}^{(1)}> 0$ for $2\leq v\leq l_1+1$ that is equivalent to $\{\Zf_\Vf\}$ under the isotropic kernel.  
\end{prop}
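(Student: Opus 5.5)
The plan is to reduce equivalence under an isotropic kernel to equality of all pairwise Euclidean distances, and then handle existence and uniqueness with rigid motions. Throughout, write $\zf_v$ for $\zf^{(1)}_v$.

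\emph{Step 1 (reduction to distances).} Since the statement holds for a fixed isotropic kernel $K(\mathbf{x},\mathbf{x}')=\Kcal(\|\mathbf{x}-\mathbf{x}'\|_2)$, equality of pairwise distances, $\|\wf_v-\wf_{v'}\|=\|\zf_v-\zf_{v'}\|$ for all $v,v'$, is sufficient for equivalence in the sense of Definition~\ref{def:paraEquiv}. For uniqueness I use the converse. This needs $\Kcal$ to be injective, i.e.\ strictly decreasing, which I will assume; otherwise the Gaussian case already gives it.

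\emph{Step 2 (existence).} Translate so that $\zf_1$ sits at the origin: set $\tilde\zf_v=\zf_v-\zf_1$. Translations preserve distances. The matrix $\Df=(\tilde\zf_2,\dots,\tilde\zf_{l_1+1})\in\Rb^{l_1\times l_1}$ is full rank by hypothesis. Take the QR decomposition $\Df=\Qf\Rf$ with $\Qf$ orthogonal and $\Rf$ upper triangular with positive diagonal. This decomposition is unique because $\Df$ is nonsingular.

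Define $\wf_v=\Qf^\top\tilde\zf_v$ for every $v=1,\dots,a_1$. This is an orthogonal map composed with a translation, so all pairwise distances are preserved. Then $\wf_1=0$, and for $2\le v\le l_1+1$ the vector $\wf_v$ is the $(v-1)$th column of $\Rf$. Hence $w_{v,l}=0$ for $l\ge v$ and $w_{v,v-1}=R_{v-1,v-1}>0$, which are exactly the stated constraints. The remaining vectors $\wf_v$ with $v>l_1+1$ are unconstrained.

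\emph{Step 3 (uniqueness).} Suppose $\{\Wf_\Vf\}$ and $\{\Wf'_\Vf\}$ both satisfy the constraints and are equivalent to $\{\Zf_\Vf\}$. By Step 1 they have identical pairwise distances. Both have $\wf_1=\wf'_1=0$, since the condition $w_{1,l}=0$ for all $l\ge1$ forces this. Therefore their norms agree, $\|\wf_v\|=\|\wf_v-\wf_1\|$. Polarization gives $\wf_v^\top\wf_{v'}=\tfrac12(\|\wf_v\|^2+\|\wf_{v'}\|^2-\|\wf_v-\wf_{v'}\|^2)$, so the two configurations have the same Gram matrix.

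Let $\Rf=(\wf_2,\dots,\wf_{l_1+1})$ and $\Rf'$ be the analogous matrix from $\{\Wf'_\Vf\}$. Both are upper triangular with positive diagonal, and $\Rf^\top\Rf=\Rf'^\top\Rf'$. This common matrix is positive definite because $\Rf$ is nonsingular, which follows from the rank hypothesis and Step 2. Uniqueness of the Cholesky factor then gives $\Rf=\Rf'$.

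For $v>l_1+1$, the inner products $\wf_v^\top\wf_u$ for $u=2,\dots,l_1+1$ are determined, so $\Rf^\top\wf_v=\Rf^\top\wf'_v$. Since $\Rf$ is invertible, $\wf_v=\wf'_v$.

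\emph{Main obstacle.} The main obstacle is the converse in Step 1, namely deducing equal distances from equal kernel values. This requires $\Kcal$ to be injective on the relevant range. I will state this explicitly, strengthening the paper's ``monotonically decreasing'' to strictly decreasing. Everything else reduces to standard QR/Cholesky uniqueness, in parallel with Proposition~\ref{prop:unique_linear}(a).
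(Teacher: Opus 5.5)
Your proof is correct. The existence half coincides with the paper's: QR of $\big(\zf_2-\zf_1,\ldots,\zf_{l_1+1}-\zf_1\big)$, then $\wf_v=\Qf^\top(\zf_v-\zf_1)$.

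The uniqueness half takes a more direct route. The paper first invokes Lemma~S2, which characterizes equivalence under an isotropic kernel as a rigid motion, $\big(\zf_1,\ldots,\zf_{a_1}\big)=\Qf\big(\wf_1-\wf,\ldots,\wf_{a_1}-\wf\big)$. It then uses $\wf_1=\mathbf{0}$ to pin down the translation, recognizes the next $l_1$ columns as a QR decomposition of the difference matrix, and solves for the remaining columns. You never build the orthogonal map between $\{\Zf_\Vf\}$ and $\{\Wf_\Vf\}$. Instead you compare two constrained candidates with each other, use the anchor $\wf_1=\mathbf{0}$ and polarization to turn equal distances into equal Gram matrices, and finish with Cholesky uniqueness plus a triangular solve.

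Both arguments rest on the same fact, the uniqueness of a positive-diagonal triangular factorization. Yours is self-contained and skips the ``equal Gram matrix implies orthogonal equivalence'' step of Lemma~S1, whose QR-based proof implicitly needs a rank condition. The paper's route buys a reusable description of the whole equivalence class (rigid motions), which also serves its existence direction and mirrors Proposition~\ref{prop:unique_linear}.

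Your requirement that $\Kcal$ be strictly decreasing is well founded. The paper's proof of Lemma~S2 deduces equal distances from equal kernel values merely ``since $\Kcal_1$ is decreasing,'' and with a non-strict generator the proposition actually fails. For example, take the triangular kernel $\max(0,1-d)$ on $\Rb^1$ with $z_1=0$, $z_2=2$, $z_3=5$. Every constrained configuration whose pairwise distances are all at least $1$ gives the same (identity) kernel matrix.
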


The proofs of Propositions~\ref{prop:unique_linear} and \ref{prop:unique_isotropic} are provided in the Appendix. 
For the linear kernel, the hyperspherical coordinate transformation simplifies the optimization process subject to the norm constraint for the latent vectors (see Lemma~S1). Compared to \cite{Zhou:2011ty}, we allow $l_j\leq a_j$ whereas they restrict $l_j=a_j$. Consequently, additional attention is paid to determine the range of the angles. Specifically, their range of angles is $[0,\pi]$, whereas we require the additional condition $0\leq \theta^{(1)}_{v,l_1-1}\leq 2\pi$ for $v\geq l_1$.  
Note that \cite{Zhang:2020gp} and \cite{Yerramilli:2023vq} claimed that the Gaussian kernel is invariant under translation and rotation. However, as stated in Proposition~\ref{prop:unique_isotropic}, to ensure uniqueness also requires translation invariance. For instance, when $l_j=2$ and $a_j>3$, we require $w_{3,2}^{(1)}>0$. Finally, our approach can extend the scope to encompass any isotropic kernel.

\section{Ordinal variable}\label{sec:ordinal_var}
For ordinal variables, adjacent levels are generally expected to exhibit closer relationships. \cite{luo2024hybrid} treats ordinal variables as nominal when the number of levels is small, and as continuous variables constrained to integer values when large. This empirical rule is straightforward to apply but may either overlook the underlying ordinal structure or be applicable only when the ordinal variable takes integer values. \cite{Qian:2008ts} proposed two approaches for modeling ordinal correlations: one applies constraints to the correlation matrix, while the other transforms the ordinal scale into a continuous variable and defines correlations based on the transformed values. Although the two approaches provide valuable conceptual schemes, they do not offer specific algorithms for practical implementation. \cite{Roustant2020} developed upon the second approach of \cite{Qian:2008ts} by applying a cosine kernel to the distances between transformed values.  However, their method handles nominal and ordinal variables separately and regards them as two distinct types.

Our framework is directly applicable to ordinal variables by imposing order constraints on latent variables. Specifically, for the isotropic kernel, we consider a one-dimensional latent vector $ \mathbf{z}^{(j)} $ (or scalar latent variable; we use the same notation for simplicity) and require that $0 = z_{1,1}^{(j)} \leq z_{2,1}^{(j)} \leq \cdots \leq z_{a_j,1}^{(j)}$. 
This constraint ensures that the relative distances between the latent variables preserve the ordinal information. 
As discussed in Section~\ref{sec:identify}, under the linear kernel, the correlation between two latent vectors is determined by the angle between them. This observation motivates our use of angles in hyperspherical coordinates to encode ordinal information. Specifically, we define $z_{v,1}^{(j)}=\cos\left(\theta_{v,1}^{(j)}\right)$ and $z_{v,2}^{(j)}=\sin\left(\theta_{v,2}^{(j)}\right)$ with the constraint  $0=\theta_{1,1}^{(j)}\leq\theta_{2,1}^{(j)}\leq\cdots\leq\theta_{a_j,1}^{(j)}\leq \pi$, where ${\theta^{(j)}_{v,1}}$ denotes the angles associated with the $v$th levels of the $j$th qualitative variable. This transformation effectively captures one-dimensional ordinal structure within a two-dimensional latent space.

To simplify optimization under ordinal constraints, we reparameterize the ordinal structure using non-negative increments. Specifically, for the isotropic kernel, we define $ z_{v,1}^{(j)}=\sum_{\iota=1}^{v}\Delta^{(\zf,j)}_{\iota} $, where $\Delta^{(\zf,j)}_{1}=0$ and $\Delta^{(\zf,j)}_{\iota}\geq0$, $2\leq\iota\leq a_j$. In this case, the ordinal constraint is transformed into a box-constrained optimization problem \citep{JSSv076i01}, which can be efficiently solved by the L-BFGS-B algorithm \citep{byrd1995limited}. For notation consistency, we represent  $\theta_{v,1}^{(j)}=\sum_{\iota=1}^{v}\Delta^{(\btheta,j)}_{\iota}$ for the linear kernel, where $\Delta^{(\btheta,j)}_{1}=0$, $\Delta^{(\btheta,j)}_{\iota}\geq0$, $2\leq\iota\leq a_j$, and $\theta_{a_j,1}^{(j)}=\sum_{\iota=1}^{a_j}\Delta^{(\btheta, j)}_{\iota}<\pi$. The parameters are optimized using an adaptive barrier algorithm \citep[Chapter~16.3]{lange1999numerical}. A summary of the reparameterizations, along with the identifiability conditions and the number of parameters, is shown in Table~\ref{tab:kernel}. 

As shown in Table~\ref{tab:kernel}, the latent representations $\{\Zf_\Vf\}$ are uniquely determined by these reparameterized parameters, provided under the identifiability conditions. For convenience, we collectively denote these reparameterized parameters as $\{\Omega_\Vf\}$, with specific forms based on the kernel and variable type: for a linear kernel, $\big\{\Omega_\Vf\big\} = \big\{\theta_{\nu,\iota}^{(j)}\big\}$ (nominal) or $\big\{\Omega_\Vf\big\} = \big\{\Delta_{\iota}^{(\btheta,j)}\big\}$ (ordinal); for an isotropic kernel, $\big\{\Omega_\Vf\big\} = \big\{\Zf_{\Vf}\big\}$ (nominal) or $\{\Omega_\Vf\} = \big\{\Delta_{\iota}^{(\zf,j)}\big\}$ (ordinal). Imposing the identifiability conditions is achieved by restricting $\{\Omega_\Vf\}$ to a specific region $\Mcal_\Omega$.
    
\begin{table}[htbp]
    \centering
    \renewcommand{\arraystretch}{1.4}
    \resizebox{\textwidth}{!}{
    \begin{tabular}{@{}cccccc@{}}
        \toprule
        {Kernel} & {Variable} & 
      \multirow{2}{*}{$K_j\left(\zf_{v}^{(j)},\zf_{v^\prime}^{(j)}\right)$} &  {Reparameterization of} & Identifiability Condition &  {Num. of Para.} \\
        {Type} & {Type} &  &$z_{v,l}^{(j)}$  & $\Mcal_\Omega$ &  $\Pcal_j(l_j;a_j)$\\
        \midrule

        \multirow{5}{*}{Linear} 
        & Nominal & $\left(\zf_{v}^{(j)}\right)^\top \zf_{v^\prime}^{(j)}$   
        & $\begin{aligned}
            \cos\left(\theta^{(j)}_{v,l}\right) \prod_{\iota=1}^{l-1} \sin\left(\theta^{(j)}_{v,\iota}\right) &\text{ for } 1 \leq l < l_j-1\\
            \prod_{\iota=1}^{l} \sin\left(\theta^{(j)}_{v,\iota}\right) &\text{ for }  l = l_j-1
            \end{aligned}$  
        & $\begin{aligned}
             0 \leq \theta^{(j)}_{v,l} \leq \pi &\text{ for } 1 \leq l < \min(v,l_j) \\
             0 \leq \theta^{(j)}_{v,l_j-1} \leq 2\pi &\text{ for } v \geq l_j \\
         \theta^{(j)}_{v,l} = 0 &\text{ for } v \leq l 
        \end{aligned}$& $(a_j - 1)(l_j - 1)$ \\
        \cmidrule(l){2-6}

        & $\begin{aligned}
            \text{Ordinal}\\
            (l_j=2)
        \end{aligned}$ &$\left(\zf_{v}^{(j)}\right)^\top \zf_{v^\prime}^{(j)}$ 
        & $\begin{aligned}
            & \cos\left( \sum_{\iota=1}^{v} \Delta_{\iota}^{(\btheta,j)} \right) \text{ for } l=1 \\
            & \sin\left( \sum_{\iota=1}^{v} \Delta_{\iota}^{(\btheta,j)} \right) \text{ for } l=2
        \end{aligned}$
        & $\begin{aligned}
        &\Delta_{1}^{(\btheta,j)} = 0\\
            & \Delta_{v}^{(\btheta,j)} \geq 0  \\
             0 \leq& \sum_{\iota=1}^{a_j} \Delta_{\iota}^{(\btheta,j)} \leq \pi 
        \end{aligned}$ &$a_j - 1$ \\

        \midrule

        \multirow{3}{*}{Isotropic} 
        & Nominal & $\Kcal_j\left( \left\| \zf_{v}^{(j)} - \zf_{v^\prime}^{(j)} \right\|_2 \right)$ & $z_{v,l}^{(j)}$ 
        & $\begin{aligned}
            & z^{(j)}_{v,v-1} > 0\\ 
            &z^{(j)}_{v,l} = 0 \text{ for } l \geq v
        \end{aligned}$ &$ {(2a_j - l_j - 1) l_j}/{2}$\\
        \cmidrule(l){2-6}

        & $\begin{aligned}
            \text{Ordinal}\\
            (l_j=1)
        \end{aligned}$ & $\Kcal_j\left( \left\| \zf_{v}^{(j)} - \zf_{v^\prime}^{(j)} \right\|_2 \right)$ 
        & $ \sum_{\iota=1}^{v} \Delta_{\iota}^{(\zf,j)}$   for  $l=1$
        & $\begin{aligned}
            &\Delta_{1}^{(\zf,j)} = 0\\ 
            &\Delta_{v}^{(\zf,j)} \geq 0
        \end{aligned}$& $a_j - 1$ \\

        \bottomrule
    \end{tabular}
    }
    \caption{Summary of reparameterizations $\{\Omega_\Vf\}$ and ranges $\Mcal_\Omega$ for different qualitative variable types under different kernel structures. }
    \label{tab:kernel}
\end{table}

It is helpful to illustrate how different kernel choices for qualitative variables impose fundamentally different low-dimensional structures. Consider a simple scenario with a single ordinal qualitative variable having $a_1 = 3$ levels and latent dimension $l_1 = 1$. Let $\tau^{(1)}_{v, v^\prime} = K_\Zf\left(\zf^{(1)}_{v}, \zf^{(1)}_{v^\prime}\right)$ denote the kernel-induced correlation between levels $v$ and $v^\prime$ of the qualitative variable. We reparameterize the latent embeddings via non-negative increments as $\theta^{(1)}_{1,1}=\Delta^{(\boldsymbol{\theta},1)}_{1} = 0$, $\theta^{(1)}_{2,1} = \Delta^{(\boldsymbol{\theta},1)}_{2}$, $\theta^{(1)}_{3,1} = \Delta^{(\boldsymbol{\theta},1)}_{2} + \Delta^{(\boldsymbol{\theta},1)}_{3}$, and similarly $z^{(1)}_{1,1} =  \Delta^{(\zf,1)}_{1}=0$, $z^{(1)}_{2,1} = \Delta^{(\zf,1)}_{2}$, $z^{(1)}_{3,1} = \Delta^{(\zf,1)}_{2} + \Delta^{(\zf,1)}_{3}$. Under the linear kernel, the correlations satisfy $\tau^{(1)}_{1,3} = \tau^{(1)}_{1,2} \tau^{(1)}_{2,3} - \big\{1 - (\tau^{(1)}_{1,2})^2\big\}^{1/2} \big\{1 - (\tau^{(1)}_{2,3})^2\big\}^{1/2}$,
which arises from the cosine law on the unit circle in two-dimensional space. In contrast, for the Gaussian kernel, the correlation must satisfy $\tau^{(1)}_{1,3} = \tau^{(1)}_{1,2} \tau^{(1)}_{2,3} \exp\big\{ -2 (\log \tau^{(1)}_{1,2})^{1/2} (\log \tau^{(1)}_{2,3})^{1/2} \big\}$, which follows from solving $\Delta^{(\zf,1)}_{2}$ and $\Delta^{(\zf,1)}_{3}$ given $\tau^{(1)}_{1,2}$ and $\tau^{(1)}_{2,3}$, and then substituting them into the expression for $\tau^{(1)}_{1,3}$. These expressions reveal that even if the pairwise similarities between adjacent levels are identical, the implied similarity between non-adjacent levels can differ substantially depending on the kernel choice. As a result, while \cite{Roustant2020} recommended the linear kernel to capture potential negative correlation and ordinal information simultaneously, this choice may be suboptimal if Gaussian or other kernels are more suitable for capturing the underlying structure. 
In Section~\ref{sec:model_s_a}, we will explore model selection and model averaging techniques to determine the appropriate kernel for analysis and to combine predictions for enhanced performance.

\section{Estimation and prediction}\label{sec:est_pre}

As shown in \eqref{equ:covW}, the kernels for the quantitative and qualitative factors are user-specified. By following standard practice, we employ the Gaussian kernel for the quantitative factors. Denote the unknown parameters within the kernel function as $\bPhi$. We have $$K_\Uf(\Uf,\Uf^\prime)=K_\Uf(\Uf,\Uf^\prime\mid\bPhi)
=\exp\left\{-\sum_{i=1}^I\phi_i(u_i-u_i^\prime)^2\right\}
,$$ 
where $\bPhi=(\phi_1,\phi_2,\cdots,\phi_I)$. Other popular kernels, such as the exponential kernel and the Matérn kernel, can also be used.

In the following, we focus on the category of data‑driven latent variables described in Section~\ref{sec:existing} because the estimated latent variables provide valuable insights into the similarities between different levels. Some of the suggested kernels for qualitative variables are listed in Table~\ref{tab:comparison}. We examine combinations of the multiplicative or additive structure, as defined in \eqref{equ:multi_equ} and \eqref{equ:add_equ}, with different choices of $K_j^\Zf(\cdot,\cdot)$, including Gaussian, exponential, and linear kernels. We consider one and two dimensions for the Gaussian and exponential kernels, and two and three dimensions for the linear kernel. When ordinal information is present, we can also incorporate the corresponding method for comparison.

\textbf{Estimation.} Suppose there are $n$ response values $\Yf=(Y_1,\cdots,Y_n)^\top$ corresponding to input values $\Df = \big\{(\Uf_1,\Vf_1),(\Uf_2,\Vf_2),\cdots,(\Uf_n,\Vf_n)\big\}$. The log-likelihood function up to an additive constant is 
\begin{equation}\label{equ:loglik}
l\left(\mu,\sigma^2,\bPhi,\{\bOmega_\Vf\}\right) 
= -\frac{1}{2}\left\{
n\log(\sigma^2)+\log|\Rf|+\left(\Yf-\mu\1f\right)^\top \Rf^{-1}\left(\Yf-\mu\1f\right)/\sigma^2 
\right\},
\end{equation}
where $\1f$ is a vector of length $n$ with elements one, $|\cdot|$ is the determinant, and $\{\Omega_\Vf\}$ are reparameterized parameters. The latent parameterization $\{\Zf_{\Vf}\}$ is calculated using the information in column 4 of Table~\ref{tab:kernel}. Then,  $\Rf$ is the correlation matrix whose $(i,j)$th element is calculated through $K_\Uf(\Uf_i,\Uf_j\mid\bPhi)K_\Zf(\Zf_{\Vf_i},\Zf_{\Vf_j})$.  In our experiments, the computation of $\Rf^{-1}$ is sometimes numerically unstable due to ill-conditioning. Following \cite{peng2014choice}, we use a nugget term in kriging to improve its conditioning, 
which adds a small positive constant to the diagonal of $\Rf$ and ensure that the smallest eigenvalue is above a prescribed threshold~$\epsilon$. 
Specifically, we consider a sequence of candidate thresholds, e.g., $\epsilon \in \{10^{-1}, 10^{-2}, \cdots, 10^{-8}\}$, 
select the value that minimizes the negative log-likelihood, and then replace $\Rf$ by $\Rf + \delta\,\mathbf{I}_n$, where $\delta = \max\{\,0,\,\epsilon - \lambda_{\min}(\Rf)\,\}$, with $\lambda_{\min}(\Rf)$ denoting the smallest eigenvalue of the original correlation matrix, 
and $\mathbf{I}_n\in\Rb^{n\times n}$ being the identity matrix. Given $\bPhi$ and $\{\Zf_\Vf\}$, $\wh\mu$ and $\wh\sigma^2$ are estimated by 
$$
\wh\mu = \left(\1f^\top\Rf^{-1}\1f\right)^{-1}\1f^\top\Rf^{-1}\Yf
\quad \text{and}\quad
\wh\sigma^2=\left(\Yf-\wh\mu\1f\right)^\top \Rf^{-1}\left(\Yf-\wh\mu\1f\right)/n. 
$$
Plug the estimated mean and variance into the log-likelihood function, and then the remaining parameters are estimated through
$$
\left(
\wh\bPhi,\{\wh\bOmega_\Vf\}
\right)={\arg\min}_{\bPhi,\{\bOmega_\Vf\in\Mcal_{\bOmega}\}} \left\{
n\log(\wh\sigma^2)+\log|\Rf|
\right\},
$$
where $\Mcal_\Omega$ denotes the region describing identifiability conditions, as provided in Table~\ref{tab:kernel}. 

\textbf{Prediction.} We can perform prediction and interpolation in the same manner as in ordinary kriging for quantitative-only variables. Denote the estimated parameters as $\wh\mu$, $\wh\sigma^2$, $\wh\bPhi$, and $\{\wh\bOmega_\Vf\}$. Then, estimated latent representations $\{\wh\Zf_\Vf\}$ are obtained via using the information in column 4 of Table~\ref{tab:kernel}. The estimated correlation matrix, $\wh\Rf$, has its $(i,j)$th entry $K_\Uf(\Uf_i,\Uf_j \mid \wh\bPhi)\, K_\Zf(\wh\Zf_{\Vf_i}, \wh\Zf_{\Vf_j})$. For a new input $(\Uf^\star,\Vf^\star)$, we predict its response and corresponding variance as follows:
$$\wh{Y}(\Uf^\star,\Vf^\star) = \wh{\mu} + \wh\rf^\top\wh\Rf^{-1}(\Yf-\wh\mu \1f)\ \text{and}\ 
s^2(\Uf^\star,\Vf^\star) = \wh\sigma^2\left\{1-\wh\rf^\top \wh\Rf^{-1}\wh\rf +\frac{(\wh\rf^\top\wh\Rf^{-1}\mathbf{1}-1)^2}{\mathbf{1}^\top\wh\Rf^{-1}\mathbf{1}}\right\},$$
where $\wh\rf=(r_1,\cdots,r_n)$ with $r_i=K_\Uf(\Uf_i,\Uf^\star\mid\wh\bPhi)K_\Zf(\wh\Zf_{\Vf_i},\wh\Zf_{\Vf^\star})$.

\section{Model selection and model averaging}\label{sec:model_s_a}
The general framework can induce different models by selecting different kernels and varying the dimension of latent spaces (see Table~\ref{tab:comparison}). Suppose there are $K$ candidate models, denoted by $\{\Mcal_1,\cdots,\Mcal_K\}$. A natural question then arises: How should we determine which kernel to use?

Motivated by the model selection problem in GP with quantitative inputs, we propose two types of criteria to address this question. The first type utilizes leave-one-out cross-validation (LOOCV), which has been commonly used for kernel selection in Gaussian processes with quantitative inputs \citep{Dubrule1983, Rasmussen2006} and for simulator selection \citep{Hung2023}. The second type employs the Bayesian information criterion, BIC \citep{schwarz1978estimating}, which has been shown to provide satisfactory selection performance in a different context \citep{Chen2024}.

\subsection{LOOCV-based model selection}
The LOOCV procedure proceeds as follows. The leave-one-out prediction involves fitting the model while leaving one observation out, then calculating the error based on the model's performance when making predictions using the fitted model. We calculate the LOOCV score for input $i$ of each model $\Mcal_k$ in the prediction step, rather than in the estimation step. Specifically, we calculate the LOOCV score of $\Mcal_k$ as follows: 
$$
\wh{S}_{\Mcal_k} = \frac{1}{n}\sum_{i=1}^n  \wh{S}_{\Mcal_k,(i)},
\quad\text{with}\quad 
\wh{S}_{\Mcal_k,(i)} = L\left(Y_i;\wh\mu_{\Mcal_k,(i)},\wh{\sigma}^2_{\Mcal_k,(i)}\right),
$$
where $\wh\mu_{\Mcal_k,(i)}$ and $\wh{\sigma}^2_{\Mcal_k,(i)}$ are estimated mean and variance using $\Df\backslash \{(\Uf_i,\Vf_i)\}$. The function $L(y;\mu,\sigma^2)$ denotes a user-defined marginal error measure for an observation $y$ under the assumption that $y$ follows a normal distribution with mean $\mu$ and variance $\sigma^2$. According to \cite{Rasmussen2006}, the leave-one-out estimated mean and variance have closed-form solutions, given by
$$\wh\mu_{\Mcal_k,(i)}=Y_i - \left(\wh\Rf^{-1}\Yf\right)_i/\left(\wh\Rf^{-1}\right)_{ii}
\quad\text{and}\quad
\wh{\sigma}^2_{\Mcal_k,(i)} = 1/\left(\wh\Rf^{-1}\right)_{ii},
$$ 
where $(\cdot)_i$ is the $i$th element of vector and $(\cdot)_{ij}$ is the $(i,j)$th element of matrix, respectively. 

We then choose the model minimizing the LOOCV score. In practice, we have two methods using different error measurements $L(y;\mu,\sigma^2)$:
\begin{enumerate}
    \item The $\LOOCVlike$ uses the negative log-likelihood of the normal distribution, defined as 
    $L(y;\mu,\sigma^2) = \{\log(2\pi\sigma^2)\}/2 + {(y - \mu)^2}/{2\sigma^2}.$
    \item The $\LOOCVL$ measures the $l_2$ loss between the predicted value $\wh{\mu}_{\mathcal{M}_k, (i)}$ and the observed value $Y_i$, defined as
    $L(y;\mu,\sigma^2)=(y-\mu)^2.$
\end{enumerate}

\subsection{BIC-based model selection}
The $\BIC_{\MSel}$ method selects the model that minimizes the BIC criterion as the final model. 

When the multiplicative structure is assigned between different qualitative variables, the BIC is defined as
\begin{equation}\label{equ:BIC}
\wh\BIC_{\Mcal_k} =  -2l\left(\wh\mu,\wh\sigma^2,\wh\bPhi,\{\wh\bOmega_\Vf\}\right) +\left(2+I +\sum_{j=1}^J\Pcal_j(l_j;a_j)\right)\ln(n),
\end{equation}
where $\wh\mu,\wh\sigma^2,\wh\bPhi$ and $\{\wh\bOmega_\Vf\}$, $2+I +\sum_{j=1}^J\Pcal_j(l_j;a_j)$ are the number of parameters, including two parameters for mean and variance, $I$ parameters for the scale parameter for the quantitative variables and $\sum_{j=1}^J\Pcal_j(l_j;a_j)$ parameters defined in Table~\ref{tab:kernel}, for the qualitative variables. The first term in \eqref{equ:BIC} evaluates the goodness-of-fit of the model, while the second term imposes a penalty for model complexity.

When an additive structure is employed, the number of parameters increases by $J-1$, as the relative weights $\{\psi_j\}$, shown in \eqref{equ:add_equ}, are also considered unknown parameters and are subject to the constraint $\sum_{j=1}^J \psi_j = 1$.

\subsection{BIC-based model average}\label{sec:modelaverage}
Treating $\{\Mcal_k\}$ as prior models with equal probabilities, we can leverage the results given by different models and hence make predictions more robustly. Following \cite{Claeskens_Hjort_2008}, we can obtain the posterior probability of each model $\Mcal_k$ as follows:
$$
p(\Mcal_k\mid \Df) \propto p(\Df\mid \Mcal_k) p(\Mcal_k) \approx \exp\left(-{\wh{\BIC}_{\Mcal_k}}/{2}\right)p(\Mcal_k).
$$
Therefore, the final model of $\BIC_{\MAvr}$ is a weighted average of different models, where the weights are determined by their BIC values. To derive the prediction and associated uncertainty from the final model,  let 
$\widehat{Y}_{\mathcal{M}_k}$ denote the value predicted by model $\mathcal{M}_k$, 
and let $s^2_{\mathcal{M}_k}$ be the corresponding predictive variance for a new input $(\Uf^\star,\Vf^\star)$. The final predicted value and variance are then given by
$$
\wh{Y}(\Uf^\star,\Vf^\star)=\sum_{k=1}^K w_{\Mcal_k}\wh{Y}_{\Mcal_k}, 
\quad\text{and}\quad s^2(\Uf^\star,\Vf^\star)=\sum_{k=1}^K w_{\Mcal_k}\sqrt{s^2_{\Mcal_k}+\left\{\wh{Y}(\Uf^\star,\Yf^\star)-\wh{Y}_{\Mcal_k}\right\}^2} .
$$
where $w_{\Mcal_k}\propto \exp\left(-{\wh{\BIC}_{\Mcal_k}}/{2}\right)$ and $\sum_{k=1}^K  w_{\Mcal_k} =1$.

\section{Numerical comparisons}\label{sec:num_comp}

In this section, we evaluate the performance of the models in Section~\ref{sec:est_pre} using our framework. 
In the simulation examples, we include three competitive methods for comparison, namely $\EzGP$~\citep{xiao2021ezgp}, $\EEzGP$~\citep{xiao2021ezgp}, and $\ctGP$~\citep{lin2024category}, all of which account for interactions between QQ factors, as discussed in Section~\ref{sec:existing}. In the subsequent computer experiment examples, we focus on further investigating the practical performance and interpretability of our proposed methods. The method names represent Gaussian ($\Gau$), exponential ($\Exp$), and linear ($\Linear$) kernels assigned to the latent variables of qualitative variables. The subscripts denote the dimension of the latent variables or indicate the incorporation of ordinal information ($\ord$) if it exists. We distinguish between the multiplicative ($\multi$) and additive ($\add$) relationships between the qualitative variables using superscripts. For example, $\Gau_{1}^{\multi}$ (or respectively $\Gau_{\ord}^{\multi}$) represents the multiplicative Gaussian kernel with 1-dimensional (or respectively ordinal) latent variables. To mitigate the risk of local optima, the optimization is initialized from 15 random starting points, and the solution yielding the smallest log‑likelihood in \eqref{equ:loglik} is retained.
Building on these base models, our model selection and model averaging strategies, as introduced in Section~\ref{sec:model_s_a}, lead to four methods for evaluation: $\LOOCVlike$, $\LOOCVL$, $\BIC_{\MSel}$, and $\BIC_{\MAvr}$. 

Given the hold-out test points $\Df^{te}=\big\{(\Uf_i^{te},\Vf_i^{te})\big\}_{i=1}^{n_{te}}$, the accuracy is evaluated by the relative root-mean-squared error (RRMSE), which is defined as
$$
\RRMSE = \left\{\frac{\sum_{i=1}^{n_{te}}\left(\wh{Y}(\Uf_i^{te},\Vf^{te}_i)-{Y}(\Uf_i^{te},\Vf^{te}_i)\right)^2}{\sum_{i=1}^{n_{te}}\left({Y}(\Uf_i^{te},\Vf^{te}_i)-\Bar{Y}\right)^2}\right\}^{1/2},
$$
where ${Y}(\Uf_i^{te},\Vf^{te}_i)$ and $\wh{Y}(\Uf_i^{te},\Vf^{te}_i)$ denote the true and predicted values at the input $(\Uf_i^{te},\Vf^{te}_i)$ and $\Bar{Y}=n_{te}^{-1}\sum_{i=1}^{n_{te}}{Y}(\Uf_i^{te},\Vf^{te}_i)$ is the mean of the true responses over the test inputs.

\subsection{Simulation examples}\label{sec:simu}
Following \cite{Zhang:2020gp}, we apply our proposed methods to four real-world engineering models: (i) the beam bending model, (ii) the borehole model, (iii) the output transformerless (OTL) circuit model, and (iv) the piston model. Detailed descriptions of these examples can be found in Section~SI.1. In all examples, qualitative variables are generated from quantitative variables, making them ordinal in nature. For a fair comparison, we employ the same dataset consisting of 30 replicates as used in \cite{Zhang:2020gp}. For each model, the training points were generated using a maximin Latin hypercube design (LHD) \citep{santner2003design}, and 10,000 uniformly distributed test points were used for evaluation. Moreover, the results are insensitive to the choice of experimental design, as illustrated by the additional simulation in Supplementary Section~SII.1, where random sampling and the MaxPro design \citep{joseph2015maximum,joseph2020designing} yield similar performance to the maximin LHD design. The RRMSEs across the 30 replicates are reported in Figure~\ref{fig:4EX}. We summarize the key findings below. 

  \begin{figure}
    \centering
    \includegraphics[width=\linewidth]{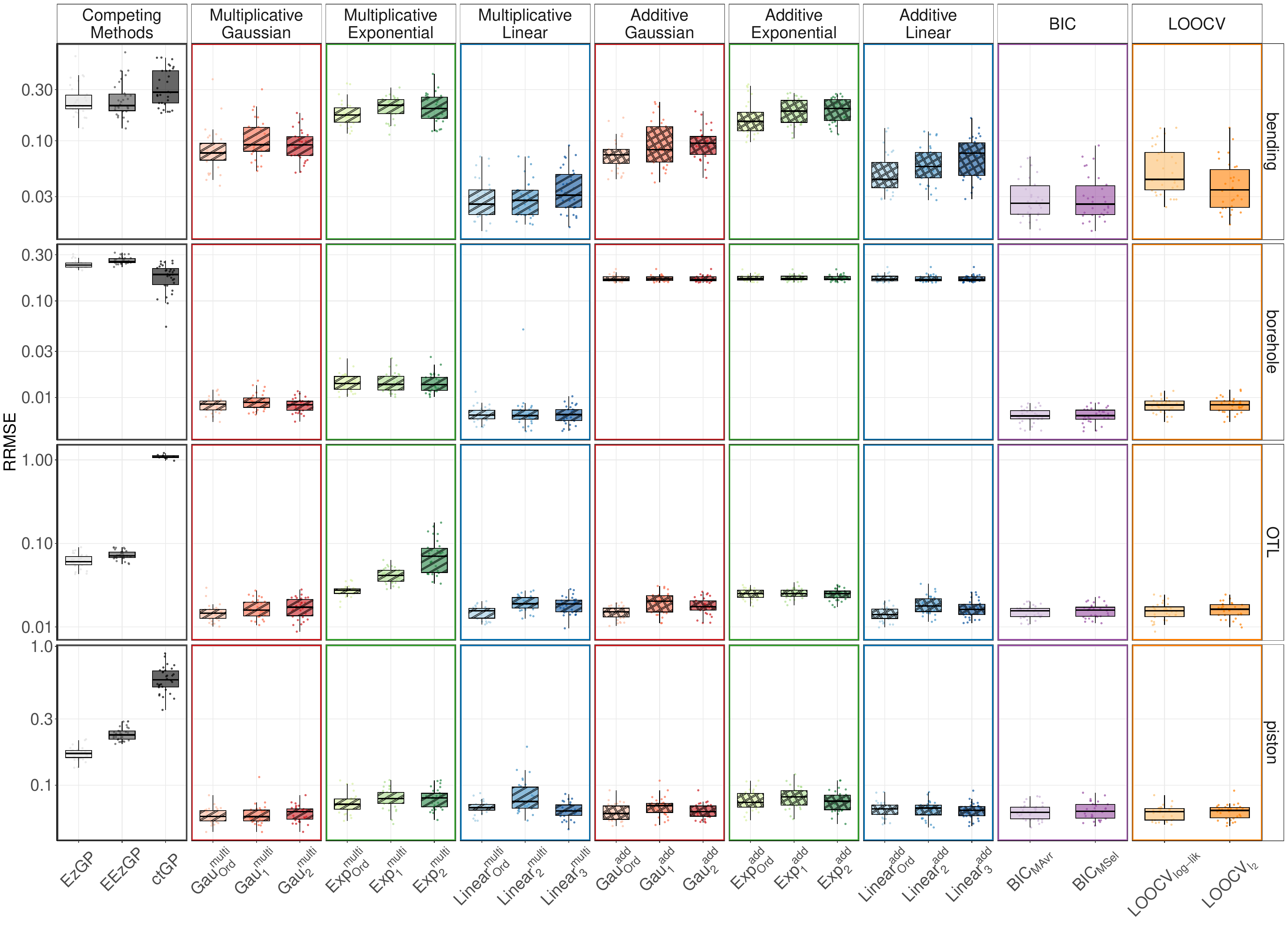}
    \caption{Comparison of RRMSE across different methods and kernel configurations for the beam bending, borehole, OTL circuit, and piston examples. Each boxplot summarizes the results from 30 independent runs with different training sets generated via maximin LHD. All methods were evaluated on the same set of 10,000 uniformly distributed test points.}
    \label{fig:4EX}
\end{figure}

Overall, the three competing methods, especially $\ctGP$, tend to yield higher RRMSE values than the methods under our framework. Our methods equipped with different kernels demonstrate varying strengths across the examples. To better showcase the results, we classify the methods into two categories based on whether they impose additive or multiplicative structures across different qualitative variables. The two classes exhibit distinct performance patterns: (i) the results between the two classes differ significantly in the borehole example; (ii) in the OTL example, the differences between Gaussian, exponential, and linear kernels are less pronounced when an additive structure is imposed, compared to a multiplicative structure. 
Within each class, further comparisons can be made based on the choice of kernel, which determines how the relationships between different levels of each qualitative variable are modeled. 
Methods using the exponential kernel exhibit poor performance, as reflected by the highest RRMSE values. For the other two kernels, methods with the linear kernel perform better than those with the Gaussian kernel in the beam bending, while the former is slightly worse than the latter. For the borehole and OTL examples, they are comparable. These differences can be attributed to the different correlation structures between qualitative variables in the four examples. 

The dimension of the latent vector plays a critical role in determining the performance of methods using the same kernel. For example, in the beam bending example, methods with the linear kernel and a latent vector dimension of $l_j = 2$ ($\Linear^{\multi}_2$ and $\Linear^{\add}_2$) outperform those with $l_j = 3$ ($\Linear^{\multi}_3$ and $\Linear^{\add}_3$). Conversely, in the piston example, the method using the additive Gaussian kernel with $l_j = 2$ ($\Gau_{2}^{\add}$) achieves better performance compared to its counterpart with $l_j = 1$ ($\Gau_{1}^{\add}$). These results demonstrate the importance of selecting an appropriate latent vector dimension. When a low-dimensional latent vector is sufficient to capture the structure of the data, increasing the dimension can introduce additional uncertainty, reduce generalizability, and degrade performance. Conversely, an overly low-dimensional vector may fail to capture the essential relationships and structural complexity. Striking the right balance is essential to ensure the model is both accurate and robust and avoid the pitfalls of over-parameterization or under-specification. 

In this simulation, incorporating the ordinal nature of qualitative variables, when present, generally leads to improved performance. Traditional GP models with QQ inputs often treat ordinal variables as nominal ones. While this approach is convenient, it fails to fully exploit the inherent ordinal structure within the data. In contrast, our methods, which explicitly account for the ordinal structure, consistently have superior performance. Specifically, $\Gau^{\multi}_{\ord}$, $\Gau^{\add}_{\ord}$,  $\Exp^{\multi}_{\ord}$ and $\Exp^{\add}_{\ord}$ in most cases outperform $\Gau^{\multi}_{1}$, $\Gau^{\add}_{1}$, $\Exp^{\multi}_{1}$ and $\Exp^{\add}_{1}$ across all four examples. Moreover, $\Linear^{\multi}_{\ord}/\Linear^{\add}_{\ord}$ outperforms $\Linear_{2}^{\multi}/\Linear_{2}^{\add}$ in the OTL example and performs comparably in the other examples. These findings highlight the limitations of treating ordinal variables as nominal and underscore the importance of leveraging the ordinal structure to enhance modeling accuracy and efficiency. 

Now we investigate the reason behind the improved performance when incorporating ordinal information. Figure~\ref{fig:LV_ORD} visualizes the latent vectors estimated by $\Gau^{\multi}_{1}$, $\Gau^{\multi}_{\ord}$, $\Exp^{\multi}_{1}$, and $\Exp^{\multi}_{\ord}$ in the OTL example where $\Gau^{\multi}_{\ord}$ and $\Exp^{\multi}_{\ord}$ demonstrate significantly better performance than other two.
First, let us focus on the factor $R_f$. The latent vectors estimated by $\Gau_1^{\multi}$ generally follow the order of the levels, with only one exception. This indicates that when using the Gaussian kernel, the ordinal structure is sufficiently strong and can be effectively learned for $R_f$ in this example. This observation reinforces the rationale for using ordinal information. When the exponential kernel is applied, treating the ordinal variable as nominal (i.e., $\Exp_1^{\multi}$) introduces more noise, as the estimated latent vectors exhibit different ordering patterns across replications. 
By incorporating ordinal information, $\Exp_{\ord}^{\multi}$ significantly enhances its predictive power. For the additive kernel, the phenomenon is similar and can be found in Supplementary Figure~S1. Next, let us examine $\beta$, which has six levels and poses a greater challenge. Both $\Gau_1^{\multi}$ and $\Exp_1^{\multi}$ exhibit noisy latent vector estimations. In many replications, most latent vectors corresponding to the six levels estimated by $\Exp_\ord^{\multi}$ are identical, which indicates that the different levels of $\beta$ are difficult to distinguish using this dataset. In such cases, estimating too many parameters for the latent parameterization may lead to overfitting. By comparing the methods that treat the variable as nominal ( $\Exp_1^{\multi}$ and $\Gau_1^{\multi}$) with those that impose an ordinal structure ( $\Exp_{\ord}^{\multi}$ and $\Gau_{\ord}^{\multi}$), it becomes clear that incorporating ordinal constraints has the effect of regularizing parameter estimation and hence enhancing generalizability in this case.

\begin{figure}[htbp]
    \centering
    \subfigure[$R_f$, multiplicative Gaussian kernel.]{
        \includegraphics[width=0.45\linewidth]{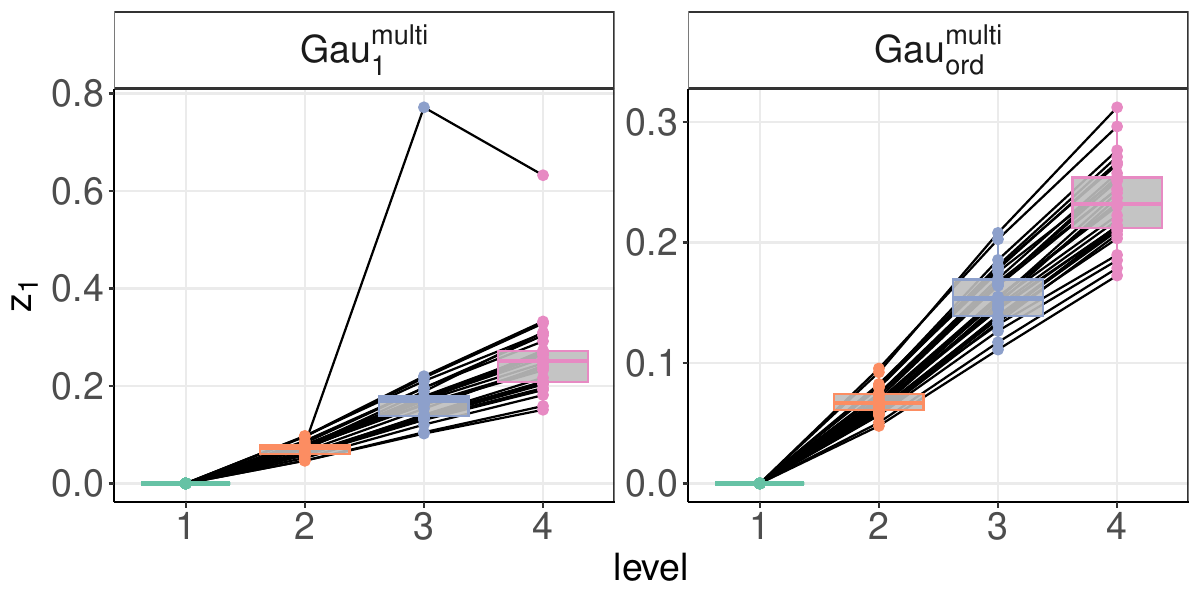}
        \label{fig:lv_ord_rf_gau}
    }
    \subfigure[$R_f$, multiplicative exponential kernel.]{
        \includegraphics[width=0.45\linewidth]{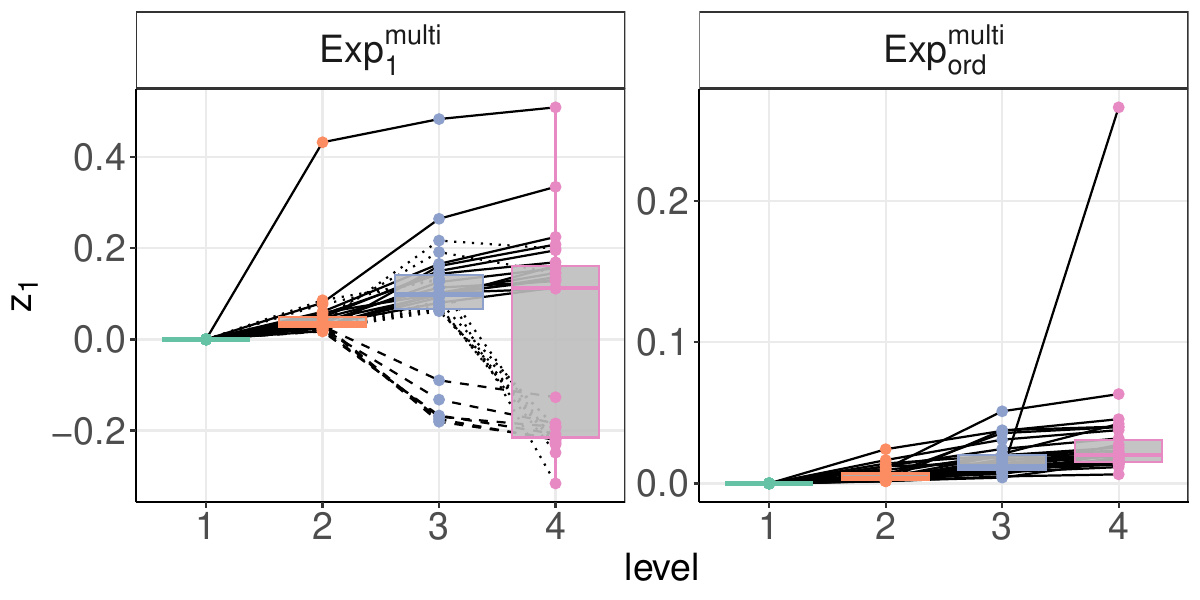}
        \label{fig:lv_ord_rf_exp}
    }
    \vskip\baselineskip 
    \subfigure[$\beta$, multiplicative Gaussian kernel.]{
        \includegraphics[width=0.45\linewidth]{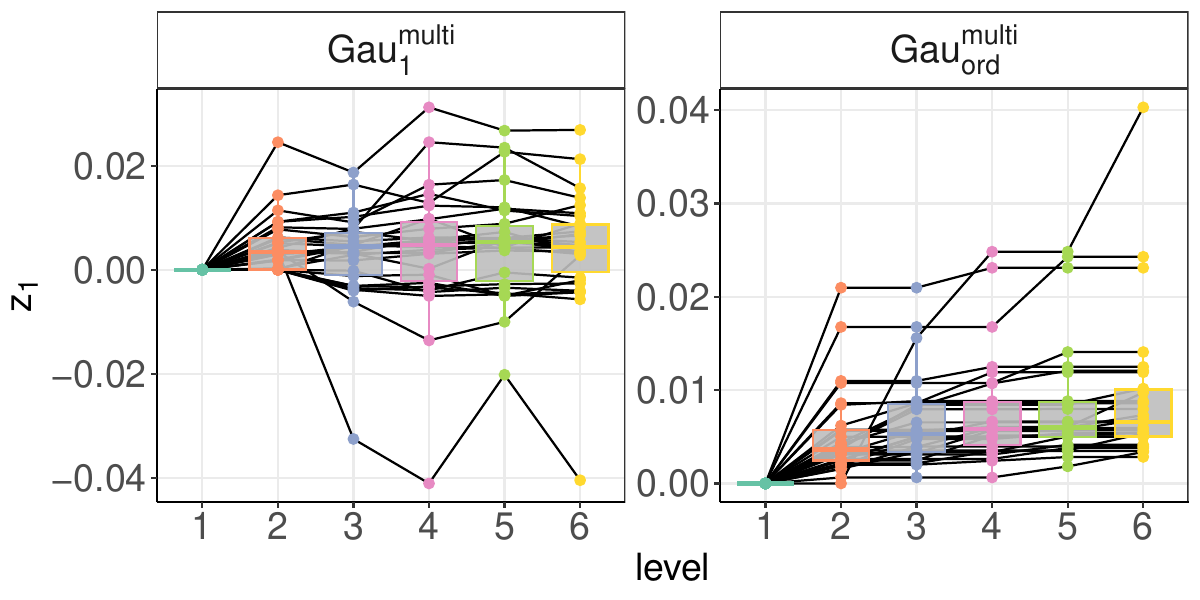}
        \label{fig:lv_ord_beta_gau}
    }
    \subfigure[$\beta$, multiplicative exponential kernel.]{
        \includegraphics[width=0.45\linewidth]{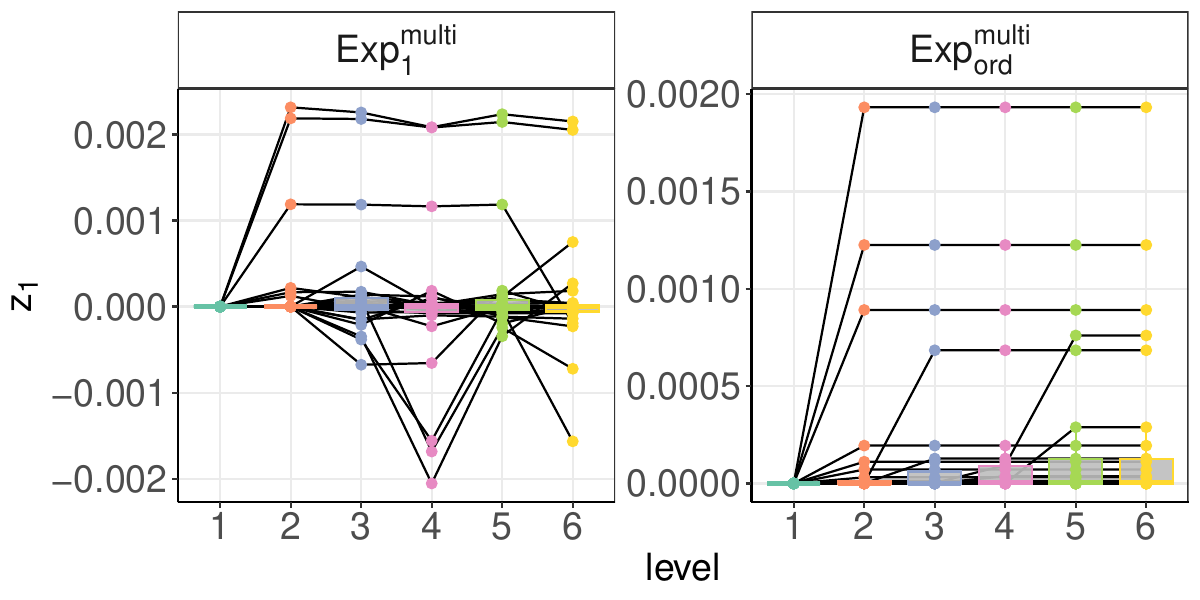}
        \label{fig:lv_ord_beta_exp}
    }
    \caption{The boxplots and scatter points depict the latent vectors $z_1$ for resistance $R_f$ and current gain $\beta$, estimated by $\Gau^{\multi}_{1}$, $\Gau^{\multi}_{\ord}$, $\Exp^{\multi}_{1}$, and $\Exp^{\multi}_{\ord}$, respectively, across 30 replications in the OTL example. Points from the same replication are connected by lines. In the $R_f$ panel for $\Exp^{\multi}_{1}$, different line types indicate different ordering patterns of the estimated latent vectors: solid lines correspond to replications where the original level order is preserved, dashed lines indicate a reversal between the second and third levels, and dotted lines indicate a swap between the third and fourth levels. }
    \label{fig:LV_ORD}
\end{figure}

Since assigning different kernels yields varying performance, it is important to select an appropriate one. We compare three model selection strategies ($\BIC_\MSel$, $\LOOCVL$, and $\LOOCVlike$) and one model averaging strategy ($\BIC_{\MAvr}$) proposed in Section \ref{sec:model_s_a} through their normalized RRMSE ranks, where a lower rank indicates better (i.e., lower) RRMSE performance. As shown in Table~\ref{tab:rank}, these strategies perform satisfactorily in general because they always select methods ranked in the top half. For the model selection strategies, $\LOOCVlike$ achieves relatively stable ranks, which fall within the top 20\% to 30\%, and shows competitive performance in the OTL and piston examples. In contrast, $\BIC_\MSel$ and $\BIC_{\MAvr}$ achieve the best performance in the bending and borehole examples, while their performance is less favorable in the piston example. Here, $\BIC_{\MAvr}$ achieves more stable performance across various scenarios and demonstrates better performance than $\BIC_{\MSel}$ in both the OTL and piston examples.

\begin{table}[h!]
\centering
\caption{Normalized rank of RRMSE for $\BIC_{\MAvr}$, $\BIC_{\MSel}$, $\LOOCVlike$, and $\LOOCVL$ across simulation types. The ranks for $\BIC_{\MSel}$, $\LOOCVlike$, and $\LOOCVL$ are computed among the 18 base models based on ascending RRMSE, whereas the rank for $\BIC_{\MAvr}$ is determined by including its RRMSE alongside the 18 base models. A lower rank implies a lower RRMSE value. The results are summarized as median, mean, and standard deviation (SD) across replications.}
\label{tab:rank}
\resizebox{\textwidth}{!}{%
\begin{tabular}{lcccccccccccc}
\toprule
\multirow{2}{*}{Method} &  \multicolumn{3}{c}{{bending}} &  \multicolumn{3}{c}{{borehole}}&  \multicolumn{3}{c}{{OTL}} &  \multicolumn{3}{c}{{piston}} \\
\cmidrule{2-13}
                & {Median} & {Mean} & {SD} & {Median} & {Mean} & {SD} & {Median} & {Mean} & {SD} & {Median} & {Mean} & {SD} \\
\midrule
$\BIC_{\MAvr}$ & 0.105 & 0.125 & 0.040 & 0.105 & 0.130 & 0.047 & 0.263 & 0.254 & 0.136 & 0.342 & 0.321 & 0.211\\
$\BIC_{\MSel}$ & 0.111 & 0.119 & 0.111 & 0.111 & 0.113 & 0.068 & 0.278 & 0.285 & 0.183 & 0.389 & 0.398 & 0.236\\
$\LOOCVlike$ & 0.222 & 0.248 & 0.106 & 0.250 & 0.237 & 0.076 & 0.278 & 0.280 & 0.178 & 0.222 & 0.302 & 0.211\\
$\LOOCVL$ & 0.111 & 0.157 & 0.117 & 0.278 & 0.246 & 0.071 & 0.278 & 0.296 & 0.179 & 0.333 & 0.370 & 0.209\\
\bottomrule
\end{tabular}}
\end{table}

To sum up, this simulation study highlights the significance of selecting appropriate kernels, optimizing latent vector dimensions, leveraging ordinal structures, and employing effective model selection or averaging strategies. To further examine the prediction accuracy and computational cost for various dimensions, we conduct additional experiments on the borehole example with different discretization degrees. Our methods achieve the lowest RRMSE in moderate dimension. As an example of our approach, $\Gau_{\ord}^{\multi}$ delivers more accurate predictions than $\ctGP$, $\EzGP$, and $\EEzGP$ across all degrees within a reasonable computational time. Interestingly, a trade-off between accuracy and time can be seen. For instance, while $\ctGP$ becomes more accurate with finer discretization, its computational cost increases more rapidly than other methods. Another observation is that $\Linear_{\ord}^{\multi}$ attains higher predictive accuracy than both $\Linear_{2}^{\multi}$ and $\Linear_{1}^{\multi}$ at the expense of a higher computational cost. The detailed setups and results are provided in Supplementary Section~SII.2.

\subsection{A 3D coupled finite element model for embankments}\label{sec:3D}
In this section, we apply various methods to a fully 3D coupled finite element model, which has been rigorously validated for its effectiveness in capturing the deformations and stresses of full-scale embankments \citep{LIU2015567}. 
The corresponding computer experiments involve one quantitative factor and three qualitative factors. The quantitative factor $u_1 \in [0, 14]$ (in $m$) represents the distance from the embankment centerline to the embankment shoulder, taking 29 uniformly spaced values. The three qualitative factors are the embankment construction rate $v_1 \in \{1, 5, 10\}$ (in $m$/month), the Young’s modulus of columns $v_2 \in \{50, 100, 200\}$ (in MPa), and the reinforcement stiffness $v_3 \in \{1578, 4800, 8000\}$ (in kN/$m$). This example is particularly suitable for studying scenarios with a limited number of levels for qualitative variables, as each qualitative factor here has only three levels. 

As described in \cite{Deng:2017wf} and \cite{Kang2020}, for each value of the quantitative factor, a three-level fractional factorial design with nine runs is employed for the qualitative factors, resulting in a total of 261 design points. The test dataset consists of 29 input settings, where $u_1$ takes 29 equally spaced values over interval $[0,14]$, and the qualitative factors are fixed at $(v_1, v_2, v_3) = (5, 100, 4800)$. Among the design points, two points $(u_1, v_1, v_2, v_3) \in \{(14,5,200,8000), (14,10,200,1578)\}$ are identified as having potentially reversed labels and are excluded from the training process (see Supplementary Section~SI.3 for details). Additionally, training on the entire dataset fails. A plausible explanation is that some points that are too close to each other can cause a near singularity when computing $\Rf^{-1}$ in \eqref{equ:loglik}. To address this, subsets of 3, 5, and 7 points are randomly selected at each combination of the qualitative factors, resulting in training sets with 27, 45, and 63 design points, respectively. To evaluate the performance of each method, the RRMSE is calculated using (i) the remaining points in the training dataset and (ii) the independent test dataset.

As shown in Figures~\ref{fig:U3_train_test} and~\ref{fig:U3_test}, increasing the latent dimension generally improves prediction accuracy. Since the number of levels is relatively low (three), increasing the dimension from one to two for Gaussian and exponential kernels or from two to three for Gaussian kernels only requires estimating one additional parameter, which enhances model flexibility with a slight increase in complexity.

\begin{figure}
    \centering
    \includegraphics[width=\linewidth]{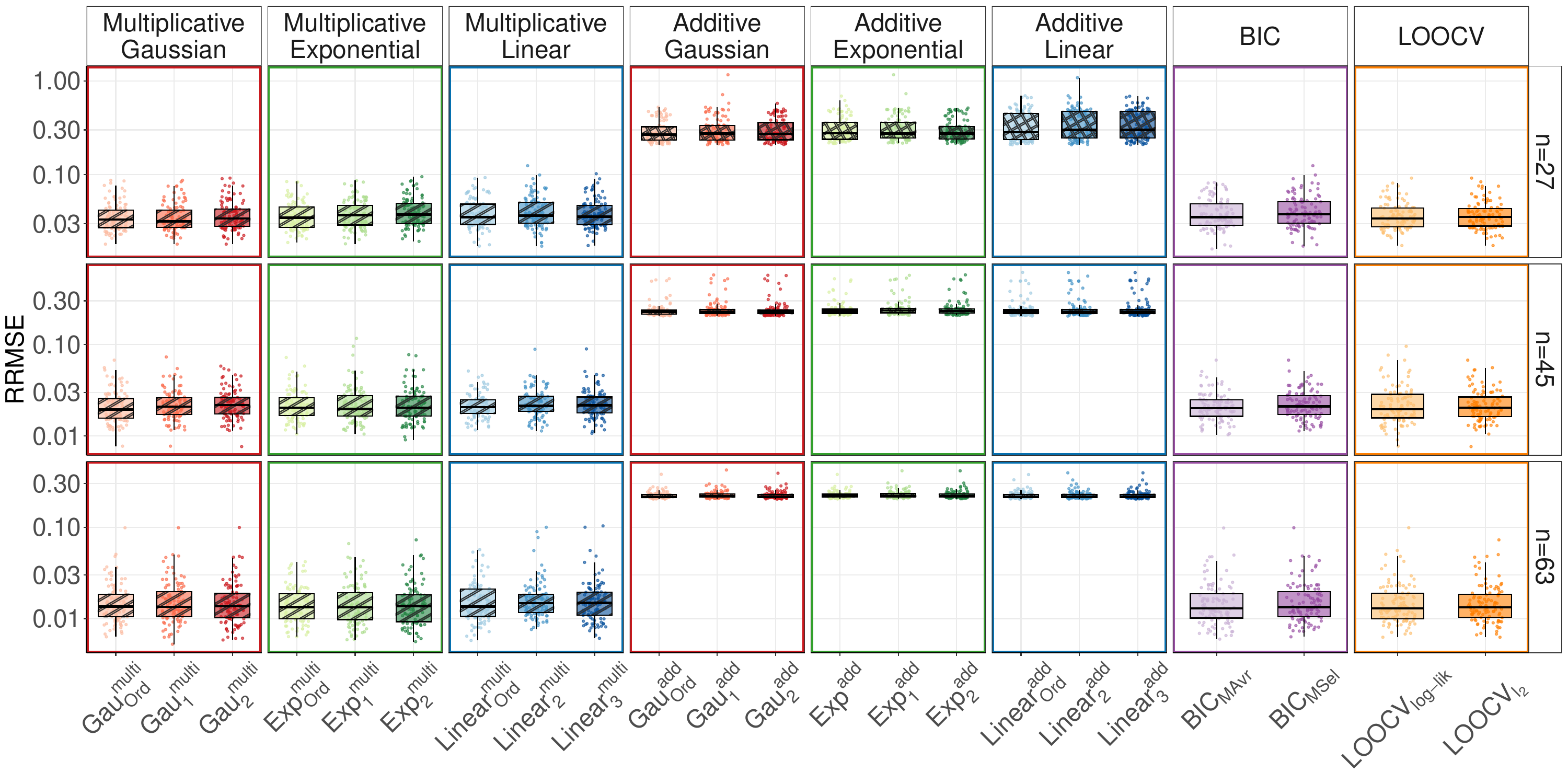}
    \caption{Comparison of RRMSE across different methods and kernel configurations for the 3D coupled finite element model for embankments, evaluated on the remaining points of the training dataset. Each boxplot summarizes results from 100 runs. In each run, the model is trained using a subset of 27, 45, or 63 design points randomly sampled from the training dataset.}
    \label{fig:U3_train_test}
\end{figure}

\begin{figure}
    \centering
    \includegraphics[width=\linewidth]{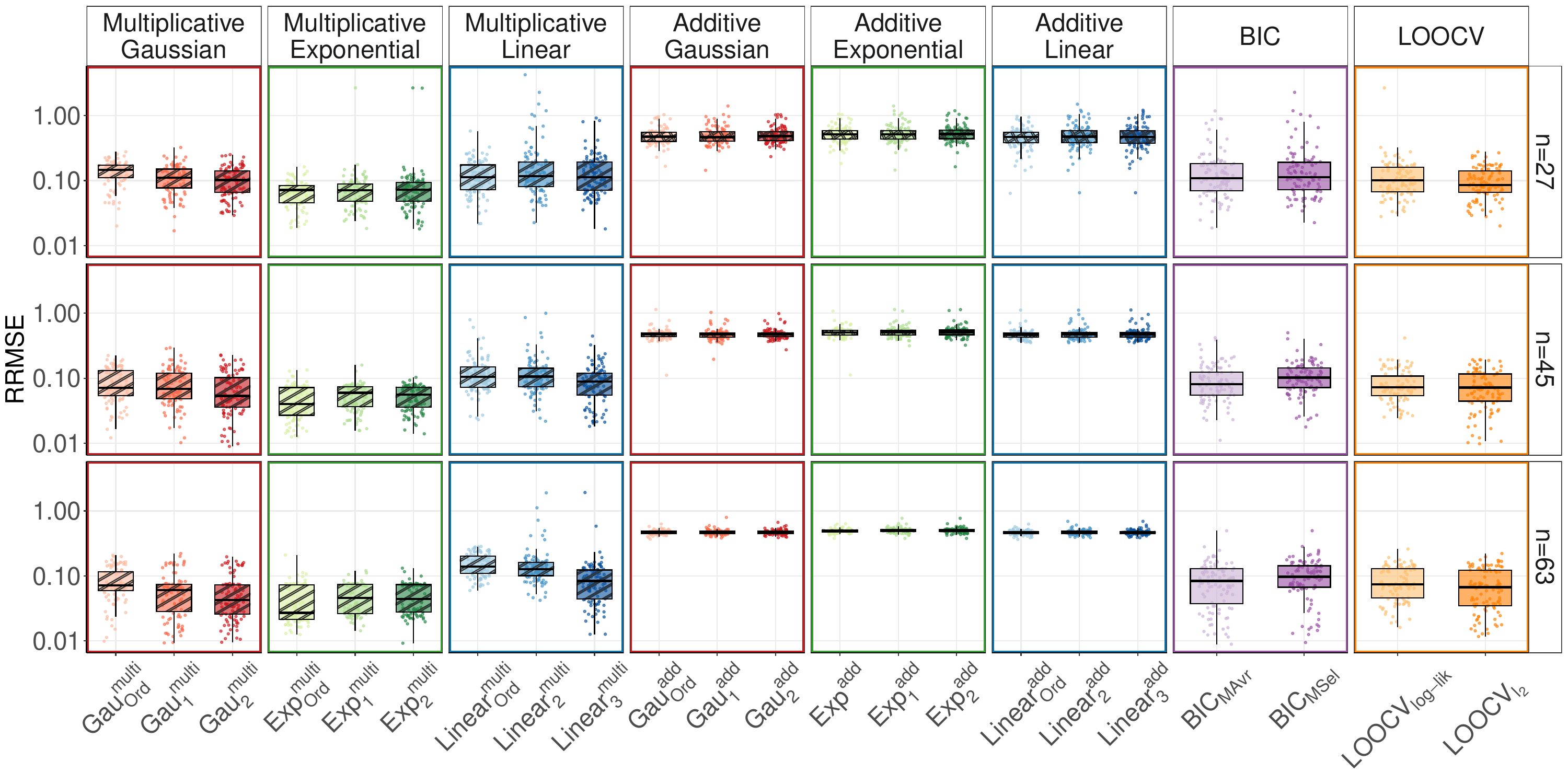}
    \caption{Comparison of RRMSE across different methods and kernel configurations for the 3D coupled finite element model for embankments, evaluated on the independent test dataset. Each boxplot summarizes results from 100 runs. In each run, the model is trained using a subset of 27, 45, or 63 design points randomly sampled from the training dataset.}
    \label{fig:U3_test}
\end{figure}

\subsection{A material design example}\label{sec:material}

We then investigate a material design example focusing on the elastic and mechanical properties of materials \citep{balachandran2016adaptive}. The dataset consists of 223 compounds from the $\mathrm{M}_2\mathrm{AX}$ family, with their elastic properties computed using density functional theory and the planewave/core potential formalism \citep{Cover_2009}. The responses include the bulk modulus, shear modulus, and Young’s modulus. This example involves three nominal variables, each with multiple levels: the M atom has ten levels $\{${Sc}, {Ti}, {V}, {Cr}, {Zr}, {Nb}, {Mo}, {Hf}, {Ta}, {W}$\}$, the A atom has two levels $\{${C}, {N}$\}$, and the X atom has twelve levels $\{${Al}, {Si}, {P}, {S}, {Ga}, {Ge}, {As}, {Cd}, {In}, {Sn}, {Tl}, {Pb}$\}$. In addition, the M, A, and X atoms are associated with three, two, and two quantitative features, respectively, that describe their physical properties. Using the shear modulus as a response, previous studies have demonstrated the importance of incorporating qualitative variables into the prediction \citep{Zhang:2020gp}. Here, we extend this investigation by comparing the performance of various kernel configurations across all three responses. Following the previous approach, we randomly select 200 data points from a total of 223 as the training set, with the remaining 23 data points reserved for testing. The evaluation is repeated 10 times.

\begin{figure}
    \centering
    \includegraphics[width=\linewidth]{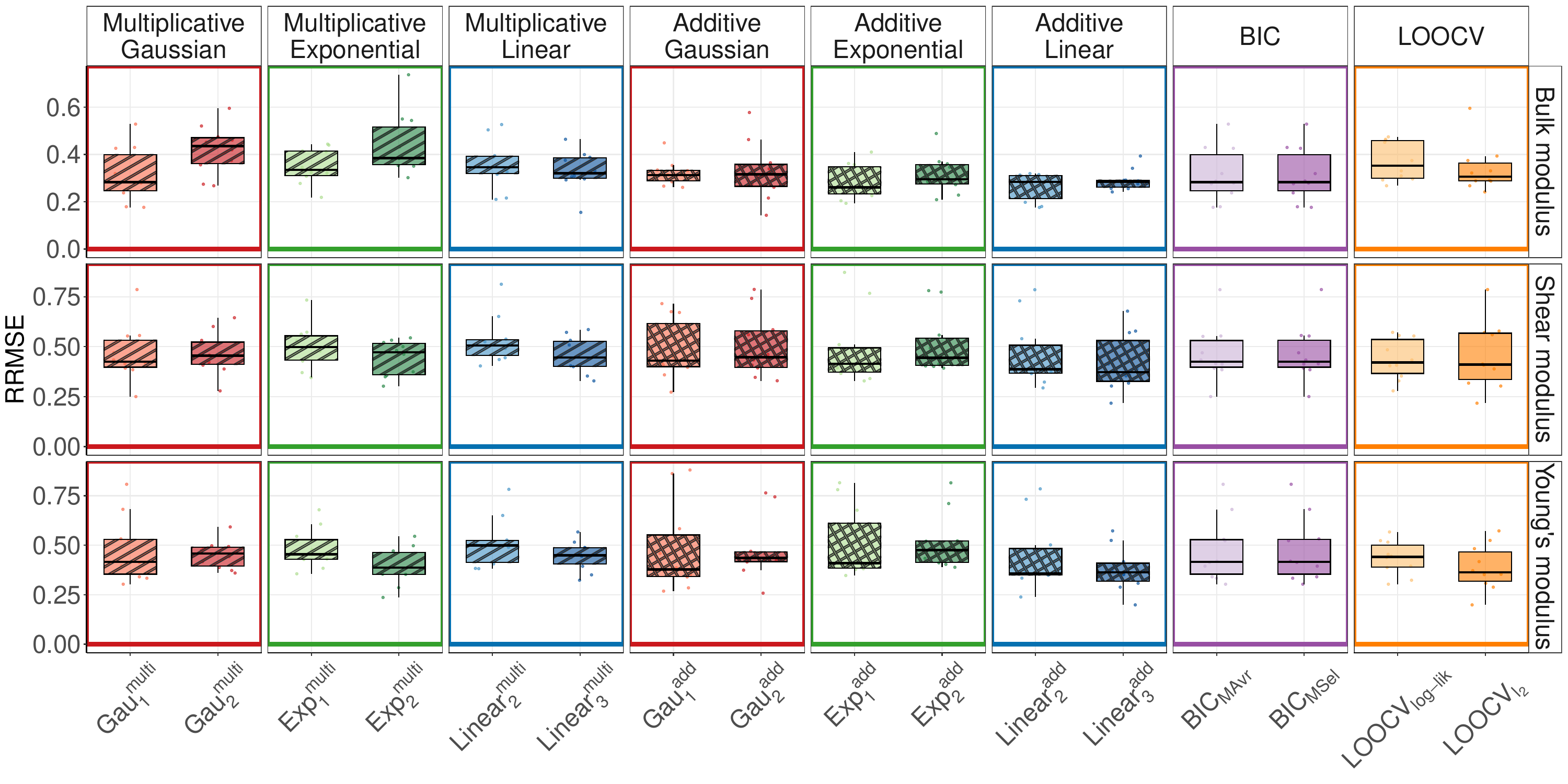}
    \caption{Comparison of RRMSE across different methods and kernel configurations for the material design example. Each boxplot reflects results from 10 resampling runs, where 200 points are used for training and 23 points for testing in each run.}
    \label{fig:ada_design}
\end{figure}

\begin{table}[h!]
\centering
\caption{Summary of RRMSE for 12 base models, three model selection methods, and one model averaging method. Results are presented as median, mean, and standard deviation (SD) across replications. The smallest RRMSE mean and median values are presented in boldface for each response.}
\label{tab:rrmse_3}
\resizebox{\textwidth}{!}{%
\begin{tabular}{ccccccccccccccccccc}
\toprule
&&  \multicolumn{16}{c}{{Method}}  \\
\cmidrule{3-18}
  Modulus&Criterion& \multicolumn{2}{c}{$\Gau^{\multi}$}   &\multicolumn{2}{c}{$\Exp^{\multi}$}   & \multicolumn{2}{c}{$\Linear^{\multi}$}  &  \multicolumn{2}{c}{$\Gau^{\add}$}    & \multicolumn{2}{c}{$\Exp^{\add}$}     &\multicolumn{2}{c}{$\Linear^{\add}$}   & \multicolumn{2}{c}{$\BIC$} & \multicolumn{2}{c}{$\LOOCV$} \\
\cmidrule(r){3-4} \cmidrule(lr){5-6} \cmidrule(lr){7-8} \cmidrule(lr){9-10} \cmidrule(lr){11-12} \cmidrule(lr){13-14} \cmidrule(lr){15-16} \cmidrule(l){17-18}
  && 1-d & 2-d & 1-d & 2-d  & 2-d  &3-d &  1-d & 2-d & 1-d & 2-d  & 2-d  &3-d  & $\MAvr$ & $\MSel$ & $\mathsf{log-lik}$ & $\mathsf{l_2}$\\
\midrule
 &Mean &  0.314 & 0.419 & 0.350 & 0.441 & 0.357 & 0.331 & 0.320 & 0.327 & 0.284 & 0.315 & \textbf{0.263} & 0.291 & 0.314 & 0.314 & 0.372 & 0.339\\
Bulk&Median & 0.283 & 0.435 & 0.335 & 0.384 & 0.345 & 0.320 & 0.312 & 0.316 & \textbf{0.260} & 0.295 & 0.284 & 0.284 & 0.283 & 0.283 & 0.352 & 0.306\\
&SD & 0.115 & 0.104 & 0.075 & 0.132 & 0.104 & 0.083 & 0.054 & 0.123 & 0.073 & 0.081 & 0.058 & 0.045 & 0.115 & 0.115 & 0.083 & 0.101\\
\midrule
&Mean & 0.465 & 0.468 & 0.499 & 0.442 & 0.531 & 0.457 & 0.481 & 0.504 & 0.486 & 0.510 & 0.459 & \textbf{0.418} & 0.465 & 0.465 & 0.437 & 0.454\\
Shear &Median &  0.424 & 0.455 & 0.498 & 0.472 & 0.506 & 0.445 & 0.430 & 0.447 & 0.415 & 0.445 & 0.387 & \textbf{0.373} & 0.424 & 0.424 & 0.421 & 0.410\\
&SD & 0.143 & 0.106 & 0.113 & 0.088 & 0.120 & 0.091 & 0.151 & 0.160 & 0.185 & 0.149 & 0.170 & 0.145 & 0.142 & 0.143 & 0.102 & 0.169\\
\midrule
&Mean &  0.475 & 0.452 & 0.482 & 0.396 & 0.508 & 0.446 & 0.478 & 0.478 & 0.501 & 0.514 & 0.443 & \textbf{0.377} & 0.475 & 0.475 & 0.435 & 0.387\\
Young&Median & 0.416 & 0.458 & 0.454 & 0.387 & 0.499 & 0.449 & 0.377 & 0.436 & 0.409 & 0.475 & \textbf{0.358} & 0.363 & 0.416 & 0.416 & 0.441 & 0.363\\
&SD & 0.163 & 0.071 & 0.101 & 0.095 & 0.126 & 0.075 & 0.225 & 0.157 & 0.182 & 0.143 & 0.179 & 0.109 & 0.163 & 0.163 & 0.086 & 0.114\\
\bottomrule
\end{tabular}
}
\end{table}

The RRMSEs across ten replicates for the three responses are shown in Figure~\ref{fig:ada_design}, with a summary provided in Table~\ref{tab:rrmse_3}. In this example, either $\Linear^{\add}_2$ or $\Linear^{\add}_3$ generally achieves the lowest mean or median RRMSE, with $\Linear^{\add}_3$ exhibiting lower variation. Assigning either an additive or multiplicative kernel results in comparable performance; however, the additive kernel generally performs slightly better.

 To understand this difference, we visualize the relative weights $\{\psi_j\}$ of the additive kernel in Figure~\ref{fig:rel_sig}. Denote the X-atom as the $j_0$th qualitative variable. We observe that its relative weights $\psi_{j_0}$ are close to zero, which indicates that the X-atom contributes minimally to the variation in the response. In other words, there is little distinction between different levels of the X-atom. The same effect can also be achieved by using the multiplicative kernel, provided that $\tau_{v,v^\prime}^{j_0} \approx 1$ in~\eqref{equ:covQian} for all $v, v^\prime \in \{1, \ldots, a_{j_0}\}$, i.e., the correlations between any two levels of the X-atom are approximately one. When the levels are represented via latent variables, this condition says that the latent variables associated with different levels are nearly identical. However, to determine such a structure accurately would require a large amount of data. 
Consequently, when a qualitative variable has little or no effect on the response, the use of the additive kernel may capture this pattern more effectively.

\begin{figure}
    \centering
    \includegraphics[width=\linewidth]{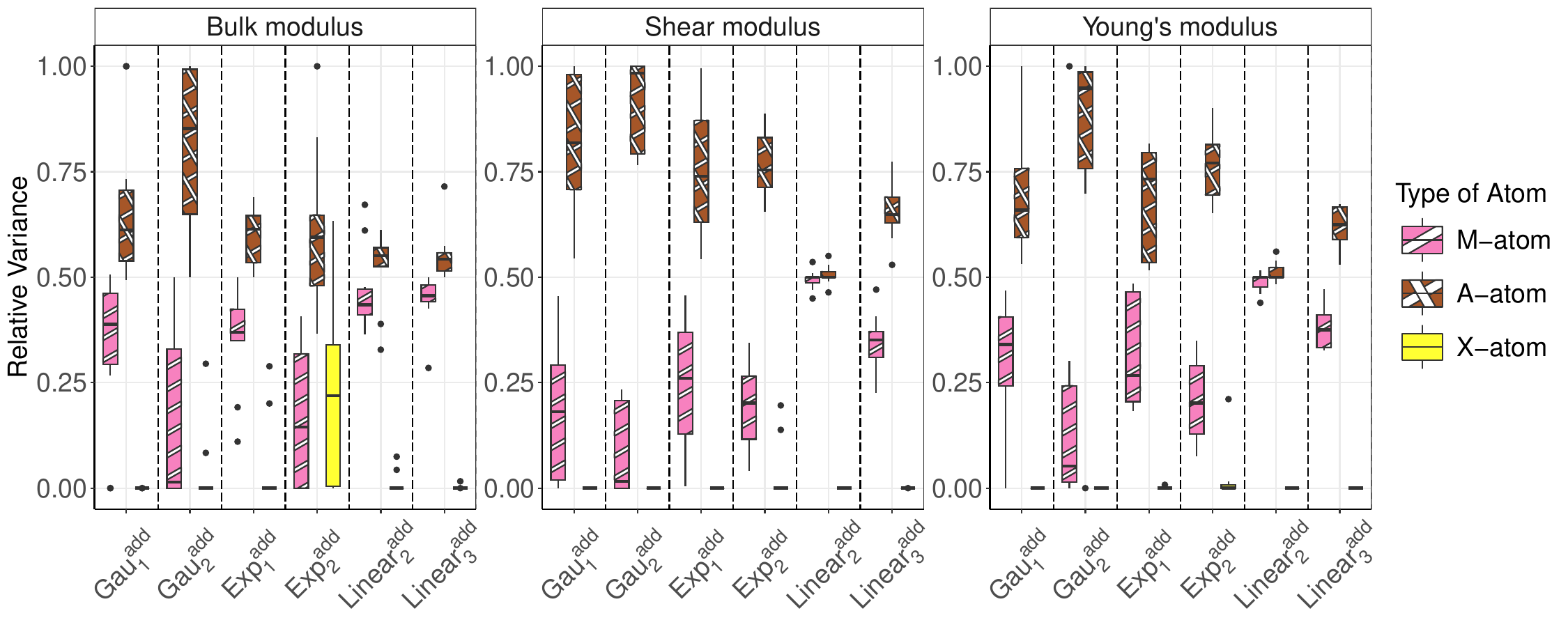}
    \caption{Comparison of relative weights $\psi_j$ defined in \eqref{equ:add_equ} for methods with additive kernels in the material design example. }
    \label{fig:rel_sig}
\end{figure}

\section{Summary remarks and further discussion}\label{sec:summary}
In this paper, we introduce a general framework for modeling QQ factors using GP. The main idea comes from an insightful approach by \cite{Zhang:2020gp}, which maps each qualitative factor into a continuous latent space. We realize that this approach includes many existing models as special cases by employing multiplicative or additive structures and by adopting kernel functions—such as Gaussian, exponential, and linear kernels—for the latent vectors of each qualitative variable. We systematically evaluate the performance of these models. Methods with the linear kernel achieve superior performance in certain cases. Overall, models using multiplicative kernels tend to outperform those using additive kernels, except when some qualitative variables are inactive in predicting the response. Moreover, leveraging ordinal information can improve performance. Finally, both model averaging and model selection strategies effectively identify appropriate models and yield satisfactory predictive accuracy. 

Using this framework, we establish two important connections that enhance both conceptual understanding and practical implementation. 
First, our framework bridges the gap between quantitative input-only modeling and QQ input modeling. This connection enables the extension of techniques from quantitative input-only Gaussian processes to situations involving both quantitative and qualitative inputs. A common approach is to transform qualitative levels into dummy variables and then apply kernels designed for quantitative variables. While this method is straightforward to implement, it often complicates the interpretation of learned scale parameters. Beyond the approach taken by \cite{Zhang:2020gp}, we take a further step by studying its compatibility with other kernels for quantitative inputs. This can facilitate future work to develop more complex kernels for QQ inputs. However, careful attention must be paid to improve interpretability and tackle computational challenges arising from the discrete nature of qualitative inputs.

Second, we demonstrate that this framework unifies many existing approaches by choosing different kernels, as shown in Section~\ref{sec:existing}. For these models, we propose model selection or model averaging procedures, the effectiveness of which is confirmed by the simulation and numerical results in Section~\ref{sec:num_comp}. Besides unifying existing approaches, we find that assigning a low-dimensional linear kernel is equivalent to imposing a low-rank structure on the correlation matrix across different levels. The effectiveness of low-rank structures has been well demonstrated in many fields, such as economics \citep{fan2008high}, epidemiology \citep{zhong2024reduced}, and engineering \citep{chang2021reduced}.

Finally, our framework can easily handle ordinal variables by imposing constraints on the latent vectors to incorporate the ordinal information. To facilitate computation, we transform these constraints into unconstrained forms for the Gaussian and exponential kernels and box constraints for the linear kernel. The advantages of incorporating ordinal information are demonstrated by our simulation results. However, one potential limitation is that we require the latent vector to have length one for the Gaussian and exponential kernels, and length two for the linear kernel, which may affect its generality. For instance, \cite{Qian:2008ts} proposed restricting the correlation to be non-increasing along the ordinal levels, a theoretically rigorous but computationally demanding approach. Further investigation into optimal strategy for incorporating ordinal information is left for future research.

\appendix
\section{Proofs of Theorem~\ref{thm:general_linear} and Propositions~\ref{prop:unique_linear} and \ref{prop:unique_isotropic}}\label{sec:proof}

\begin{proof}[Proof of Theorem~\ref{thm:general_linear}]
Under the linear kernel, the latent parameterization $\{\Wf_{\Vf}\}$ can be obtained through the Cholesky decomposition of the kernel matrix induced by the latent parameterization $\{\Zf_{\Vf}\}$ under the Mercer kernel. 

Specifically, let $\Kf^{(j)}_{\{\Zf_\Vf\}}$ denote the kernel matrix of the $j$th qualitative variable, where the $(v,v^\prime)$th element is given by $ K_j^\Zf(\zf^{(j)}_{v}, \zf^{(j)}_{v^\prime})$ for $v,v^\prime\in\{1,\cdots,a_j\}$. The Cholesky decomposition of $\Kf^{(j)}_{\{\Zf_\Vf\}}$ is expressed as $
\Kf^{(j)}_{\{\Zf_\Vf\}} = \left(\Rf^{(j)}\right)^\top \left(\Rf^{(j)}\right)$, where $\Rf^{(j)} = (\rf^{(j)}_1, \rf^{(j)}_2, \cdots, \rf^{(j)}_{a_j})$ is an upper triangular matrix of dimension $a_j \times a_j$. 
Defining $\wf^{(j)}_{i} = \rf^{(j)}_i$ for $i \in \{1, 2, \cdots, a_j\}$, we then obtain a latent parameterization such that $K_j^\Zf(\zf_{v}^{(j)},\zf_{v^\prime}^{(j)})=K_j^\Wf(\wf_{v}^{(j)},\wf_{v^\prime}^{(j)})$ for all $v,v^\prime\in\{1,\cdots,a_j\}$.

Since the structure between different qualitative variables is either multiplicative as in \eqref{equ:multi_equ} or additive as in \eqref{equ:add_equ}, the equality $K_\Zf\left(\Zf_\Vf, \Zf_{\Vf^\prime}\right) = K_\Wf\left(\Wf_\Vf, \Wf_{\Vf^\prime}\right)$ holds for all $\Vf, \Vf^\prime \in \times_{j=1}^J \{1, 2, \cdots, a_j\}$.
\end{proof}

\begin{proof}[Proof of Proposition~\ref{prop:unique_linear}]  
\textit{Proof of (a).}  
\textit{Existence.} We perform the QR decomposition  
$$
\big(\zf_{1}^{(1)}, \zf_{2}^{(1)},\cdots,\zf_{l_1}^{(1)}\big) = \Qf \Rf,
$$  
where $\Qf \in \Rb^{l_1 \times l_1}$ is an orthogonal matrix, and $\Rf \in \Rb^{l_1 \times l_1}$ is an upper-triangular matrix with positive diagonal entries. Define  
$$
\big(\wf_{1}^{(1)},\wf_{2}^{(1)},\cdots,\wf_{a_1}^{(1)}\big) := \left(\Rf, \Qf^\top \zf_{l_1+1}^{(1)}, \cdots, \Qf^\top \zf_{a_1}^{(1)}\right).
$$  
$\{\Zf_\Vf\}$ and $\{\Wf_\Vf\}$ are equivalent under the linear kernel because they satisfy the desired condition in Lemma~S1. 

\textit{Uniqueness.} We have established the existence of such a latent parameterization. According to Lemma~S1, any equivalent parameterization satisfies
$$
\big(\zf_{1}^{(1)}, \zf_{2}^{(1)}, \cdots, \zf_{a_1}^{(1)}\big) = \Qf \big(\wf_{1}^{(1)}, \cdots, \wf_{a_1}^{(1)}\big)
$$  
for some orthogonal matrix $\Qf\in\Rb^{l_1\times l_1}$. 
The requirements of $\Wf_{\Vf}$ in (a) ensure the uniqueness of the QR decomposition, and consequently $\Qf$, $\wf_{2}^{(1)}, \cdots, \wf_{l_1}^{(1)}$. The uniqueness of the remaining vectors is derived by solving the remaining $a_1 - l_1$ columns of \eqref{equ:unique_proof_iso}.

\textit{Proof of (b).} Cartesian coordinates can be uniquely expressed in hyperspherical coordinates as 
$$
w_{v,l}^{(1)} = r_{v} \cos\left(\theta^{(1)}_{v,l}\right) \prod_{\iota=1}^{l-1} \sin\left(\theta^{(1)}_{v,\iota}\right) \ \text{for}\ 1\leq l<l_1 -1 
\ \text{and}\  
w_{v,l}^{(1)} = r_{v} \prod_{\iota=1}^{l} \sin\left(\theta^{(1)}_{v,\iota}\right)  \ \text{for}\ l=l_1 -1,  
$$  
where $v\in\{1,\cdots,a_1\}$, $0\leq \theta^{(1)}_{v,l}\leq \pi$ for $1<l<l_1-1$ and $0\leq \theta^{(1)}_{v,l}\leq 2\pi$ for $l=l_1 -1$. In the above formula, $ r_{v} = \|\wf_{v}\|_2 = \|\zf_{v}\|_2 = 1 $ for $ 1 \leq v \leq a_1 $. 

To ensure $ w_{v,l}^{(1)} = 0 $ for all $ l > v $, we require $ \theta^{(1)}_{v,l} = 0 $ for $ l = v $. Additionally, we impose $ \theta^{(1)}_{v,l} = 0 $ for $ l > v $ to preserve the equality. Finally, to ensure $ w_{v,v}^{(1)} > 0 $ for $ 1 \leq v \leq l_1 $, we require $ 0 \leq \theta^{(1)}_{v,v-1} \leq \pi $ (rather than less than $ 2\pi $) for $ 1 \leq v \leq l_1 $. These constraints are exactly the ones as described in (b).
\end{proof}

\begin{proof}[Proof of Proposition~\ref{prop:unique_isotropic}]
\textit{Existence.} First, perform the QR decomposition:
$$
\big(\zf_{2}^{(1)} - \zf_{1}^{(1)}, \cdots, \zf_{l_1+1}^{(1)} - \zf_{1}^{(1)}\big) = \Qf \Rf,
$$
where $\Qf \in \Rb^{l_1 \times l_1}$ is an orthogonal matrix, and $\Rf \in \Rb^{l_1 \times l_1}$ is an upper triangular matrix with positive diagonal elements. 
We then define
$$
\big(\wf_{1}^{(1)}, \wf_{2}^{(1)}, \cdots, \wf_{a_1}^{(1)}\big) :=
\left(\mathbf{0}, \Rf, \Qf^\top (\zf_{l_1+2}^{(1)} - \zf_1^{(1)}), \cdots, \Qf^\top (\zf_{a_1}^{(1)} - \zf_1^{(1)}) \right).
$$ 
By setting $\Qf$ and $\wf = -\Qf^\top \zf_1^{(1)}$, $\{\Wf_\Vf\}$ satisfies the desired condition stated in Lemma~S2. Therefore, $\{\Zf_\Vf\}$ and $\{\Wf_\Vf\}$ are equivalent under the isotropic kernel.

\textit{Uniqueness.} We have already proved that such a latent parameterization exists. According to Lemma~S2, there exist an orthogonal matrix $\Qf$ and a vector $\wf$ such that
\begin{equation}\label{equ:unique_proof_iso}
\big(\zf_{1}^{(1)}, \zf_{2}^{(1)}, \cdots, \zf_{a_1}^{(1)}\big) = \Qf \big(\wf_{1}^{(1)} - \wf, \wf_{2}^{(1)} - \wf, \cdots, \wf_{a_1}^{(1)} - \wf\big).
\end{equation}
Since $\wf_{1}^{(1)} = \mathbf{0}$, comparing the first columns of the two matrices in \eqref{equ:unique_proof_iso} gives $\wf = -\Qf^\top \zf_{1}^{(1)}$. Then, we have
$$
\big(\zf_{2}^{(1)}, \cdots, \zf_{l_1+1}^{(1)}\big) = \Qf \big(\wf_{2}^{(1)} + \Qf^\top \zf_{1}^{(1)}, \cdots, \wf_{l_1+1}^{(1)} + \Qf^\top \zf_{1}^{(1)}\big).
$$
Rearranging the above formula yields
$$
\big(\zf_{2}^{(1)} - \zf_{1}^{(1)}, \cdots, \zf_{l_1+1}^{(1)} - \zf_{1}^{(1)}\big) = \Qf \big(\wf_{2}^{(1)}, \cdots, \wf_{l_1+1}^{(1)}\big),
$$
which is precisely the QR decomposition. The uniqueness of $\Qf$, $\wf_{2}^{(1)}, \cdots, \wf_{l_1+1}^{(1)}$, and $\wf$ follows directly from the uniqueness of the QR decomposition, given the requirements for the subdiagonal elements. The uniqueness of the remaining vectors can be derived by solving the remaining $ a_1 - l_1 $ columns of \eqref{equ:unique_proof_iso}. 
\end{proof}

\section*{Disclosure statement}\label{disclosure-statement}
The authors have the following conflicts of interest to declare.

\section*{Acknowledgments}
The authors thank the three reviewers, the Associate Editor, and the Editor for their valuable comments. Deng's work was completed while she was a postdoctoral researcher at the Chinese University of Hong Kong, Shenzhen. 

\bigskip
\begin{center}
{\large\bf SUPPLEMENTARY MATERIAL}
\end{center}
\begin{description}
    \item[Supplementary File] This file includes additional details and results for the numerical comparisons in Section~\ref{sec:num_comp}, additional lemmas, and their proofs.
    \item[Code] This file contains an R package, $\mathsf{MixGP}$, which implements our methods and includes the code to reproduce all simulations, figures, and tables.
\end{description}

\bibliographystyle{asa}
\setlength{\bibsep}{1.5pt}
\bibliography{mybib}
\end{document}


\title{Supplementary Material for ``A General Framework For Modeling Gaussian Process with Qualitative and Quantitative Factors"}
  \author[]{Linsui Deng}
  \author[]{C. F. Jeff Wu}
  \affil[]{School of Data Science, The Chinese University of Hong Kong, Shenzhen, China}
\maketitle
\small 

Our supplement is organized as follows.
Section~\ref{sec:add_nume} provides additional details on the numerical studies, including the simulation settings for benchmark examples and supplementary results for the OTL circuit example, as well as a diagnostic analysis of erroneous inputs in the 3D finite element model. 
Section~\ref{sec:add_simu} presents additional simulation comparisons, including comparisons of different experimental designs and an extensive study of the borehole example under varying discretization levels, with detailed results on prediction accuracy and computational efficiency. 
Section~\ref{sec:para_equ} contains the technical lemmas and their proofs that support the theoretical results in the main paper. 
The \textsf{R} package and reproducible code for this work are also available at
\url{https://github.com/denglinsui/MixGP} and
\url{https://github.com/denglinsui/MixGP-manuscript-sourcecode}, respectively.

\section{Additional details and results of numerical comparisons}\label{sec:add_nume}

\subsection{Functions used in simulation examples}\label{sec:4ex_describe}
In the numerical experiments, we adopt the same setting as presented in \cite{Zhang:2020gp}. We detail the configurations here for completeness and clarity. 
Four benchmark examples are considered in this study: \textit{beam bending}, \textit{borehole}, \textit{OTL circuit}, and \textit{piston}. These examples are commonly used for evaluating the performance of surrogate modeling techniques because they are derived from real-world engineering problems and carry significant practical relevance. The functional forms of these four examples are originally defined based on continuous input variables. To adapt them to our study, we discretize certain continuous variables and treat them as ordinal qualitative variables.

The \textit{beam bending} example models the deflection of a beam with an elastic modulus of $E = 600 \, \text{GPa}$. The beam is fixed at one end, with a vertical force of $P = 600 \, \text{N}$ applied at the free end. The deflection at the free end depends on the beam length $L\in[10,20]$, the height of the beam's cross-section $h\in[1,2]$, and the cross-sectional shape $v$, and is given by:
$$
y(L, h, v) = \frac{L^3}{3 \times 10^9 h^4 I(v)},
$$
where $I(v)$ represents the normalized moment of inertia depending on the cross-section type. In this example, six cross-section types are considered, including circular, square, I-shape, hollow square, hollow circular and  H-shape, and the corresponding $I(v)$ are approximately 0.0491, 0.0833, 0.0449, 0.0633, 0.0373, and 0.0167 respectively. The geometric pictures can be found in \cite{Zhang:2020gp}. 

The \textit{borehole} example considers the flow of water through a borehole that is drilled from the ground surface through two aquifers \citep{harper1983sensitivity,morris1993bayesian}. The flow rate (in $m^3/\text{year}$) through the borehole is given by:
$$
y(r_w,r,T_u,H_u,T_l,H_l,L,K_w) = \frac{2 \pi T_u \left(H_u - H_l\right)}{ \log \left({r}/{r_w}\right) \left\{ 1 + \frac{2 L T_u}{\log \left({r}/{r_w}\right) r_w^2 K_w} + \frac{T_u}{T_l} \right\}},
$$
where $r_w \in [0.05, 0.15]$ (in $m$) is the radius of the borehole, $r \in [100, 50000]$ (in $m$) is the radius of influence, $T_u \in [63070, 115600]$ (in $m^2/\text{year}$) is the transmissivity of the upper aquifer, $H_u \in [990, 1110]$ (in $m$) is the potentiometric head of the upper aquifer, $T_l \in [63.1, 116]$ (in $m^2/\text{year}$) is the transmissivity of the lower aquifer, $H_l \in [700, 820]$ (in $m$) is the potentiometric head of the lower aquifer, $L \in [1120, 1680]$ (in $m$) is the length of the borehole, and $K_w \in [9855, 12045]$ (in $m/\text{year}$) is the hydraulic conductivity of the borehole lining.

Following \cite{Zhang:2020gp}, we adapt this function to evaluate surrogate models with mixed inputs by treating $r_w$ (radius of the borehole) and $H_l$ (potentiometric head of the lower aquifer) as qualitative factors. Specifically, $r_w$ is discretized into three levels $\{0.05, 0.10, 0.15\}$, and $H_l$ is discretized into four levels $\{700 , 740 , 780 , 820 \}$. 

The \textit{output transformerless (OTL) circuit} example considers the midpoint voltage of a push-pull output transformerless amplifier circuit. This circuit is commonly used in electrical engineering to drive loads without the need for a transformer \citep{ben2007modeling}. The midpoint voltage (in volts) is modeled as:
$$
\begin{aligned}
y(R_{b1}, R_{b2}, R_f, R_{c1}, R_{c2}, B) &=  \frac{\beta\left(V_{b1} + 0.74\right)\left(R_{c2} + 9\right)}{\beta \left(R_{c2} + 9\right) + R_f} + \frac{11.35 R_f}{\beta \left(R_{c2} + 9\right) + R_f} \\
&\qquad+  \frac{0.74 \beta R_f(R_{c2} + 9)}{R_{c1}\left\{\beta \left(R_{c2} + 9\right) + R_f\right\}},
\end{aligned}
$$
where $V_{b1} ={12 R_{b2}}/(R_{b1} + R_{b2})$, $R_{b1} \in [50, 150]$ (in $k\Omega$) is the resistance $b1$, $R_{b2} \in [25, 70]$ (in $k\Omega$) is the resistance$b2$, $R_f \in [0.5, 3]$ (in $k\Omega$) is the resistance $f$, $R_{c1} \in [1.2, 2.5]$ (in $k\Omega$) is the resistance $c1$, $R_{c2} \in [0.25, 1.2]$ (in $k\Omega$) is the resistance $c2$, and $\beta \in [50, 300]$ (Amperes) is the current gain of the transistor.

Following \cite{Zhang:2020gp}, we adapt this example to evaluate surrogate models with mixed inputs by treating $R_f$ (resistance $f$) and $\beta$ (current gain) as qualitative factors. Specifically, $R_f$ is discretized into four levels $\{0.5, 1.2, 2.1, 2.9\}$, and $\beta$ is discretized into six levels $\{50, 100, 150, 200, 250, 300\}$. 

The \textit{piston} example models the cycle time of a piston moving within a cylinder. This example simulates the linear motion of a piston being transformed into rotational motion via a connected rod and disk. The cycle time (in seconds) is a measure of the piston’s performance and is affected by multiple physical factors. The response function for the cycle time is given by
\begin{equation}
y(M, S, V_0, k, P_0, T_a, T_0) = 2 \pi \sqrt{\frac{M}{k + S^2 \frac{P_0 V_0 T_a}{V^2 T_0}}},
\end{equation}
where
\begin{equation}
V = \frac{S}{2k} \sqrt{A^2 + 4k \frac{P_0 V_0}{T_0} T_a-A} 
\quad\text{and}\quad 
A = P_0 S + 19.62 M - \frac{k V_0}{S},
\end{equation}
$M \in [30, 60]$ (in kg) represents the mass of the piston, $S \in [0.005, 0.020]$ (in $m^2$) represents the surface area of the piston, $V_0 \in [0.002, 0.010]$ (in $m^3$) is the initial gas volume, $k \in [1000, 5000]$ (in $N/m$) is the spring coefficient, $P_0 \in [90000, 110000]$ (in $N/m^2$) is the atmospheric pressure, $T_a \in [290, 296]$ (in $K$) is the ambient temperature, and $T_0 \in [340, 360]$ (in $K$) is the filling gas temperature.

Following \cite{Zhang:2020gp}, this example is adapted to evaluate surrogate models with mixed inputs by treating $P_0$ (atmospheric pressure) and $k$ (spring coefficient) as qualitative factors. Specifically, $P_0$ is discretized into three levels $\{90000, 100000, 110000\}$, and $k$ is discretized into five levels $\{1000, 2000, 3000, 4000, 5000\}$. 

\subsection{Additional results of the OTL examples}

Figure~\ref{fig:LV_ORD_ADD} visualizes the latent vectors $z_1$ for resistance $R_f$ and current gain $\beta$, estimated by $\Gau^{\add}_{1}$, $\Gau^{\add}_{\ord}$, $\Exp^{\add}_{1}$, and $\Exp^{\add}_{\ord}$, respectively, across 30 replications in the OTL example. Similar to Figure~\ref{fig:LV_ORD} in the paper, incorporating ordinal information preserves the natural ordinal structure and benefits from regularization effects.

\begin{figure}
    \centering
    \subfigure[$R_f$, additive Gaussian kernel.]{
        \includegraphics[width=0.45\linewidth]{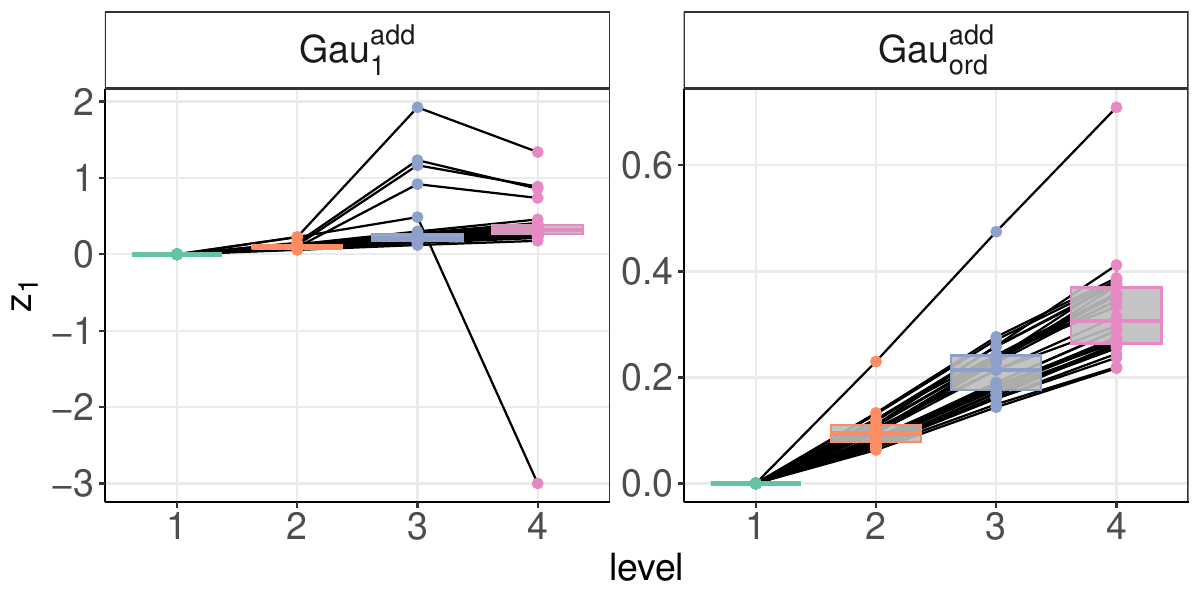}
        \label{fig:lv_ord_rf_gau}
    }
    \subfigure[$R_f$, additive exponential kernel.]{
        \includegraphics[width=0.45\linewidth]{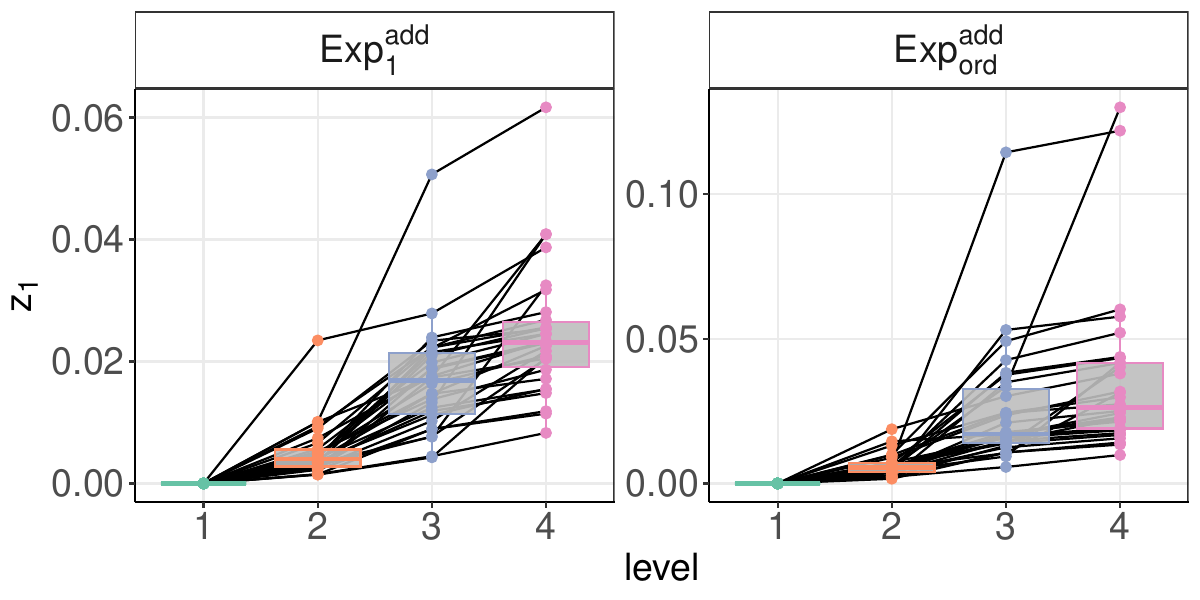}
        \label{fig:lv_ord_rf_exp}
    }
    \vskip\baselineskip 
    \subfigure[$\beta$, additive Gaussian kernel.]{
        \includegraphics[width=0.45\linewidth]{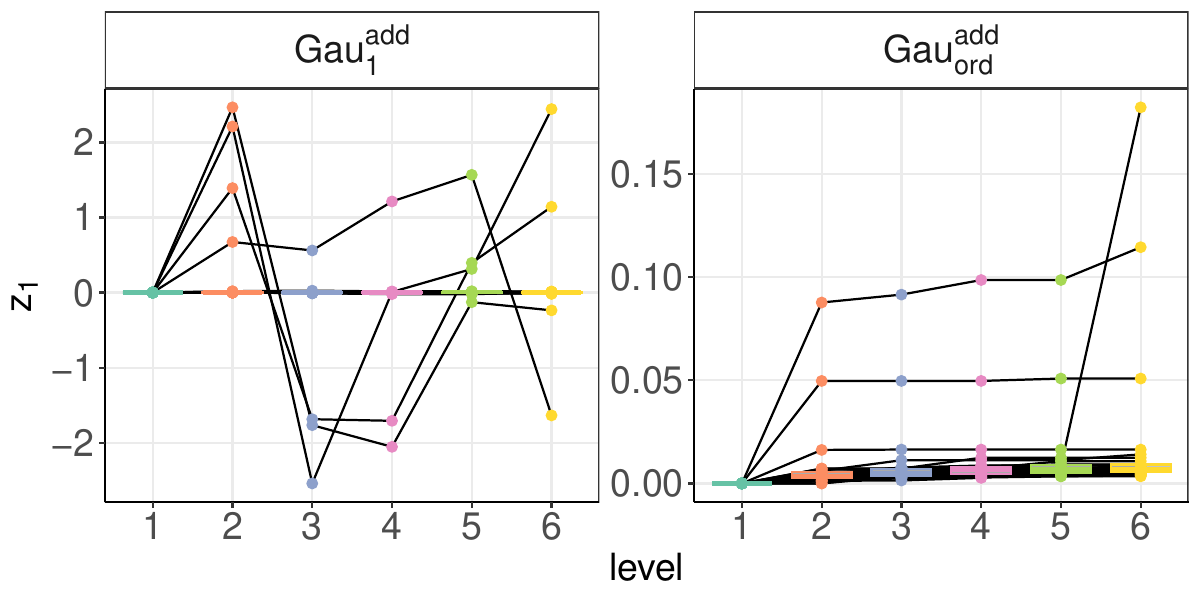}
        \label{fig:lv_ord_beta_gau}
    }
    \subfigure[$\beta$, additive exponential kernel.]{
        \includegraphics[width=0.45\linewidth]{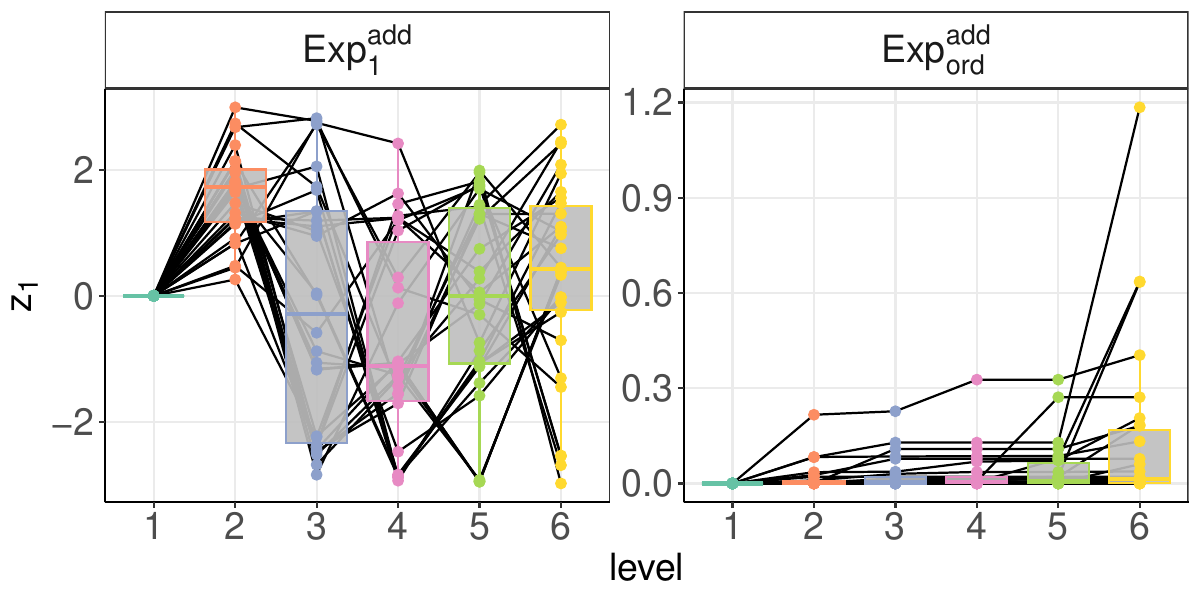}
        \label{fig:lv_ord_beta_exp}
    }
    \caption{The boxplots and scatter points depict the latent vectors $z_1$ for resistance $R_f$ and current gain $\beta$, estimated by $\Gau^{\add}_{1}$, $\Gau^{\add}_{\ord}$, $\Exp^{\add}_{1}$, and $\Exp^{\add}_{\ord}$, respectively, across 30 replications in the OTL example. Points from the same replication are connected by lines. }
    \label{fig:LV_ORD_ADD}
\end{figure}

\subsection{Diagnosis of erroneous inputs in the 3D finite element model}\label{sec:3D_err}
In Section~\ref{sec:3D} of our main paper, we point out two data points, $(u_1, v_1, v_2, v_3)\in$$\{(14,5,200,8000)$, $(14,10,200,1578)\}$, potentially having reversed labels and exclude them from the training process. Based on the fact that there are only 13 valid combinations for $(v_1, v_2, v_3)$ $\in$ $\{(1,100,4800)$, $(5,200,1578)\}$, we suspect the correct inputs should be $(u_1, v_1, v_2, v_3)$ $\in$ $\{(14,1,100,4800)$, $(14,5,200,1578)\}$.

To verify our suspicion, we use the 18 base models trained in Section~\ref{sec:3D} to make predictions for both the original inputs---$(14,5,200,8000)$ (Location 1) and $(14,10,200,1578)$ (Location 2)---and the revised inputs---$(14,1,100,4800)$ (Location 1) and $(14,5,200,1578)$ (Location 2). The predicted values are summarized in Table~\ref{tab:predicted_3}.

The observed value at Location 1 is $-0.535$, and at Location 2, it is $-0.768$. We find that the predictions by using the revised inputs are much closer to the observed values compared to those by using the original inputs. This confirms that the two points are indeed outliers and should be excluded from the training process.

\begin{table}[h!]
\centering
\caption{Predicted values from 18 base models for both the original and revised inputs at two locations identified as potential outliers. Results are summarized as the median, mean, and standard deviation (SD) across replications.}
\label{tab:predicted_3}
\resizebox{\textwidth}{!}{%
\begin{tabular}{cccccccccccccccccccc}
\toprule
&&  \multicolumn{18}{c}{{Method}}  \\
\cmidrule{3-20}
  $n$&Criterion& \multicolumn{3}{c}{$\Gau^{\multi}$}   &\multicolumn{3}{c}{$\Exp^{\multi}$}   & \multicolumn{3}{c}{$\Linear^{\multi}$}  &  \multicolumn{3}{c}{$\Gau^{\add}$}    & \multicolumn{3}{c}{$\Exp^{\add}$}     &\multicolumn{3}{c}{$\Linear^{\add}$}   \\
\cmidrule(r){3-5} \cmidrule(lr){6-8} \cmidrule(lr){9-11} \cmidrule(lr){12-14} \cmidrule(lr){15-17} \cmidrule(lr){18-20} 
  && 1-d & 2-d &$\ord$ & 1-d & 2-d &$\ord$& 1-d & 2-d &$\ord$& 1-d & 2-d &$\ord$& 1-d & 2-d &$\ord$& 1-d & 2-d &$\ord$\\
\midrule
 \multicolumn{20}{c}{Location 1 (Original Input)} \\
\midrule
 & Mean & -0.756 & -0.756 & -0.747 & -0.750 & -0.748 & -0.751 & -0.742 & -0.744 & -0.748 & -0.727 & -0.721 & -0.721 & -0.738 & -0.732 & -0.730 & -0.732 & -0.732 & -0.731\\
\cmidrule{2-20}
27 & Median & -0.754 & -0.754 & -0.749 & -0.754 & -0.755 & -0.754 & -0.746 & -0.749 & -0.750 & -0.729 & -0.725 & -0.725 & -0.742 & -0.737 & -0.738 & -0.731 & -0.735 & -0.734\\
\cmidrule{2-20}
& SD & 0.017 & 0.015 & 0.010 & 0.026 & 0.034 & 0.010 & 0.038 & 0.028 & 0.011 & 0.047 & 0.048 & 0.042 & 0.051 & 0.050 & 0.045 & 0.063 & 0.061 & 0.052\\
\midrule
& Mean &  -0.752 & -0.753 & -0.748 & -0.749 & -0.749 & -0.750 & -0.750 & -0.754 & -0.751 & -0.722 & -0.719 & -0.717 & -0.731 & -0.728 & -0.726 & -0.723 & -0.722 & -0.721\\
\cmidrule{2-20}
45 & Median & -0.751 & -0.753 & -0.749 & -0.752 & -0.752 & -0.755 & -0.752 & -0.754 & -0.752 & -0.718 & -0.717 & -0.716 & -0.730 & -0.726 & -0.726 & -0.718 & -0.718 & -0.717\\
\cmidrule{2-20}
 & SD & 0.010 & 0.009 & 0.007 & 0.010 & 0.009 & 0.009 & 0.033 & 0.010 & 0.008 & 0.032 & 0.032 & 0.024 & 0.031 & 0.031 & 0.028 & 0.038 & 0.037 & 0.033\\
\midrule
 & Mean & -0.752 & -0.754 & -0.750 & -0.750 & -0.750 & -0.752 & -0.752 & -0.757 & -0.754 & -0.711 & -0.708 & -0.709 & -0.718 & -0.716 & -0.714 & -0.709 & -0.709 & -0.709\\
\cmidrule{2-20}
63 & Median & -0.750 & -0.755 & -0.750 & -0.753 & -0.754 & -0.756 & -0.753 & -0.756 & -0.755 & -0.709 & -0.708 & -0.708 & -0.715 & -0.713 & -0.713 & -0.709 & -0.709 & -0.709\\
\cmidrule{2-20}
 & SD & 0.008 & 0.008 & 0.005 & 0.009 & 0.009 & 0.009 & 0.030 & 0.025 & 0.006 & 0.020 & 0.017 & 0.016 & 0.021 & 0.020 & 0.018 & 0.016 & 0.016 & 0.015\\
\midrule
 \multicolumn{20}{c}{Location 1 (Revised Input)} \\
\midrule
 & Mean & -0.531 & -0.531 & -0.533 & -0.536 & -0.536 & -0.533 & -0.531 & -0.533 & -0.533 & -0.578 & -0.575 & -0.575 & -0.595 & -0.587 & -0.584 & -0.597 & -0.595 & -0.594\\
\cmidrule{2-20}
27 & Median & -0.533 & -0.533 & -0.534 & -0.535 & -0.536 & -0.534 & -0.534 & -0.534 & -0.534 & -0.559 & -0.558 & -0.559 & -0.575 & -0.567 & -0.567 & -0.564 & -0.566 & -0.560\\
\cmidrule{2-20}
 & SD & 0.007 & 0.008 & 0.007 & 0.007 & 0.007 & 0.006 & 0.011 & 0.007 & 0.008 & 0.064 & 0.060 & 0.056 & 0.069 & 0.061 & 0.055 & 0.090 & 0.084 & 0.088\\
\midrule
 & Mean & -0.534 & -0.535 & -0.535 & -0.537 & -0.537 & -0.536 & -0.535 & -0.536 & -0.535 & -0.577 & -0.570 & -0.572 & -0.586 & -0.583 & -0.581 & -0.578 & -0.580 & -0.576\\
\cmidrule{2-20}
45 & Median & -0.535 & -0.535 & -0.535 & -0.536 & -0.536 & -0.535 & -0.535 & -0.535 & -0.535 & -0.559 & -0.554 & -0.555 & -0.570 & -0.563 & -0.563 & -0.559 & -0.559 & -0.560\\
\cmidrule{2-20}
 & SD & 0.005 & 0.005 & 0.007 & 0.004 & 0.004 & 0.004 & 0.007 & 0.006 & 0.007 & 0.064 & 0.066 & 0.058 & 0.061 & 0.063 & 0.061 & 0.066 & 0.064 & 0.059\\
\midrule
 & Mean & -0.536 & -0.536 & -0.536 & -0.536 & -0.536 & -0.537 & -0.537 & -0.537 & -0.537 & -0.552 & -0.550 & -0.551 & -0.558 & -0.558 & -0.556 & -0.554 & -0.554 & -0.554\\
\cmidrule{2-20}
63 & Median & -0.536 & -0.536 & -0.536 & -0.536 & -0.536 & -0.536 & -0.536 & -0.536 & -0.537 & -0.550 & -0.549 & -0.550 & -0.557 & -0.554 & -0.555 & -0.552 & -0.552 & -0.551\\
\cmidrule{2-20}
 & SD & 0.002 & 0.002 & 0.003 & 0.002 & 0.002 & 0.002 & 0.004 & 0.003 & 0.005 & 0.019 & 0.017 & 0.016 & 0.020 & 0.019 & 0.017 & 0.017 & 0.017 & 0.017\\
\midrule
 \multicolumn{20}{c}{Location 2 (Original Input)} \\
\midrule
 & Mean & -0.841 & -0.841 & -0.829 & -0.841 & -0.841 & -0.843 & -0.811 & -0.822 & -0.820 & -0.882 & -0.881 & -0.881 & -0.880 & -0.881 & -0.879 & -0.877 & -0.877 & -0.878\\
\cmidrule{2-20}
27 & Median & -0.841 & -0.839 & -0.833 & -0.838 & -0.838 & -0.838 & -0.828 & -0.830 & -0.827 & -0.876 & -0.878 & -0.877 & -0.876 & -0.879 & -0.874 & -0.876 & -0.876 & -0.871\\
\cmidrule{2-20}
 & SD & 0.028 & 0.026 & 0.028 & 0.019 & 0.020 & 0.017 & 0.058 & 0.035 & 0.029 & 0.051 & 0.052 & 0.048 & 0.050 & 0.051 & 0.051 & 0.062 & 0.063 & 0.058\\
\midrule
 & Mean & -0.818 & -0.823 & -0.802 & -0.832 & -0.830 & -0.834 & -0.812 & -0.821 & -0.803 & -0.883 & -0.884 & -0.883 & -0.882 & -0.884 & -0.883 & -0.883 & -0.883 & -0.883\\
\cmidrule{2-20}
45 & Median & -0.827 & -0.831 & -0.788 & -0.833 & -0.832 & -0.836 & -0.817 & -0.823 & -0.816 & -0.876 & -0.877 & -0.876 & -0.876 & -0.875 & -0.876 & -0.874 & -0.874 & -0.874\\
\cmidrule{2-20}
 & SD & 0.029 & 0.027 & 0.027 & 0.015 & 0.015 & 0.013 & 0.030 & 0.023 & 0.035 & 0.035 & 0.033 & 0.032 & 0.037 & 0.035 & 0.036 & 0.040 & 0.040 & 0.039\\
\midrule
& Mean & -0.808 & -0.821 & -0.799 & -0.831 & -0.831 & -0.834 & -0.820 & -0.826 & -0.822 & -0.879 & -0.878 & -0.878 & -0.877 & -0.878 & -0.878 & -0.878 & -0.878 & -0.878\\
\cmidrule{2-20}
63 & Median & -0.811 & -0.835 & -0.790 & -0.835 & -0.834 & -0.837 & -0.819 & -0.829 & -0.827 & -0.876 & -0.876 & -0.875 & -0.874 & -0.876 & -0.876 & -0.875 & -0.875 & -0.875\\
\cmidrule{2-20}
 & SD & 0.031 & 0.029 & 0.028 & 0.013 & 0.014 & 0.010 & 0.033 & 0.027 & 0.022 & 0.027 & 0.026 & 0.026 & 0.028 & 0.027 & 0.027 & 0.026 & 0.026 & 0.026\\
\midrule
 \multicolumn{20}{c}{Location 2 (Revised Input)} \\
\midrule
 & Mean & -0.773 & -0.773 & -0.773 & -0.775 & -0.775 & -0.775 & -0.775 & -0.774 & -0.772 & -0.826 & -0.821 & -0.822 & -0.827 & -0.825 & -0.823 & -0.817 & -0.817 & -0.819\\
\cmidrule{2-20}
27 & Median &-0.769 & -0.769 & -0.769 & -0.770 & -0.769 & -0.770 & -0.770 & -0.768 & -0.768 & -0.821 & -0.814 & -0.817 & -0.825 & -0.825 & -0.823 & -0.814 & -0.814 & -0.813\\
\cmidrule{2-20}
 & SD & 0.015 & 0.016 & 0.015 & 0.017 & 0.020 & 0.015 & 0.017 & 0.017 & 0.015 & 0.049 & 0.048 & 0.044 & 0.049 & 0.048 & 0.050 & 0.059 & 0.059 & 0.056\\
\midrule
& Mean & -0.768 & -0.768 & -0.768 & -0.769 & -0.768 & -0.769 & -0.766 & -0.765 & -0.765 & -0.822 & -0.821 & -0.821 & -0.825 & -0.825 & -0.824 & -0.820 & -0.818 & -0.819\\
\cmidrule{2-20}
45 & Median & -0.766 & -0.766 & -0.766 & -0.766 & -0.767 & -0.766 & -0.765 & -0.765 & -0.765 & -0.817 & -0.815 & -0.815 & -0.817 & -0.819 & -0.818 & -0.813 & -0.812 & -0.813\\
\cmidrule{2-20}
 & SD & 0.010 & 0.010 & 0.009 & 0.013 & 0.014 & 0.011 & 0.012 & 0.010 & 0.006 & 0.034 & 0.033 & 0.031 & 0.036 & 0.033 & 0.035 & 0.039 & 0.040 & 0.039\\
\midrule
 & Mean & -0.767 & -0.766 & -0.766 & -0.766 & -0.766 & -0.768 & -0.766 & -0.766 & -0.766 & -0.818 & -0.816 & -0.816 & -0.821 & -0.819 & -0.818 & -0.815 & -0.815 & -0.814\\
\cmidrule{2-20}
63 & Median & -0.768 & -0.769 & -0.768 & -0.769 & -0.769 & -0.768 & -0.767 & -0.769 & -0.768 & -0.816 & -0.813 & -0.813 & -0.819 & -0.818 & -0.818 & -0.811 & -0.811 & -0.811\\
\cmidrule{2-20}
 & SD & 0.012 & 0.012 & 0.012 & 0.011 & 0.012 & 0.007 & 0.008 & 0.012 & 0.009 & 0.026 & 0.025 & 0.025 & 0.029 & 0.027 & 0.025 & 0.026 & 0.026 & 0.025\\
\bottomrule
\end{tabular}
}
\end{table}

\section{Additional simulation comparisons}\label{sec:add_simu}
\subsection{Comparisons of different experimental designs}\label{sec:comp_design}
In this section, we examine the effect of experimental design on model performance. Specifically, we consider random sampling, the MaxPro design \citep{joseph2015maximum,joseph2020designing}, and the maximin Latin hypercube design (LHD) \citep{santner2003design}, which was previously examined in Section~\ref{sec:simu}, each using 80 training points. As shown in Figure~\ref{fig:differdesign}, the RRMSE values are highly consistent across all three designs for every method, indicating that the choice of design has a negligible impact on performance.

\begin{figure}
    \centering
    \includegraphics[width=\linewidth]{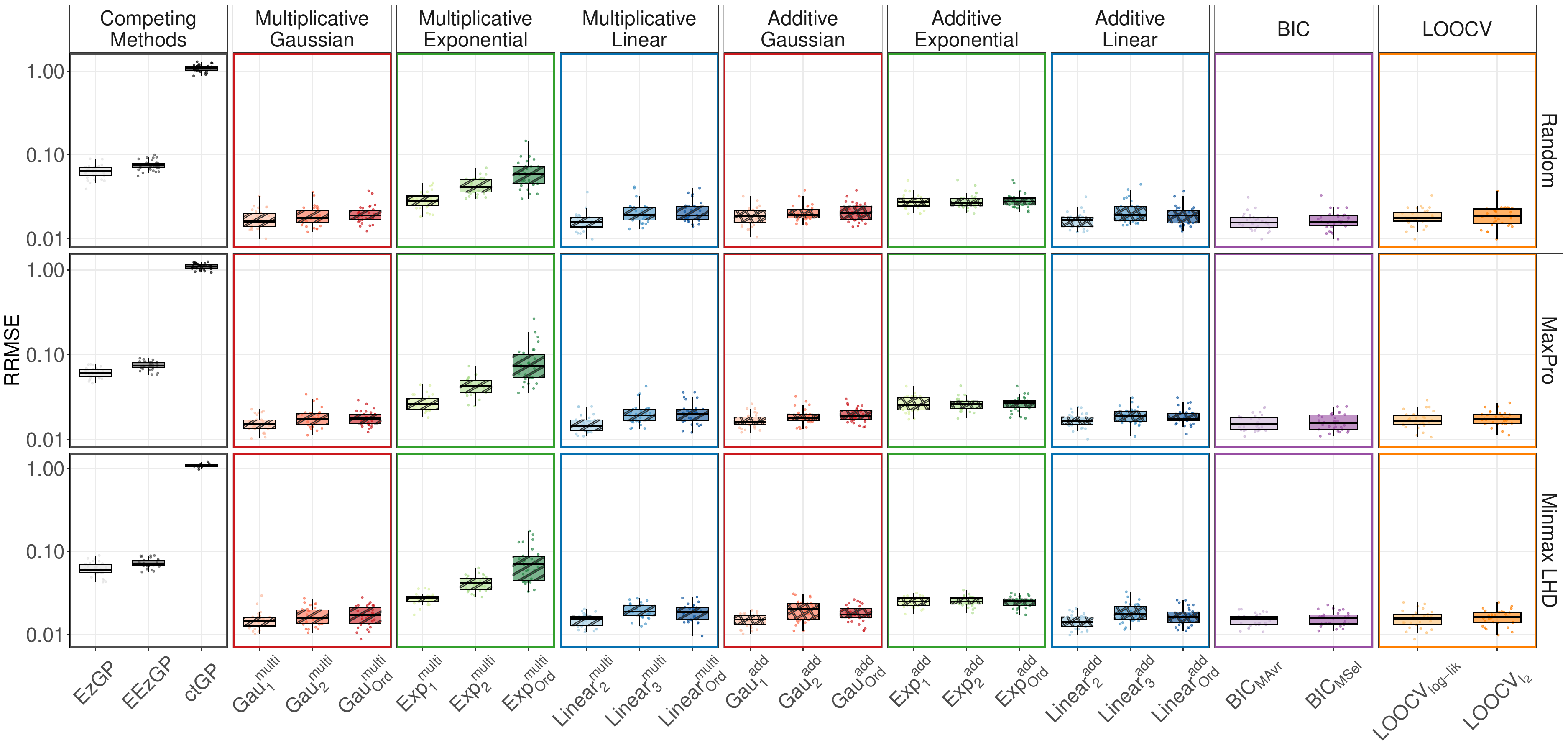}
    \caption{Comparison of RRMSE across different methods and kernel configurations for the OTL circuit example. Each boxplot summarizes the results from 30 independent runs with different training sets generated via minimax LHS, random, and MaxPro designs, respectively. All methods were evaluated on the same set of 10,000 uniformly distributed test points.}
    \label{fig:differdesign}
\end{figure}

\subsection{The borehole example with varying discretization degree}\label{sec:borehore_vary_level}

In this section, we evaluate the predictive accuracy and computational efficiency of various methods for the borehole example under different discretization levels.

In the preliminary example presented in Section~\ref{sec:simu}, $r_w$ was discretized into $q_1 = 3$ levels and $H_l$ into $q_2 = 4$ levels. Here, we extend our analysis by considering multiple combinations of discretization degrees: $q_1 \in \{2,4,6\}$ for $r_w$ and $q_2 \in \{2,4,6,8,10\}$ for $H_l$. We generate $5 q_1 q_2$ design points using a minimax LHD, and map $r_w \in [0.05,0.15]$ and $H_l \in [700,820]$ to equally spaced values. To save computation time, we run our methods with only three random restarts and select the solution yielding the smallest log‑likelihood. 

Figures~\ref{fig:S5_2}--\ref{fig:S5_6} report the RRMSEs of all methods. 
Consistent with the results in Section~\ref{sec:simu}, methods with additive kernels (including the competing methods $\EzGP$ and $\EEzGP$) exhibit poor predictive performance. Methods with multiplicative Gaussian and exponential kernels generally perform well. In contrast, methods with multiplicative linear kernels deliver satisfactory results only under very coarse discretization ($q_1=2$, $q_2=2$) if ordinal information is not leveraged, possibly because fewer restarts increase the risk of getting stuck in local optima. By comparison, $\Linear_\ord^\multi$, together with $\Gau_\ord^\multi$, generally exhibits the best performance, which indicates that incorporating ordinal information can substantially enhance predictive accuracy and also improve optimization stability.

Table~\ref{tab:time} summarizes the computational times of three competing methods and our 18 base methods. The computational time of model selection and model averaging approaches are not reported, as they are constructed by post-processing and aggregating the results from our base models, rather than being standalone methods with with minimal additional computational costs. All computational times were measured on a personal computer equipped with a 13th Gen Intel(R) Core(TM) i7-13700KF CPU (3.40\,GHz), 64\,GB RAM, running Windows~10 and \textsf{R}~4.4.1. $\EEzGP$ is the fastest overall. 
$\ctGP$ is efficient when the number of discretization levels is small, but its computational time increases rapidly as the levels grow. 
$\EzGP$ has complexity comparable to that of our base methods. 
For the base methods, computational time generally increases with latent dimension. 
With Gaussian or exponential kernels, methods incorporating ordinal information 
($\Gau_{\ord}^\multi$, $\Gau_{\ord}^\add$, $\Exp_{\ord}^\multi$, and $\Exp_{\ord}^\add$) exhibit computational time comparable to the corresponding methods with the same latent dimension $l_j=1$ ($\Gau_{1}^\multi$, $\Gau_{1}^\add$, $\Exp_{1}^\multi$, and $\Exp_{1}^\add$), and can even be faster when the discretization degree is high. 
However, with multiplicative linear kernels, incorporating ordinal information greatly increases costs.  
Despite the higher computational time, $\Linear_{\ord}^\multi$ achieves higher prediction accuracy than $\Linear_{2}^\multi$ or $\Linear_{1}^\multi$. Thus, there is a trade-off between accuracy and efficiency.

Figure~\ref{fig:RRMSE_time_borehole} displays the RRMSE and computational time trends of 
$\EzGP$, $\EEzGP$, $\ctGP$, $\Gau_{\ord}^\multi$, $\Linear_{\ord}^\multi$, and $\Linear_{2}^\multi$. 
Among our methods, $\Linear_{\ord}^\multi$ and $\Linear_{2}^\multi$ represent the slowest and fastest in terms of computational time, respectively; from the perspective of prediction accuracy, $\Gau_{\ord}^\multi$ consistently ranks among the top-performing base methods; and we include $\BIC_\MAvr$ as an exemplar of the four model averaging and model selection methods because they yield similar predictive accuracy in the reported RRMSE results. $\EzGP$ is the fastest method, but the predictive accuracy of $\EzGP$ and $\EEzGP$ is relatively low. 
Although finer discretization requires estimating correlations among more levels, the RRMSE of $\ctGP$ decreases as larger sample sizes offset the added complexity. This gain in accuracy comes at the cost of reduced computational efficiency.  
For our methods, RRMSE decreases at low to moderate discretization degrees and increases when the degree is high. 
Finally, $\Gau_{\ord}^\multi$ delivers high predictive accuracy while maintaining reasonable computational time.

  \begin{figure}
    \centering
    \includegraphics[width=\linewidth]{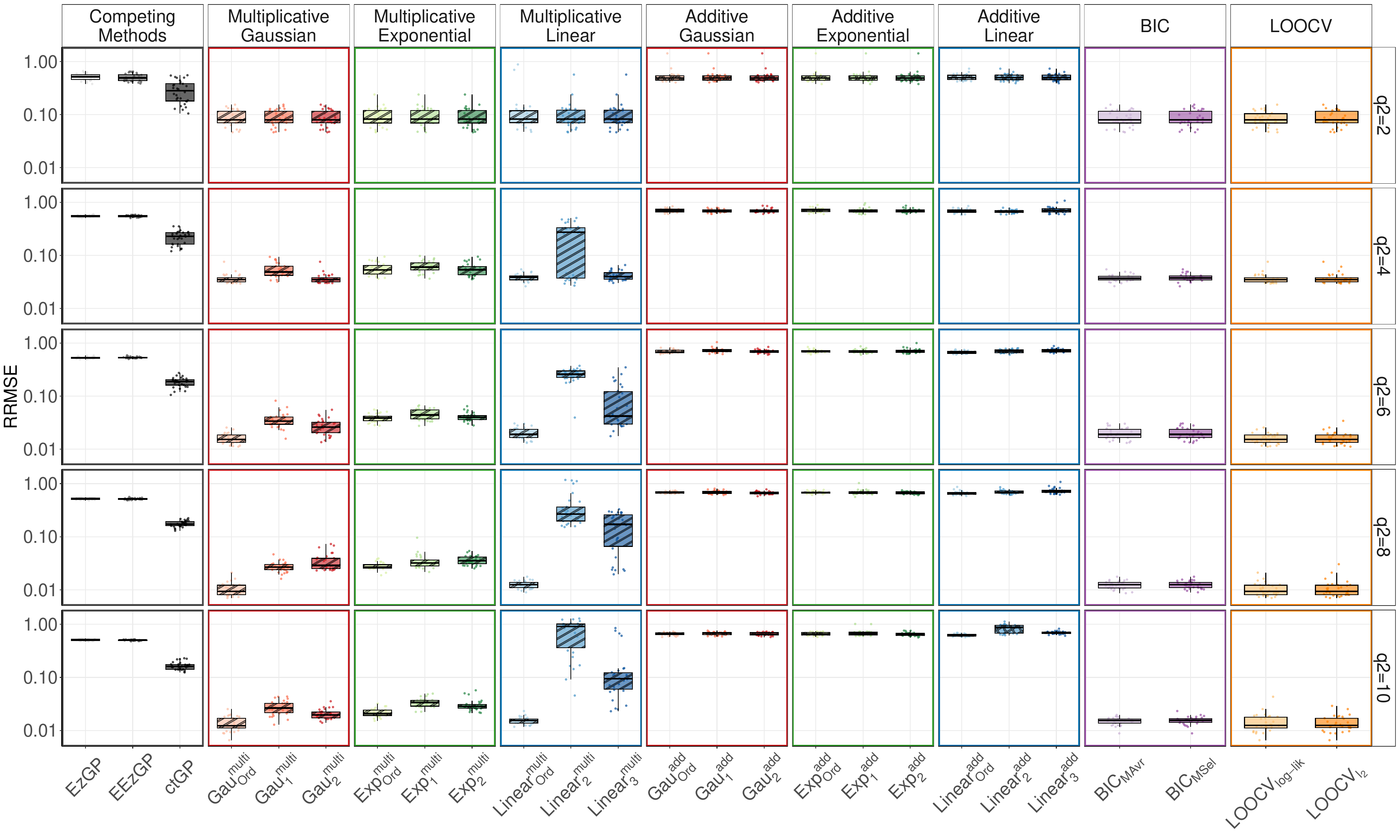}
    \caption{Comparison of RRMSE across different methods and kernel configurations for the borehole example with discretization levels $q_1=2$ and $q_2 \in \{2,4,6,8,10\}$. Each boxplot summarizes the results from $5 q_1q_2$ independent runs with different training sets generated via maximin Latin hypercube designs. All methods were evaluated on the same set of 10,000 uniformly distributed test points.}
    \label{fig:S5_2}
\end{figure}

  \begin{figure}
    \centering
    \includegraphics[width=\linewidth]{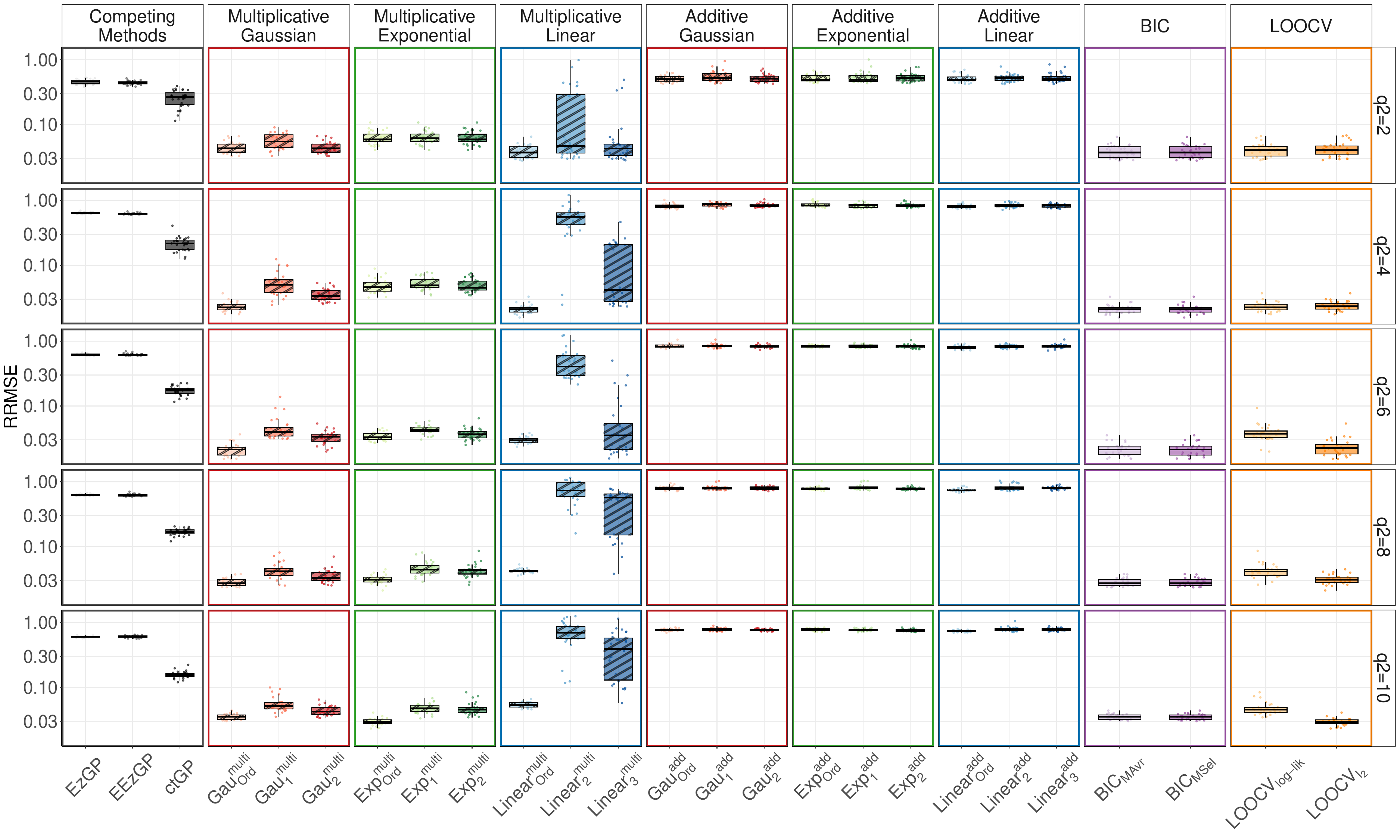}
    \caption{Comparison of RRMSE across different methods and kernel configurations for the borehole example with discretization levels $q_1 =4$ and $q_2 \in \{2,4,6,8,10\}$, using the same experimental setup as in Figure~\ref{fig:S5_2}.}
    \label{fig:S5_4}
\end{figure}

  \begin{figure}
    \centering
    \includegraphics[width=\linewidth]{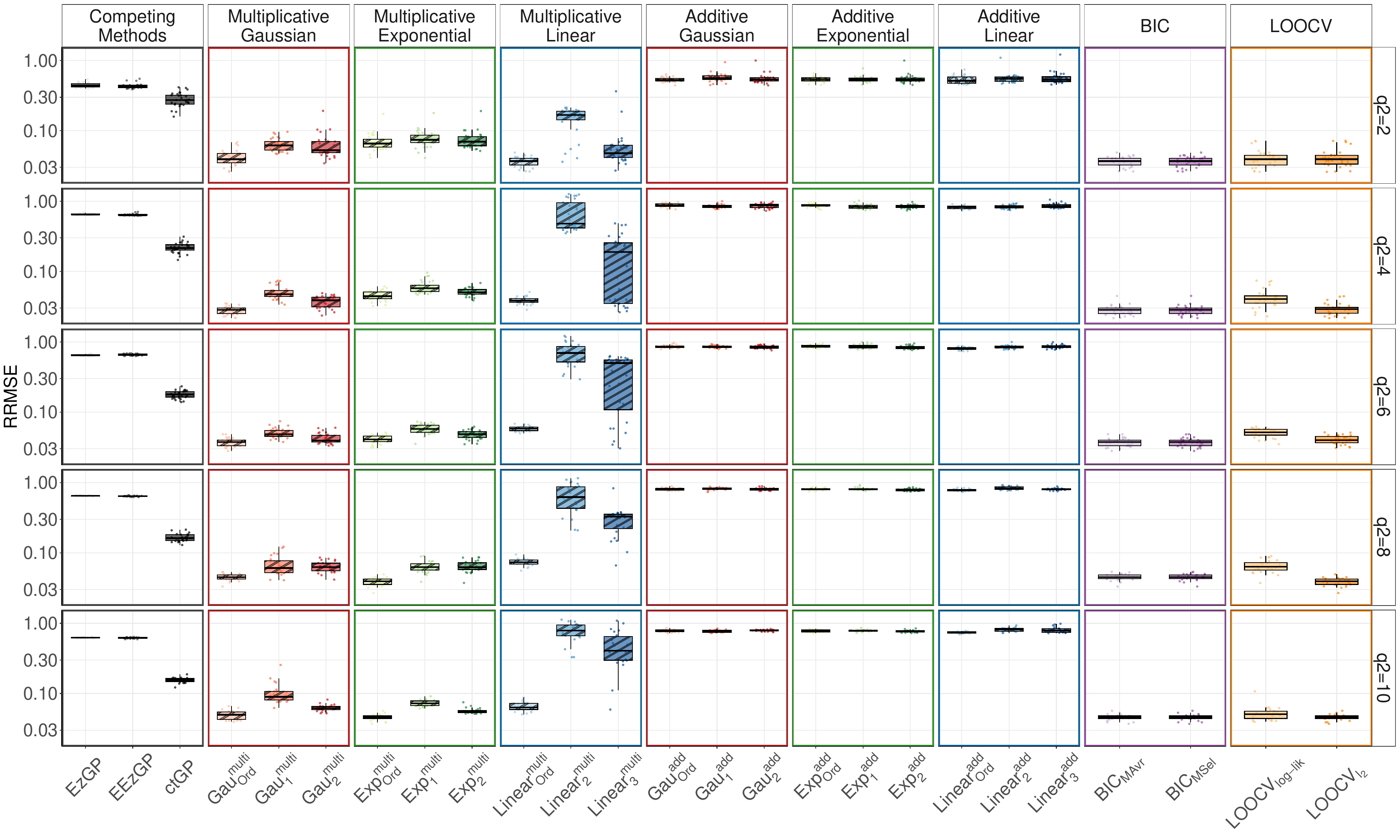}
    \caption{Comparison of RRMSE across different methods and kernel configurations for the borehole example with discretization levels $q_1=6$ and $q_2 \in \{2,4,6,8,10\}$, using the same experimental setup as in Figure~\ref{fig:S5_2}.}
    \label{fig:S5_6}
\end{figure}

\begin{table}[h!]
\centering
\caption{Computational time (in seconds) for three competing methods and our 18 base methods applied to the borehole problem under different discretization levels, using the same experimental setup as in Figure~\ref{fig:S5_2}.}
\label{tab:time}
\resizebox{\textwidth}{!}{%
\begin{tabular}{ccccccccccccccccccccccc}
\hline
&&  \multicolumn{21}{c}{{Method}}  \\
\cmidrule{3-23}
$q_2$ & Time & \multirow{2}{*}{$\EzGP$} & \multirow{2}{*}{$\EEzGP$} & \multirow{2}{*}{$\ctGP$} & \multicolumn{3}{c}{$\Gau^{\multi}$}   &\multicolumn{3}{c}{$\Exp^{\multi}$}   & \multicolumn{3}{c}{$\Linear^{\multi}$}  &  \multicolumn{3}{c}{$\Gau^{\add}$}    & \multicolumn{3}{c}{$\Exp^{\add}$}     &\multicolumn{3}{c}{$\Linear^{\add}$}  \\
\cmidrule(r){6-8} \cmidrule(lr){9-11} \cmidrule(lr){12-14} \cmidrule(lr){15-17} \cmidrule(lr){18-20} \cmidrule(lr){21-23}
  &&&&& 1-d & 2-d &$\ord$ & 1-d & 2-d &$\ord$& 1-d & 2-d &$\ord$& 1-d & 2-d &$\ord$& 1-d & 2-d &$\ord$& 1-d & 2-d &$\ord$\\
\midrule
 \multicolumn{23}{c}{$q_1=2$} \\
 \midrule
\multirow{2}{*}{2} & Mean & 5 & 2 & 9 & 29 & 29 & 29 & 29 & 29 & 29 & 21 & 21 & 33 & 29 & 29 & 29 & 29 & 29 & 29 & 22 & 22 & 35\\
\cmidrule{2-23}
 & SD & 1 & $<$1 & 3 & 1 & 1 & 1 & 1 & 1 & 1 & 1 & 1 & 3 & 2 & 2 & 2 & 2 & 2 & 2 & 2 & 2 & 6\\
\midrule
\multirow{2}{*}{4} & Mean & 21 & 5 & 11 & 38 & 61 & 50 & 40 & 68 & 41 & 44 & 53 & 109 & 54 & 71 & 52 & 55 & 73 & 51 & 45 & 53 & 82\\
\cmidrule{2-23}
 & SD & 3 & 1 & 3 & 4 & 7 & 5 & 4 & 11 & 4 & 6 & 8 & 28 & 9 & 8 & 7 & 9 & 10 & 7 & 6 & 6 & 25\\
\midrule
\multirow{2}{*}{6} & Mean & 47 & 12 & 14 & 75 & 110 & 90 & 61 & 131 & 65 & 54 & 83 & 287 & 67 & 97 & 74 & 72 & 100 & 77 & 60 & 79 & 92\\
\cmidrule{2-23}
 & SD & 3 & 3 & 2 & 12 & 5 & 7 & 8 & 11 & 8 & 9 & 8 & 67 & 14 & 16 & 11 & 10 & 24 & 10 & 11 & 9 & 13\\
\midrule
\multirow{2}{*}{8} & Mean & 98 & 21 & 22 & 110 & 162 & 138 & 114 & 201 & 115 & 85 & 125 & 650 & 105 & 156 & 104 & 114 & 172 & 122 & 78 & 125 & 131\\
\cmidrule{2-23}
 & SD & 1 & 4 & 2 & 13 & 9 & 6 & 11 & 12 & 11 & 10 & 8 & 152 & 14 & 11 & 16 & 17 & 26 & 10 & 15 & 15 & 24\\
\midrule
\multirow{2}{*}{10} & Mean & 178 & 38 & 37 & 162 & 245 & 195 & 163 & 302 & 185 & 104 & 190 & 712 & 144 & 228 & 148 & 152 & 269 & 153 & 107 & 185 & 177\\
\cmidrule{2-23}
 & SD & 13 & 9 & 2 & 21 & 10 & 11 & 14 & 18 & 10 & 14 & 12 & 183 & 29 & 22 & 27 & 40 & 31 & 30 & 22 & 13 & 35\\
\midrule
 \multicolumn{23}{c}{$q_1=4$} \\
 \midrule
\multirow{2}{*}{2} & Mean & 20 & 6 & 10 & 37 & 51 & 47 & 40 & 63 & 40 & 37 & 42 & 70 & 45 & 62 & 46 & 46 & 63 & 46 & 38 & 43 & 67\\
\cmidrule{2-23}
 & SD & 1 & 1 & 2 & 4 & 6 & 5 & 3 & 5 & 3 & 5 & 6 & 10 & 6 & 6 & 6 & 6 & 7 & 7 & 6 & 7 & 15\\
\midrule
\multirow{2}{*}{4} & Mean & 91 & 18 & 24 & 75 & 143 & 116 & 93 & 168 & 79 & 69 & 110 & 398 & 87 & 130 & 89 & 100 & 137 & 100 & 73 & 102 & 120\\
\cmidrule{2-23}
 & SD & 9 & 5 & 3 & 13 & 17 & 13 & 14 & 18 & 12 & 16 & 16 & 116 & 18 & 25 & 11 & 11 & 22 & 13 & 13 & 13 & 29\\
\midrule
\multirow{2}{*}{6} & Mean & 231 & 48 & 63 & 159 & 264 & 203 & 169 & 312 & 209 & 116 & 193 & 968 & 152 & 244 & 141 & 171 & 275 & 176 & 109 & 186 & 186\\
\cmidrule{2-23}
 & SD & 6 & 14 & 4 & 25 & 22 & 23 & 15 & 26 & 28 & 19 & 20 & 290 & 31 & 19 & 29 & 31 & 38 & 23 & 26 & 32 & 49\\
\midrule
\multirow{2}{*}{8} & Mean & 490 & 89 & 182 & 302 & 467 & 324 & 257 & 481 & 285 & 165 & 314 & 790 & 202 & 408 & 207 & 247 & 461 & 274 & 161 & 311 & 352\\
\cmidrule{2-23}
 & SD & 10 & 26 & 13 & 51 & 70 & 58 & 51 & 100 & 37 & 42 & 42 & 143 & 56 & 28 & 50 & 62 & 47 & 56 & 39 & 28 & 85\\
\midrule
\multirow{2}{*}{10} & Mean & 872 & 166 & 376 & 426 & 688 & 365 & 412 & 720 & 358 & 274 & 478 & 875 & 401 & 657 & 414 & 465 & 756 & 451 & 345 & 521 & 392\\
\cmidrule{2-23}
 & SD & 22 & 36 & 15 & 102 & 107 & 75 & 65 & 99 & 62 & 55 & 95 & 191 & 69 & 69 & 35 & 85 & 71 & 30 & 60 & 72 & 102\\
\midrule
 \multicolumn{23}{c}{$q_1=6$} \\
 \midrule
\multirow{2}{*}{2} & Mean & 50 & 13 & 14 & 63 & 109 & 89 & 75 & 125 & 76 & 57 & 78 & 138 & 74 & 98 & 76 & 75 & 112 & 84 & 63 & 80 & 104\\
\cmidrule{2-23}
 & SD & 2 & 3 & 2 & 10 & 4 & 4 & 9 & 8 & 9 & 9 & 10 & 49 & 14 & 17 & 11 & 9 & 22 & 9 & 10 & 15 & 23\\
\midrule
\multirow{2}{*}{4} & Mean & 228 & 39 & 65 & 164 & 272 & 213 & 185 & 308 & 183 & 109 & 190 & 892 & 151 & 247 & 139 & 168 & 264 & 167 & 110 & 175 & 200\\
\cmidrule{2-23}
 & SD & 5 & 13 & 4 & 29 & 22 & 21 & 11 & 25 & 18 & 22 & 19 & 226 & 35 & 14 & 33 & 37 & 33 & 30 & 28 & 28 & 66\\
\midrule
\multirow{2}{*}{6} & Mean & 630 & 92 & 261 & 324 & 582 & 359 & 309 & 594 & 333 & 191 & 388 & 774 & 248 & 482 & 258 & 311 & 585 & 278 & 216 & 383 & 407\\
\cmidrule{2-23}
 & SD & 12 & 31 & 19 & 72 & 77 & 67 & 38 & 88 & 55 & 54 & 70 & 192 & 63 & 19 & 71 & 71 & 69 & 70 & 52 & 33 & 77\\
\midrule
\multirow{2}{*}{8} & Mean & 1273 & 208 & 713 & 550 & 975 & 522 & 494 & 1115 & 472 & 402 & 660 & 1110 & 533 & 879 & 541 & 605 & 1035 & 613 & 464 & 698 & 589\\
\cmidrule{2-23}
 & SD & 80 & 51 & 32 & 170 & 204 & 114 & 103 & 257 & 80 & 89 & 149 & 244 & 106 & 85 & 40 & 93 & 144 & 43 & 65 & 124 & 163\\
\midrule
\multirow{2}{*}{10} & Mean & 2051 & 327 & 1542 & 998 & 1485 & 845 & 993 & 1797 & 848 & 674 & 1145 & 2902 & 753 & 1293 & 756 & 836 & 1534 & 873 & 567 & 999 & 867\\
\cmidrule{2-23}
 & SD & 101 & 86 & 107 & 213 & 295 & 195 & 185 & 301 & 149 & 91 & 214 & 493 & 152 & 96 & 148 & 99 & 104 & 66 & 119 & 160 & 172\\
\hline
\end{tabular}}
\end{table}

\begin{figure}[htbp]
    \centering
    \subfigure[Computational time for selected methods ($\EzGP$, $\EEzGP$, $\ctGP$, $\Gau_\ord^\multi$, $\Linear_\ord^\multi$ and $\Linear_2^\multi$).]{
        \includegraphics[width=\linewidth]{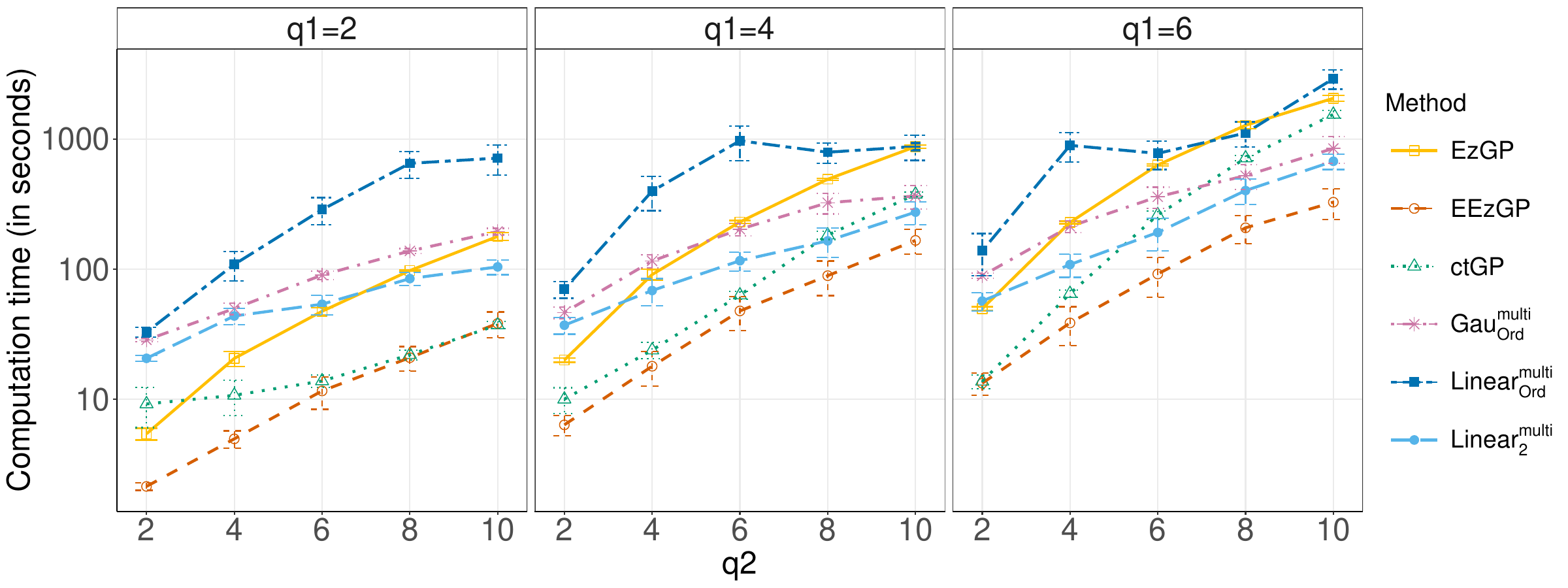}
        \label{fig:time}
    }
    \hfill
    \subfigure[RRMSE for selected methods, and additionally including the BIC-based model averaging method ($\BIC_\MAvr$).]{
        \includegraphics[width=\linewidth]{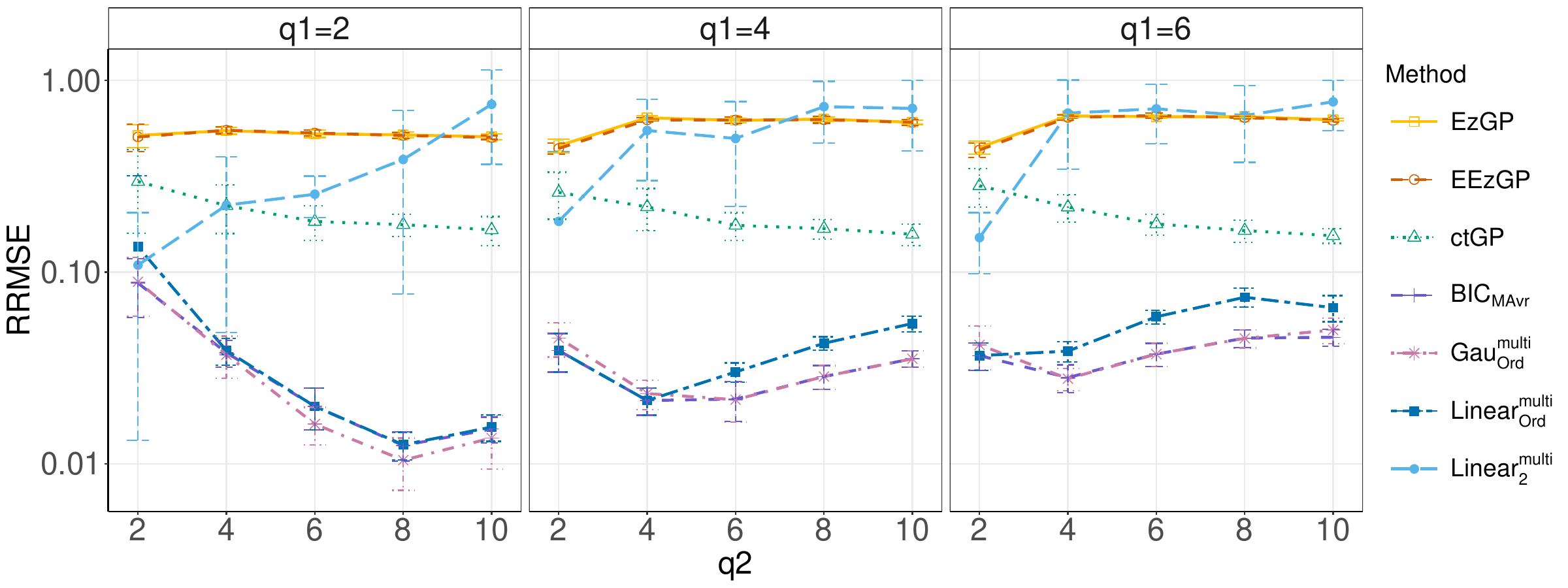}
        \label{fig:RRMSE_borehole}
    }
    \caption{
Comparison of computational time (in seconds) and prediction accuracy for the borehole example across varying discretization levels, using the same experimental setup as in Figure~\ref{fig:S5_2}.
}
    \label{fig:RRMSE_time_borehole}
\end{figure}

\section{Technical lemmas and their proofs}\label{sec:para_equ}

In this section, we demonstrate the parametric equivalence under the conditions $I = 0$, $J = 1$, and when the two kernels $K_{\Zf}(\cdot, \cdot)$ and $K_{\Wf}(\cdot, \cdot)$ are identical. Under these assumptions, the equivalence of parameterizations depends solely on the latent parameterizations $\{\Zf_\Vf\}$ and $\{\Wf_\Vf\}$.

For the linear kernel, Lemma~\ref{prop:equi_linear} states that the equivalence of the latent parameterizations is determined up to an orthogonal transformation.  
\begin{lemma}[Equivalence under Linear Kernel] \label{prop:equi_linear}
Two latent parameterizations $\left\{\Wf_\Vf\right\}$ and $\left\{\Zf_\Vf\right\}$ are equivalent under the linear kernel if and only if  $\big(\zf_{1}^{(1)},\zf_{2}^{(1)},\cdots,\zf_{a_1}^{(1)}\big)$ $=$ $\Qf\big(\wf_{1}^{(1)},\wf_{2}^{(1)},\cdots,\wf_{a_1}^{(1)}\big)$ for some orthogonal matrix $\Qf\in\Rb^{l_1\times l_1}$ such that $\Qf^\top \Qf=\If_{l_1\times l_1}$. 
\end{lemma}

\begin{proof}[Proof of Lemma~\ref{prop:equi_linear}]
    Denote $\Zf^{(1)} = \big(\zf_{1}^{(1)}, \zf_{2}^{(1)}, \cdots, \zf_{a_1}^{(1)}\big)$ and $\Wf^{(1)} = \big(\wf_{1}^{(1)}, \wf_{2}^{(1)}, \cdots, \wf_{a_1}^{(1)}\big)$. 
If there exists an orthogonal matrix $\Qf \in \Rb^{l_1 \times l_1}$ such that $\Zf^{(1)} = \Qf \Wf^{(1)}$, then we have
$$
\left(\Wf^{(1)}\right)^\top \Wf^{(1)}
=
\left(\Wf^{(1)}\right)^\top \Qf^\top \Qf \Wf^{(1)}
=
\left(\Zf^{(1)}\right)^\top \Zf^{(1)},
$$
which implies that $\wf_{v}^\top \wf_{v^\prime} = \zf_{v}^\top \zf_{v^\prime}$ for any $v, v^\prime \in \{1, \cdots, a_1\}$.

For the reverse direction, we perform the QR decomposition as follows
$$
\Zf^{(1)} = \Qf_{\Zf} \Rf_{\Zf}
\quad\text{and}\quad 
\Wf^{(1)} = \Qf_{\Wf} \Rf_{\Wf}
$$
where $\Qf_{\Zf}, \Qf_{\Wf} \in \Rb^{l_1 \times l_1}$ are orthogonal matrices, $\Rf_{\Zf}$ and $\Rf_{\Wf}$ are upper triangular matrices with positive diagonal elements. 
Since $\left\{\Wf_\Vf\right\}$ and $\left\{\Zf_\Vf\right\}$ are equivalent under the linear kernel, we have:
$$
\Rf_{\Wf}^\top \Rf_{\Wf} =
\left(\Wf^{(1)}\right)^\top \Wf^{(1)} =
\left(\Zf^{(1)}\right)^\top \Zf^{(1)} =
\Rf_{\Zf}^\top \Rf_{\Zf}.
$$
According to the Cholesky decomposition and its uniqueness, we deduce that $\Rf_{\Wf} = \Rf_{\Zf}$. The orthogonal matrix can be defined as $\Qf = \Qf_\Zf \Qf_\Wf^\top$.
\end{proof}

For an isotropic kernel, the kernel value is uniquely determined by the distance between two latent vectors. Consequently, any isometric transformation of $\Rb^{l_1}$ results in an equivalent latent parameterization, which is formalized in Lemma~\ref{prop:equi_station}.

\begin{lemma}[Equivalence under Isotropic Kernel] \label{prop:equi_station}
Two latent parameterizations $\left\{\Wf_\Vf\right\}$ and $\left\{\Zf_\Vf\right\}$ are equivalent under an isotropic kernel if and only if 
$$
\big(\zf_{1}^{(1)}, \zf_{2}^{(1)}, \cdots, \zf_{a_1}^{(1)}\big) = \Qf \big(\wf_{1}^{(1)} - \wf, \wf_{2}^{(1)} - \wf, \cdots, \wf_{a_1}^{(1)} - \wf\big),
$$
for some vector $\wf \in \Rb^{l_1}$ and some orthogonal matrix $\Qf \in \Rb^{l_1 \times l_1}$ such that $\Qf^\top \Qf = \If_{l_1 \times l_1}$.
\end{lemma}

\begin{proof}[Proof of Lemma~\ref{prop:equi_station}]
Let $\Zf^{(1)} = \big(\zf_{1}^{(1)}, \zf_{2}^{(1)}, \cdots, \zf_{a_1}^{(1)}\big)$ and $\Wf^{(1)} = \big(\wf_{1}^{(1)}, \wf_{2}^{(1)}, \cdots, \wf_{a_1}^{(1)}\big)$. Suppose there exists an orthogonal matrix $\Qf \in \Rb^{l_1 \times l_1}$ such that  
\begin{equation}\label{equ:iso_target}
\big(\zf_{1}^{(1)}, \zf_{2}^{(1)}, \cdots, \zf_{a_1}^{(1)}\big) = \Qf \big(\wf_{1}^{(1)} - \wf, \wf_{2}^{(1)} - \wf, \cdots, \wf_{a_1}^{(1)} - \wf\big),
\end{equation}
where $\wf \in \Rb^{l_1}$. In this case, parameter equivalence is guaranteed by
$$
K_\Zf\left(\Zf_\Vf, \Zf_{\Vf^\prime}\right) =  
\Kcal_1\left(\left\|\zf^{(1)}_{v_1} - \zf^{(1)}_{v_1^\prime}\right\|_2\right) =  
\Kcal_1\left(\left\|\Qf\left(\wf^{(1)}_{v_1} - \wf^{(1)}_{v_1^\prime}\right)\right\|_2\right) =  
K_{\Wf}\left(\Wf_\Vf, \Wf_{\Vf^\prime}\right),
$$
where $\Kcal_1(\cdot)$ is the decreasing kernel generating function. 

On the other hand, assume the above formula holds for any $v_1, v_1^\prime \in \{1, \cdots, a_1\}$. Then, we have
$$
\left\|\zf^{(1)}_{v_1} - \zf^{(1)}_{v_1^\prime}\right\|^2_2 = \left\|\wf^{(1)}_{v_1} - \wf^{(1)}_{v_1^\prime}\right\|^2_2,
$$
since $\Kcal_1(\cdot)$ is a decreasing function. Note that Euclidean distance can be expressed using the inner product. By Lemma~\ref{prop:equi_linear}, we know there exists an orthogonal matrix $\Qf \in \Rb^{l_1 \times l_1}$ such that
$$
\zf^{(1)}_{v_1} - \zf^{(1)}_{v_1^\prime} = \Qf\left(\wf^{(1)}_{v_1} - \wf^{(1)}_{v_1^\prime}\right),
$$
for any $v_1, v_1^\prime \in \{1, \cdots, a_1\}$. Thus, defining $\wf = \Qf^\top \zf^{(1)}_{v_1} - \wf^{(1)}_{v_1}$ results in a constant vector $\wf$, independent of $v_1$, based on the formula above. Finally, the defined $\Qf$ and $\wf$ ensure that \eqref{equ:iso_target} holds, which justifies the reverse direction as well. This completes the proof.
\end{proof}

\bibliographystyle{asa}
\setlength{\bibsep}{1.5pt}
\bibliography{mybib}